    \definecolor{urlcolor}{rgb}{0,.145,.698}
    \definecolor{linkcolor}{rgb}{.71,0.21,0.01}
    \definecolor{citecolor}{rgb}{.12,.54,.11}
    \definecolor{ansi-black}{HTML}{3E424D}
    \definecolor{ansi-black-intense}{HTML}{282C36}
    \definecolor{ansi-red}{HTML}{E75C58}
    \definecolor{ansi-red-intense}{HTML}{B22B31}
    \definecolor{ansi-green}{HTML}{00A250}
    \definecolor{ansi-green-intense}{HTML}{007427}
    \definecolor{ansi-yellow}{HTML}{DDB62B}
    \definecolor{ansi-yellow-intense}{HTML}{B27D12}
    \definecolor{ansi-blue}{HTML}{208FFB}
    \definecolor{ansi-blue-intense}{HTML}{0065CA}
    \definecolor{ansi-magenta}{HTML}{D160C4}
    \definecolor{ansi-magenta-intense}{HTML}{A03196}
    \definecolor{ansi-cyan}{HTML}{60C6C8}
    \definecolor{ansi-cyan-intense}{HTML}{258F8F}
    \definecolor{ansi-white}{HTML}{C5C1B4}
    \definecolor{ansi-white-intense}{HTML}{A1A6B2}
    \definecolor{incolor}{rgb}{0.0, 0.0, 0.5}
    \definecolor{outcolor}{rgb}{0.545, 0.0, 0.0}
\newtheorem{theorem}{Theorem}
\newtheorem*{theorem*}{Theorem}
\newtheorem*{claim*}{Claim}
\newtheorem*{proposition*}{Proposition}
\newtheorem{lemma}[theorem]{Lemma}
\newtheorem*{lemma*}{Lemma}
\newtheorem{corollary}[theorem]{Corollary}
\newtheorem*{conjecture*}{Conjecture}
\newtheorem{fact}[theorem]{Fact}
\newtheorem*{fact*}{Fact}
\newtheorem*{hypothesis*}{Hypothesis}
\theoremstyle{definition}
\newtheorem{definition}[theorem]{Definition}
\newtheorem{construction}[theorem]{Construction}
\newtheorem{remark}[theorem]{Remark}
\newcommand{\savehyperref}[2]{\texorpdfstring{\hyperref[#1]{#2}}{#2}}
\newcommand{\Sref}[1]{\hyperref[#1]{\S\ref*{#1}}}
\renewcommand{\leq}{\leqslant}
\renewcommand{\geq}{\geqslant}
\newcommand{\mper}{\,.}
\newcommand{\paren}[1]{\left(#1 \right )}
\newcommand{\Brac}[1]{\left[#1\right]}
\newcommand{\set}[1]{\left\{#1\right\}}
\newcommand{\abs}[1]{\left\lvert#1\right\rvert}
\newcommand{\norm}[1]{\left\lVert#1\right\rVert}
\newcommand{\defeq}{\stackrel{\textup{def}}{=}}
\newcommand{\inprod}[1]{\left\langle #1\right\rangle}
\newcommand{\R}{\mathbb R}
\newcommand{\Esymb}{\mathbb{E}}
\newcommand{\Psymb}{\mathbb{P}}
\DeclareMathOperator*{\E}{\Esymb}
\DeclareMathOperator*{\ProbOp}{\Psymb}
\newcommand{\pr}[2]{\ProbOp_{#1}\Brac{#2}}
\newcommand{\e}{\epsilon}
\definecolor{DSgray}{cmyk}{0,0,0,0.7}
\let\e\varepsilon
\newcommand{\vol}{\mathsf{vol}}
\newcommand{\bigO}{\mathcal{O}}
\newcommand{\rank}{{\sf rank}}
\newcommand\numberthis{\addtocounter{equation}{1}\tag{\theequation}} 
\author{
	Anand Louis\footnote{Indian Institute of Science, Bangalore, India.}\\ \href{mailto:anandl@iisc.ac.in}{anandl@iisc.ac.in}
	\and 
	Rameesh Paul\footnotemark[1]\\ \href{mailto:rameeshpaul@iisc.ac.in}{rameeshpaul@iisc.ac.in} 
	\and 
	Arka Ray\footnotemark[1]\\ \href{mailto:arkaray@iisc.ac.in}{arkaray@iisc.ac.in}
}
\title{Sparse Cuts in Hypergraphs from Random Walks on Simplicial Complexes}
\date{}
\begin{document}     
\maketitle

\begin{abstract}
There are a lot of recent works on generalizing the spectral theory of graphs and graph partitioning to $k$-uniform hypergraphs.
There have been two broad directions toward this goal.
One generalizes the notion of graph conductance to hypergraph conductance \cite{lm16, cltz18}.
In the second approach one can view a hypergraph as a simplicial complex and study its various topological properties \cite{LM06, MW09, DKW16, PR17} and spectral properties \cite{KM17, DK17, KO18a, KO18b, Opp18}.

In this work, we attempt to bridge these two directions of study by relating the spectrum of {\em up-down walks} and {\em swap-walks} on the simplicial complex to hypergraph expansion. In surprising contrast to random-walks on graphs, we show that the spectral gap of swap-walks and up-down walks between level $m$ and $l$ with $1 < m \leq l$ can not be used to infer any bounds on hypergraph conductance.
Moreover, we show that the spectral gap of swap-walks between $X(1)$ and $X(k-1)$ can not be used to infer any bounds on hypergraph conductance, whereas we give a Cheeger-like inequality relating the spectral of walks between level $1$ and $l$ for any $l \leq k$ to hypergraph expansion.
This is a surprising difference between swaps-walks and up-down walks!

Finally, we also give a construction to show that the well-studied notion of {\em link expansion} in simplicial complexes can not be used to bound hypergraph expansion in a Cheeger like manner.
\end{abstract}
\newpage

\section{Introduction}
\label{sec:introduction}
In recent years, there have been two broad directions of generalizations of graph partitioning and the spectral theory of graphs to hypergraphs.
One direction attempts to generalize the notion of conductance in graphs to hypergraphs \cite{lm16,cltz18}.
The graph expansion (also referred as graph conductance) is defined as

\begin{align*}
    &\phi_G \defeq \min_{\substack{S \subseteq V\\ \mathsf{
    vol}_G(S) \leq \frac{\mathsf{vol}_G(V)}{2}}} \phi_G(S),\text{ where } \phi(S) \defeq \frac{w({\partial_G(S)})}{\mathsf{vol}_G(S)}
\end{align*}
with $\mathsf{vol}_G(S)$ being the sum of degrees of the vertices in $S$ and $\partial_G(S)$ being the edges crossing the boundary of the set $S$, hence $w(\partial_G(S))$ is the sum of weights of the edges on the boundary.
Analogously, the hypergraph expansion/conductance is defined as
\begin{align*}
    &\phi_H \defeq \min_{\substack{S \subseteq V\\ \mathsf{
    vol}_H(S) \leq \frac{\mathsf{vol}_H(V)}{2}}} \phi_H(S),\text{ where } \phi_H(S) \defeq \frac{\Pi\paren{\partial_H(S)}}{\mathsf{vol}_H(S)}
\end{align*}
with $\mathsf{vol}_H(S)$ being the sum of degrees of the vertices in $S$, and $\partial_H(S)$ being the edges crossing the boundary of the set $S$, and $\Pi(\partial_H(S))$ is the sum of weight of edges on the boundary.

Another direction views a hypergraph as a {\em simplicial complex} and studies its various topological properties \cite{LM06,MW09,DKW16,PR17} and spectral properties \cite{KM17,DK17,KO18a,KO18b,Opp18}.
The work \cite{DK17} introduced a generalization of random-walks on graphs to random-walks over the faces of the simplicial complex; this random-walk has found numerous applications in a myriad of other problems \cite{DK17,DD19,ALGV19,ALG20,ALG22}, etc., to state a few.

There has been a lot of work on understanding the relationship between random-walks on graphs (including the spectra of the random-walk operator) and graph partitioning.
The celebrated Cheeger's inequality gives one such relation between the graph expansion and the second eigenvalue of the random-walk matrix $\lambda_2$ as,  
\begin{align*}
    \frac{1-\lambda_2}{2} \leq \phi_G \leq \sqrt{2(1-\lambda_2)}.
\end{align*}

In this work, we aim to bridge the gap between these two directions by studying the relationship between hypergraph expansion and random-walks on the corresponding simplicial complex.

In a seminal work, \cite{ABS10} showed that if a graph has a \say{small} threshold-rank\footnote{the number of ``large'' eigenvalues of the adjacency matrix, see \prettyref{def:threshold_rank} for formal definition.}, then they can compute a near optimal assignment to unique games in time exponential in the threshold rank.
The works \cite{BRS11,GS11} gave an SoS hierarchy based algorithm generalizing this result to any 2-CSP.
The work \cite{AJT19} introduces the notion of {\em swap-walks} and use that to define a notion of threshold-rank for simplicial complexes.
Using their notion of threshold-rank they generalized the results of \cite{BRS11,GS11} to $k$-CSPs.
Further, \cite{ABS10} showed that large threshold-rank graphs must have a small non-expanding set (they also gave a polynomial time algorithm to compute such a set). A natural open question from the work of \cite{AJT19,JST21} is whether hypergraphs with large threshold-rank (the hypergraph analogue is called non-splittability)
have a small non-expanding set?
Our first result answers this question negatively.
\begin{theorem}[Informal Version of \prettyref{thm:non_splittable} and \prettyref{cor:non_splittable}]
\label{thm:informal_non_splittable}
For any $n\geq 6,k\geq 3$, there exists a $k$-uniform hypergraph $H$ with at least $n$ vertices such that $\phi_H\geq \frac{1}{k}$ but for any $m, l$, if either $m,l\geq 2$ or $m=k-l$, the swap-walk from $X(m)$ to $X(l)$ has threshold rank at least $\Omega_k(n)$ (for any $\tau \in [-1,1]$ as choice of threshold). 
Moreover, $H$ is not $(\tau,\Omega_k(n))$-splittable for any $\tau \in [-1,1]$.

\end{theorem}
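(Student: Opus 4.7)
The plan is to exhibit a concrete sunflower hypergraph $H = H_{t,k}$ that has large expansion yet whose swap-walks at nearly every level pair decompose block-diagonally into $\Omega_k(n)$ pieces. Take $V = \{v^*\}\cup\{u_{i,j}: i\in[t], j\in[k-1]\}$ with hyperedges $e_i = \{v^*, u_{i,1},\ldots,u_{i,k-1}\}$ for $i\in[t]$, and choose $t = \lceil (n-1)/(k-1)\rceil$ so that $|V|\geq n$ and $t = \Omega_k(n)$. The structural fact driving everything is that two distinct petals share only $v^*$, so every face $\sigma$ with $|\sigma|\geq 2$ is contained in a unique petal $e_i$.

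For the expansion bound $\phi_H \geq 1/k$, I would do a two-case analysis on whether $v^*\in S$. If $v^*\notin S$, every petal meeting $S$ contributes at most $k-1$ to $\vol_H(S)$ and $1$ to the boundary, giving $\phi_H(S)\geq 1/(k-1)\geq 1/k$. If $v^*\in S$, let $f$ be the number of petals fully contained in $S$ and $s = \sum_i |T_i|$ with $T_i = \{j: u_{i,j}\in S\}$; the balance constraint $\vol_H(S)\leq tk/2$ forces $s\leq t(k-2)/2$, and since each full petal contributes $k-1$ to $s$ this gives $f \leq t(k-2)/(2(k-1)) < t/2$. A short algebraic manipulation then yields $\phi_H(S) = (t-f)/(t+s)\geq 1/k$.

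For the swap-walks with $m,l\geq 2$, the unique-petal observation forces $N_{m,l}$ to move only within one petal: for $\sigma\subseteq e_i$ the only top face $\tau\supseteq\sigma$ is $e_i$, so $\sigma'\subseteq e_i\setminus\sigma$. Hence $N_{m,l}$ is block-diagonal with $t$ stochastic blocks, each contributing an eigenvector (its restricted stationary) at eigenvalue $1$, giving threshold rank $\geq t = \Omega_k(n)$ for every $\tau\in[-1,1]$. For the remaining case $m = k-l$ with $\min(m,l) = 1$, say $m=1, l=k-1$: each $u_{i,j}$ maps deterministically to $e_i\setminus\{u_{i,j}\}$, while $v^*$ maps uniformly to one of the $t$ faces $e_i\setminus\{v^*\}$; a direct computation then gives $N_{1,k-1}\,N_{k-1,1} = I$ on $X(1)$, so every singular value of $N_{1,k-1}$ equals $1$ and the threshold rank equals $|X(1)| = n$. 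The non-splittability statement follows from the correspondence of \cite{AJT19,JST21} between swap-walk threshold rank and splittability, applied with these bounds.

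The most delicate step is the $v^*\in S$ case of the expansion bound: we must turn the volume constraint into a tight enough bound on the fully-captured petals $f$, using that each such petal contributes $k-1$ to $s$ against a slack budget of only $t(k-2)/2$. Everything else is driven mechanically by the rigidity of the sunflower and the fact that every face of size $\geq 2$ lives in a unique petal.
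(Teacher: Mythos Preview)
Your construction and overall strategy match the paper's exactly: the same sunflower hypergraph with core $v^*$ and $t$ petals, the same two-case expansion analysis on whether $v^*\in S$ (the paper's version of the $v^*\in S$ case is marginally more direct, bounding the number $f$ of fully contained petals by $t/2$ and then using $\phi_H(S)>(t/2)/(tk/2)=1/k$, which is exactly where your ``short algebraic manipulation'' lands), and the same disconnection argument showing the swap-graph $G_{m,l}$ breaks into $t$ components when $m,l\geq 2$.

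Two small points. First, for the case $\{m,l\}=\{1,k-1\}$ your computation $\mathsf S_{1,k-1}\mathsf S_{k-1,1}=I$ on $\R^{X(1)}$ is correct and in fact slightly stronger than the paper's argument, which only exhibits $r$ isolated edges in $G_{1,k-1}$; but be careful with notation, since you write $N_{m,l}$ for the swap walk whereas the paper reserves $\mathsf N_{m,l}$ for the up-down walk and writes $\mathsf S_{m,l}$ for swap walks. Second, for non-splittability do not merely cite a ``correspondence'' from \cite{AJT19,JST21}: the actual argument (as in the paper) is the one-line observation that any $k$-splitting tree $\mathcal T$ has root children labeled $l$ and $k-l$ for some $l$, so $G_{l,k-l}\in\mathsf{Swap}(\mathcal T,X)$ and your threshold-rank lower bound on $G_{l,k-l}$ gives $\rank_{\geq\tau}(\mathsf{Swap}(\mathcal T,X))\geq t$ directly from \prettyref{def:splittability}.
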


For a splittable hypergraph, there is some $l$, such that the swap-walk graph between $X(l)$ and $X(k-l)$ has low threshold-rank. Then it follows from 
\prettyref{thm:informal_non_splittable} that there are non-splittable expanding hypergraphs (see \prettyref{cor:non_splittable} for the precise statement).

\cite{AJT19,DD19} show that for a high dimensional expander (HDX)\footnote{For formal definition see \prettyref{def:local_hdx}.} the swap-walks indeed have a large spectral gap\footnote{For a linear operator $\mathsf A:V\to W$ where $V\ne W$ the spectral gap refers to $\sigma_1(A)-\sigma_2(A)$, while for a linear operator $B:V\to V$, it refers to $\lambda_1(A)-\lambda_2(A)$.}.
However, we are interested in the case when the hypergraph instance is not an HDX.
One recalls that for a non-expanding graph, Cheeger's inequality and Fiedler's algorithm allow us to compute a combinatorial sparse cut in the graph.
Similarly, we ask the question whether one can compute a sparse cut in the input hypergraph in this setting?

Unfortunately, in the light of \prettyref{thm:informal_non_splittable},
computing a sparse cut in the hypergraph when swap-walks (in the setting studied by \cite{AJT19,JST21}; see \prettyref{thm:non_splittable} for the precise statement) have a small spectral gap is generally not possible. 
This is in surprising contrast to the case of graphs where the swap-walk reduces to the usual random-walk, and the second largest eigenvalue of the random-walk matrix is related to graph expansion via the Cheeger's inequality.

Next, we investigate whether the spectral gap of the up-down walk introduced by \cite{DK17} can be related to hypergraph expansion.
More formally, we investigate whether the spectral gap of the up-down walk between levels $X(m)$ and $X(l)$ $(l > m)$ be related to the hypergraph expansion in Cheeger like manner. Here, the answer depends on $m$ and $l$. We first show that if $m \geq 2$, then no such relation is possible.

\begin{theorem}[Informal Version of \prettyref{thm:general_updown}]
For any positive integers $n,k$ with $n\geq 6,k\geq 3$, there exists an $k$-uniform hypergraph $H$ on at least $n$ vertices such that $\phi_H\geq \frac{1}{k}$
and for all positive integers $2\leq m< l\leq k$ the threshold rank of the up-down walk matrix between levels $X(m)$ and $X(l)$ is at least $\Omega_k(n)$ (for any $\tau \in [-1,1]$ as choice of threshold).
\end{theorem}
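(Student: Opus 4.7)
The plan is to construct a sunflower hypergraph $H$ whose hyperedges pairwise meet only at a common vertex $v^\ast$, so that every face at level $m\geq 2$ lies in a unique hyperedge. Explicitly, fix pairwise disjoint $(k-1)$-element sets $S_1, \ldots, S_N$ none containing $v^\ast$, and take hyperedges $e_i \defeq \{v^\ast\} \cup S_i$ for $i \in [N]$, with $N \defeq \lceil (n-1)/(k-1) \rceil$ so that $|V(H)| = 1 + N(k-1) \geq n$. Here $\deg(v^\ast) = N$ and $\deg(v) = 1$ for every other vertex, so $\vol_H(V) = kN$.

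For the expansion bound, I would case-split any $S$ with $\vol_H(S) \leq kN/2$ on whether $v^\ast \in S$. If $v^\ast \notin S$, then $\vol_H(S) = |S|$ and each hyperedge contains at most $k-1$ non-hub vertices, so the elements of $S$ touch at least $|S|/(k-1)$ distinct hyperedges, each of which crosses the cut; hence $|\partial_H(S)| \geq |S|/(k-1)$. If $v^\ast \in S$, then letting $I \defeq \{i : S_i \subseteq S\}$, the volume bound becomes $N + |I|(k-1) \leq kN/2$, i.e.\ $|I| \leq (k-2)N/(2(k-1))$, and the boundary hyperedges are exactly those $e_i$ with $i \notin I$, so $|\partial_H(S)| = N - |I| \geq kN/(2(k-1))$. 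In both cases $\phi_H(S) \geq 1/(k-1) \geq 1/k$.

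For the walks, I would observe that any face $\sigma \in X(m)$ with $m \geq 2$ has size $\geq 2$ and hence contains at least one non-hub vertex, which lies in a unique $S_i$; thus $\sigma \subseteq e_i$ for exactly one $i$. This yields the disjoint decomposition $X(m) = \bigsqcup_{i=1}^N \binom{e_i}{m+1}$, and the same observation at level $l \geq m+1 \geq 3$ shows that every intermediate face in $X(l)$ encountered by the up-down walk also lies in a unique $e_i$. The walk between levels $m$ and $l$ therefore sends each block $\binom{e_i}{m+1}$ into itself, so its matrix is block-diagonal with $N$ stochastic blocks (each being the up-down walk on the complete $k$-simplex $e_i$), and each block contributes eigenvalue $1$. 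Hence eigenvalue $1$ has multiplicity at least $N = \Omega_k(n)$, and the $\tau$-threshold rank is at least $N$ for every $\tau \leq 1$.

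The one step that needs actual care is the expansion calculation, since the high-degree hub $v^\ast$ interacts with many cuts and could a priori enable an unbalanced thin cut; the clean two-case argument above rules this out by direct computation. The block decomposition of the walk, by contrast, follows immediately from the defining property of the sunflower that no face of size $\geq 2$ is shared between distinct hyperedges, and this is precisely what fails at level $m = 1$ (where $v^\ast$ itself is shared by all $e_i$), which is why the theorem is restricted to $m \geq 2$.
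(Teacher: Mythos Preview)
Your construction and argument are correct and essentially match the paper's: the paper's \prettyref{cons:construction_2} is exactly this sunflower with a single shared vertex, and \prettyref{lem:expanding_2} proves $\phi_H\geq 1/k$ by the same two-case split on whether the hub lies in $S$ (your bound $1/(k-1)$ is in fact slightly sharper than what the paper records). One notational slip: in the paper's convention $X(m)$ consists of $m$-element sets, so the decomposition should read $X(m)=\bigsqcup_i\binom{e_i}{m}$, not $\binom{e_i}{m+1}$.

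The only genuine technical difference is in the walk analysis. You argue directly that for every $2\le m<l\le k$ the up-down walk on $X(m)$ through $X(l)$ is block-diagonal over the petals, so eigenvalue $1$ occurs with multiplicity $\ge N$. The paper instead shows once that the bipartite graph $B_{2,k}$ has $r$ components (hence $\sigma_r(\mathsf D_{2,k})=1$), and then propagates this to all intermediate $(m,l)$ via the factorization $\mathsf D_{2,k}=\mathsf D_{2,m}\mathsf D_{m,l}\mathsf D_{l,k}$ together with the product inequality $\sigma_i(\mathsf{AB})\le\sigma_1(\mathsf A)\sigma_i(\mathsf B)$ (\prettyref{fact:svalue_ineq}). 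Your route is more elementary and self-contained; the paper's route reuses machinery already set up for \prettyref{thm:sparse_cut} and avoids re-examining each pair $(m,l)$ separately.
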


Contrasting this, we show that if $m = 1$, then such a relationship is indeed possible.

\begin{theorem}[Informal Version of \prettyref{thm:sparse_cut}]
\label{thm:informal_sparse_cut}
Given a hypergraph, where the second largest eigenvalue of the up-down walk matrix (of simplicial complex induced by the hypergraph) between levels $X(1)$ and $X(l)$, for some $l \in [k]$ is $1-\varepsilon$ we have $\frac{\varepsilon}{k}\leq\phi_H\leq 4\sqrt{\varepsilon}$. 
Furthermore, there exists a polynomial time algorithm which when given such a hypergraph, outputs a set $S$ such that its expansion in the hypergraph $\phi_H(S)\leq 4\sqrt{\varepsilon}$. 
\end{theorem}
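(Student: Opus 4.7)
The plan is to recognize the up-down walk $P$ between $X(1)$ and $X(l)$ as a reversible Markov chain on the vertex set of $H$, prove by a direct combinatorial calculation that $\phi_P(S)$ and $\phi_H(S)$ are within a factor of $\Theta(k)$ of each other for \emph{every} $S\subseteq V$, and then apply the standard Cheeger inequality for reversible chains in both directions.

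First, I would write $P$ down explicitly: from a vertex $v$ one samples an $l$-face $F\ni v$ with probability $w(F)/w(\{v\})$, where $w(F)=\sum_{e\supseteq F,\,e\in X(k)} w(e)$ and $w(\{v\})=\sum_{F\ni v} w(F)$, and then a uniform $u\in F$. This chain is reversible with stationary measure $\pi(v)\propto w(\{v\})$ and symmetric flow $\pi(v)P(v,u)\propto \sum_{F\ni u,v,\,|F|=l} w(F)$. Consequently the $P$-boundary of a set $S$ is proportional to $\tfrac{1}{l}\sum_F w(F)\,|F\cap S|\,|F\setminus S|$ and its $P$-volume to $\sum_F w(F)\,|F\cap S|$.

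The key step is a combinatorial translation from $l$-faces to the $k$-edges of $H$. Expanding $w(F)=\sum_{e\supseteq F}w(e)$ and using, for each hyperedge $e$, the two double-counting identities
\[
\sum_{F\subseteq e,\,|F|=l} |F\cap S|\,|F\setminus S| \;=\; \binom{k-2}{l-2}\,|e\cap S|\,|e\setminus S|,
\qquad
\sum_{F\subseteq e,\,|F|=l} |F\cap S| \;=\; \binom{k-1}{l-1}\,|e\cap S|,
\]
the first of which counts ordered triples $(u,v,F)$ with $u\in e\cap S$, $v\in e\setminus S$, and $F\subseteq e$ an $l$-subset containing $\{u,v\}$, collapses the walk's conductance to
\[
\phi_P(S) \;=\; \frac{l-1}{l(k-1)}\cdot \frac{\sum_e w(e)\,|e\cap S|\,|e\setminus S|}{\vol_H(S)}.
\]
Since every hyperedge crossing $S$ satisfies $k-1\leq |e\cap S|\,|e\setminus S|\leq k^2/4$, this yields the sandwich $\tfrac{l-1}{l}\,\phi_H(S)\leq \phi_P(S)\leq \tfrac{k}{2}\,\phi_H(S)$ valid for every $S$ and every $l\geq 2$.

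With the sandwich in hand, the easy direction of Cheeger for reversible chains gives $\phi_P\geq \varepsilon/2$; evaluated at the hypergraph-optimal $S^\ast$ one obtains $\phi_H=\phi_H(S^\ast)\geq \tfrac{2}{k}\,\phi_P(S^\ast)\geq \tfrac{2}{k}\phi_P\geq \varepsilon/k$. For the algorithmic side, running the sweep cut on the second eigenvector of $P$ produces in polynomial time a set $S$ with $\phi_P(S)\leq \sqrt{2\varepsilon}$, hence $\phi_H(S)\leq \tfrac{l}{l-1}\phi_P(S)\leq 2\sqrt{2\varepsilon}\leq 4\sqrt{\varepsilon}$. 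The main obstacle is the combinatorial identity translating sums over $l$-faces to sums over $k$-edges; once that is verified, the two-sided comparison of $\phi_P(S)$ and $\phi_H(S)$ is immediate and the rest is a black-box invocation of Cheeger's inequality for reversible walks.
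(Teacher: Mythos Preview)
Your argument is correct and is essentially the same approach as the paper's: relate the conductance of the up-down walk on $X(1)$ to the hypergraph conductance by a direct double-counting comparison of boundaries and volumes, and then invoke Cheeger's inequality in both directions. Your combinatorial identities are exactly the content of the paper's \prettyref{lem:uniform_count2}, \prettyref{lem:related_boundary_edges}, and \prettyref{lem:boundary_upper}, and your sandwich $\tfrac{l-1}{l}\phi_H(S)\leq \phi_P(S)\leq \tfrac{k}{2}\phi_H(S)$ matches \prettyref{lem:relate_expansion} and \prettyref{lem:lowerbound_expansion}.

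The one organizational difference is that for the upper bound the paper first reduces to $l=2$ via the singular-value inequality $\sigma_2(\mathsf D_{1,2})\geq \sigma_2(\mathsf D_{1,l})$ (\prettyref{cor:one_to_two}) and then runs Fiedler on the clique-expansion graph $B^{(2)}_{1,2}$, whereas you work directly at level $l$ throughout. Your route is slightly cleaner and more uniform (one sandwich handles both directions, and you even recover the sharper factor $\tfrac{l}{l-1}$ in place of $2$ on the algorithmic side), while the paper's detour through $l=2$ has the expository benefit that $B^{(2)}_{1,2}$ is exactly the familiar weighted clique expansion of $H$. Neither approach gains anything quantitatively over the other.
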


\prettyref{thm:informal_sparse_cut} and \prettyref{thm:informal_non_splittable} also show a surprising difference between up-down walks and swap-walks whereby we can compute sparse cut on the hypergraph using up-down walk from $X(1)$ to $X(l), l\in[k]$ using a Cheeger-like inequality, whereas it is not possible (in general) to compute a sparse cut by considering the spectrum of swap-walks from $X(1)$ to $X(k-1)$.

A yet another notion of spectral expansion called {\em link-expansion} of a simplicial complex has been studied recently in many works \cite{KM17,DK17,KO18a,KO18b,Opp18} having applications in \cite{DK17,DD19,ALGV19,ALG20,ALG22} (see \prettyref{def:local_hdx} for formal definition).
Our final result shows that there exists hypergraphs with large hypergraph expansion and arbitrarily small link expansion.
Therefore, hypergraph expansion can not be bounded by link-expansion in  Cheeger like manner.

\begin{theorem}[Informal Version of \prettyref{thm:non_hdx}]
Let $n,k$ be any positive integers such that $n\geq 3k$ and $k\geq 3$, there exists a $k$-uniform hypergraph $H$ on $n+k-2$ vertices such that the link-expansion of the induced simplicial complex $X$ is at most $\bigO(\frac{1}{n^2})$ and the expansion of $H$ is at least $\Omega_k(1)$.
\end{theorem}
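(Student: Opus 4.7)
The plan is to exhibit an explicit hypergraph obtained by ``lifting'' the complete graph $K_n$ through a fixed $(k-2)$-set. Let $V = V_1 \cup V_2$ with $V_1 = \{v_1,\ldots,v_n\}$ and $V_2 = \{u_1,\ldots,u_{k-2}\}$, and let $H$ have hyperedges $e_{ij} = \{v_i,v_j\} \cup V_2$ for all $1 \leq i < j \leq n$. This is a $k$-uniform hypergraph on $n+k-2$ vertices with $\binom{n}{2}$ hyperedges.

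For the link-expansion bound, the critical face is $\sigma = V_2$: every top hyperedge of $X$ contains $V_2$ together with a unique pair from $V_1$, so the $1$-skeleton of the link $X_{V_2}$ is precisely the complete graph $K_n$ on $V_1$ with unit weights, whose normalized second eigenvalue is $-1/(n-1)$. For any other face $\sigma'$ of dimension at most $k-3$ (either $S_U \subseteq V_2$, or $\{v_i\}\cup S_U$, or $\{v_i,v_j\}\cup S_U$), I would verify $\lambda_2(X_{\sigma'}) \leq 0$ by diagonalizing its weighted $1$-skeleton along the $V_1 / V_2$ partition: the eigenspaces split into a two-dimensional ``class'' subspace (constants on each part, yielding eigenvalues $1$ and a small negative number depending on $k$), a sum-zero subspace on $V_1 \cap X_{\sigma'}$ (yielding eigenvalue $0$ or $-\Theta(1/n)$), and a sum-zero subspace on $V_2 \setminus \sigma'$ (yielding a small negative constant). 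Since every link satisfies $\lambda_2 \leq 0$, the link-expansion of $X$ is at most $\bigO(1/n^2)$.

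For the hypergraph expansion bound, consider any cut $S = S_V \cup S_U$ with $a = |S_V|$, $b = |S_U|$, and $\vol_H(S) \leq \vol_H(V)/2 = k\binom{n}{2}/2$, splitting into three cases by $b$. If $b = 0$, the boundary equals $\binom{n}{2} - \binom{n-a}{2} = a(2n-a-1)/2$ hyperedges, giving $\phi_H(S) = (2n-a-1)/(2(n-1)) \geq 1/2$. If $1 \leq b \leq k-3$, every hyperedge has some $u$ inside $S$ and some outside, so every hyperedge is on the boundary; combined with $\vol_H(S) \leq k\binom{n}{2}/2$ this gives $\phi_H(S) \geq 2/k$. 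If $b = k-2$, the volume constraint forces $a$ to be tiny, and a direct computation on the smaller side of the cut (which is $V_2$ for $k \leq 4$ and $V_1 \setminus S_V$ for $k \geq 5$) yields $\phi_H \geq 1/2$. Taking the minimum across the three cases gives $\phi_H \geq \min(2/k, 1/2) = \Omega_k(1)$.

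The main obstacle I anticipate is the bookkeeping in the link analysis, since a generic link has three distinct types of weighted edges (inside $V_1 \cap X_{\sigma'}$, inside $V_2 \setminus \sigma'$, and across the partition) with weights that depend on $|\sigma'|$. The key simplification is the $V_1 / V_2$ product symmetry of every link, which decomposes the random-walk operator into an $O(1)$-dimensional class block plus two sum-zero eigenspaces, giving a closed-form diagonalization that is uniformly non-positive beyond $\lambda_1$ and avoids any $n$-dependent case split.
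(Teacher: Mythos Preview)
Your construction does not prove the theorem: it gives a complex with \emph{large} link-expansion, not small.  Recall from \prettyref{def:local_hdx} that the link-expansion of $X$ is $1-\gamma$ where $\gamma=\max_{s\in X(\leq k-2)}\sigma_2(G(X_s))$.  To make the link-expansion $O(1/n^2)$ you must exhibit a face whose skeleton is a \emph{bad} expander, i.e.\ has $\sigma_2$ within $O(1/n^2)$ of~$1$.  In your hypergraph the link of $V_2$ is the complete graph $K_n$, whose normalized adjacency spectrum is $\{1,-\tfrac{1}{n-1},\dots,-\tfrac{1}{n-1}\}$; hence $\sigma_2(G(X_{V_2}))=\tfrac{1}{n-1}$, which is tiny.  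Your own analysis of the remaining links (``$\lambda_2\leq 0$'' everywhere, with the sum-zero blocks contributing eigenvalues $0$ or $-\Theta(1/n)$ or $-\Theta(1/k)$) shows that every $\sigma_2$ is at most $O(1/k)$.  So $\gamma=O(1/k)$ and the link-expansion is $1-O(1/k)$, the opposite of what the statement asks for.  The sentence ``since every link satisfies $\lambda_2\leq 0$, the link-expansion of $X$ is at most $O(1/n^2)$'' has the inequality the wrong way round.

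The paper's construction is designed precisely around this point: it takes $\tau=\{n+1,\dots,n+k-2\}$ and forces the skeleton of $X_\tau$ to be the $n$-cycle $C_n$, whose second singular value is $\cos(2\pi/n)=1-\Theta(1/n^2)$, yielding link-expansion at most $1-\cos(2\pi/n)=O(1/n^2)$.  To make $H$ a hypergraph expander despite this bad link, the paper adds all of $\binom{[n]}{k}$ to the edge set and then proves $\phi_H\geq (3k)^{-k}$ by a three-case volume argument.  Your expansion computation is in the right spirit, but it is attached to the wrong hypergraph; if you want to salvage the approach you would need to replace the ``lift of $K_n$'' by a ``lift of a non-expanding graph'' (a cycle, a path, two cliques joined by an edge, etc.)\ and then redo the expansion bound, which is exactly the balancing act the paper performs.
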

To the best of our knowledge, this is the first construction to show this.

The work~\cite{lm16}~(see Remark 1.9) used an example similar in spirit to our constructions to show that another notion of expansion on simplicial complexes called co-boundary expansion is incomparable to the hypergraph expansion.
In particular, they constructed a class of $k$-uniform hypergraphs, each of which has co-boundary expansion (at dimension $k$) as 1 but contains hypergraphs that have essentially arbitrary hypergraph expansion.
Still, \cite{lm16} did not give an explicit example that shows a separation between hypergraph expansion and the notion of link-expansion or the spectral gap or the threshold rank of the random walks on a simplicial complex (i.e., up-down walk, swap-walk).

The $m$-dimensional co-boundary expansion may also seem related to the expansion of the up-walk from the level $m-1$ to $m$ as both of these consider the ratio of the number of $m$-dimensional faces containing a set of $m-1$-dimensional faces to the volume of the set with the only difference being how the volume is computed.
Yet, we do not know if such a relation actually exists.
One may similarly compare the expansion of the down-walk and the boundary expansion.
But still, Steenbergen, Klivian, and Mukherjee \cite{SKM14} and Gundert, and Wagner \cite{GW16} were able to show that for the $m$-dimensional {co-boundary expansion} no Cheeger-type inequality can be shown whereas such a relation is immediate from Cheeger's inequality in case of up-walk.
Nevertheless, \cite{SKM14} obtained (under some minor assumptions) an extension of Cheeger's inequality on the $m$-dimensional boundary expansion.
Finally, \cite{DDFH18} showed that the operator norm of the difference between up-down and down-up walks between two consecutive levels is within a $\bigO(k)$ factor of link-expansion while no such relation between up-Laplacian, down-Laplacian (see~\cite{SKM14} for definition) and link-expansion is known.

\subsection{Additional Related Works} 

The work \cite{cltz18} generalized the Laplacian of graph to hypergraphs by expressing the graph Laplacian in terms of a non-linear diffusion process.
They showed an analogue of Cheeger's inequality relating the expansion of the hypergraph to the second smallest eigenvalue of the Laplacian.
Yoshida~\cite{Yoshida19} introduced the notion of submodular transformations and extended the notions of degree, cut, expansion, and Laplacian to them.
They derived the Cheeger's inequality in this setting.
This generalizes the Cheeger's inequality on graphs and hypergraphs (as in \cite{cltz18}) while showing similar inequalities for entropy.

There are also several works exploring Cheeger-like inequalities for simplicial complexes.
Parzanchevski, Rosenthal, and Tessler \cite{PRT16} defined the notion of Cheeger constant $h(X)$ for a simplicial complex, a generalization of sparsity of a graph.
The quantity $h(X)$ is the minimum over all partitions of the vertex set $V$ into $k$ sets the fraction of $k$-dimensional faces present crossing the partition compared to the maximum possible $k$-dimensional faces crossing the partition. 
They also showed that for simplicial complex $X$ with a complete skeleton $h(X)\geq \lambda(X)$ where $\lambda(X)$ is the link-expansion of the simplicial complex.
Gundert and Svedl\'ak \cite{GS15} gave an extension of this result to any simplicial complex.
Very recently, Jost and Zhang~\cite{JZ23} extended the Cheeger-like inequality for {bipartiteness ratio}\footnote{The bipartiteness ratio of $G$ is defined as $\beta_G=\min_{S\subseteq V,L\sqcup R=S} \frac{2\partial(L)+2\partial(R)+\partial(S)}{\mathsf{vol}_G(S)}$.} on graphs due to Trevisan~\cite{Tre12} to a cohomology based definition of bipartiteness ratio for simplicial complexes.

In case of a HDX, Bafna, Hopkins, Kaufmann and Lovett~\cite{BHKL22} consider high-dimensional walks (a generalization of swap-walks and up-down walks) on a levels $i<k$.
They then relate the (non-)expansion of a link\footnote{\cite{BHKL22} uses a different (albeit related) notion of the link of a face $\sigma\in X(j)$. There the link of a face $\sigma$ is the set of level-$i$ faces containing $\sigma$.} of a level-$j$ face (with $j\leq i$) in the graph corresponding to the walk and level-$j$ {approximate eigenvalue} of the walk.
Here $\lambda_j$ is the level-$j$ approximate eigenvalue of a high-dimensional walk $\mathsf M$ if there is a function $f_j$ such that $\norm{\mathsf Mf_j-\lambda_jf_j}\leq O(\sqrt\gamma)\norm{f_j}$ and $f_i=\mathsf U^{i-j}g$ where $g\in \R^{X(j)}$.

\subsection{Preliminaries}
\label{sec:prelims}

\subsubsection{Linear Algebra}
We recall a few facts and definitions from linear algebra.
\begin{fact}[\cite{HK71}]
Let $V, W$ be two vector spaces with inner products $\inprod{\cdot,\cdot}_V$, $\inprod{\cdot,\cdot}_W$.
If $A:V\to W$ be a linear operator then there exists a unique linear operator $B:W\to V$ such that $\inprod{\mathsf Af, g}_W=\inprod{f,\mathsf B g}_V$.
If $v\in V$ then there exists a unique linear operator $C:V\to \R$ such that $C u=\inprod{v,u}_V$ for any $u\in V$.
\end{fact}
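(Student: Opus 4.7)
The plan is to verify each of the two claims of the fact by a direct construction in the finite-dimensional setting in which the rest of the paper operates; both are standard and the only question is how to set things up cleanly.

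For the existence of the adjoint $\mathsf B$, I would fix orthonormal bases $\{e_i\}$ of $V$ and $\{f_j\}$ of $W$, whose existence is guaranteed by Gram--Schmidt, and represent $\mathsf A$ by its matrix with entries $A_{ji} \defeq \inprod{\mathsf A e_i, f_j}_W$. The natural candidate is to define $\mathsf B f_j \defeq \sum_i A_{ji} e_i$ and extend by linearity. Then the adjoint identity $\inprod{\mathsf A e_i, f_j}_W = \inprod{e_i, \mathsf B f_j}_V$ holds on basis pairs by construction, and bilinearity extends it to arbitrary $f \in V$ and $g \in W$. For uniqueness, the plan is to suppose both $\mathsf B$ and $\mathsf B'$ satisfy the adjoint identity, observe that then $\inprod{f, (\mathsf B - \mathsf B')g}_V = 0$ for every $f \in V$ and $g \in W$, and specialize $f = (\mathsf B - \mathsf B')g$ to force $(\mathsf B - \mathsf B')g = 0$ for every $g$, hence $\mathsf B = \mathsf B'$.

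For the second claim, there is essentially nothing to construct: $C$ is defined by the prescription $C u \defeq \inprod{v, u}_V$ itself, linearity is inherited directly from linearity of the inner product in its second slot, and uniqueness is immediate because any operator satisfying the same pointwise identity agrees with $C$ everywhere. I do not expect a genuine obstacle in either part, since both are standard consequences of finite-dimensional linear algebra; the only point of minor care is tracking which slot of $\inprod{\cdot,\cdot}$ is conjugate-linear, which is a non-issue over $\mathbb{R}$ as used throughout the paper.
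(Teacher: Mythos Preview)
Your proposal is correct and follows the standard textbook argument. Note, however, that the paper does not supply its own proof of this fact: it is stated as a citation to \cite{HK71} and used as background linear algebra, so there is no in-paper proof to compare against. Your Gram--Schmidt/matrix construction for existence and the positive-definiteness argument for uniqueness are exactly the classical route one finds in such references.
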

\begin{definition}
\label{def:adj}
Given a linear operator $\mathsf A:V\to W$ between two vector spaces $V$ and $W$ with  inner products $\inprod{\cdot,\cdot}_V$ and $\inprod{\cdot,\cdot}_W$ defined on them, the adjoint of $\mathsf A$ is defined as the (unique) linear operator $\mathsf A^\dagger:W\to V$ such that
$\inprod{\mathsf Af, g}_W=\inprod{f,\mathsf A^\dagger g}_V$
for any $f\in V$ and $g\in W$.
Furthermore, given any $v\in V$ we define $v^\dagger:V\to \R$ as the linear operator which satisfies $v^\dagger u=\inprod{v,u}_V$ for any $u\in V$.
\end{definition}

It can be easily verified that most properties of the transpose of an operator also hold for the adjoint, e.g., $(\mathsf A^\dagger)^\dagger=\mathsf A$, $(\mathsf{AB})^\dagger= \mathsf B^\dagger \mathsf A^\dagger$, etc.

\begin{definition}
Given a linear operator $\mathsf A:V\to W$ between two inner product spaces $V$ and $W$ a singular value $\sigma$ is a non-negative real number such that there exists $v\in V$ and $w\in W$ which satisfy
$\mathsf Av=\sigma w$ and $w^\dagger \mathsf A=\sigma v^\dagger$.
The vectors $v$ and $w$ are called the right and left singular vectors, respectively, associated with the singular value $\sigma$.
We denote the $i$-th largest singular value of $\mathsf A$ by $\sigma_i(\mathsf A)$.
\end{definition}

\begin{fact}
\label{fact:svalue_eigenvalue}
Let $V$,$W$ be two inner product spaces and $\mathsf A:V\to W$ be a linear operator.
Then the eigenvalues $\lambda_i(\mathsf A^\dagger \mathsf A)$ are non-negative.
Furthermore, the singular values $\sigma_i(\mathsf A)=\sqrt{\lambda_i(\mathsf A^\dagger \mathsf A)}$.
\end{fact}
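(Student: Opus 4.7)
The plan is to reduce both assertions to two standard observations about $\mathsf A^\dagger \mathsf A$: that it is self-adjoint and positive semi-definite. First I would verify self-adjointness using the properties of the adjoint recalled right after \prettyref{def:adj}, namely $(\mathsf A^\dagger \mathsf A)^\dagger = \mathsf A^\dagger (\mathsf A^\dagger)^\dagger = \mathsf A^\dagger \mathsf A$. Next, for any $v \in V$,
\[
\inprod{v, \mathsf A^\dagger \mathsf A v}_V = \inprod{\mathsf A v, \mathsf A v}_W = \|\mathsf A v\|_W^2 \geq 0,
\]
so $\mathsf A^\dagger \mathsf A$ is positive semi-definite. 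By the spectral theorem for self-adjoint operators on a finite-dimensional inner product space, $\mathsf A^\dagger \mathsf A$ has an orthonormal eigenbasis $v_1, \ldots, v_{\dim V}$ with real eigenvalues $\lambda_i$, and the PSD property forces each $\lambda_i \geq 0$, which proves the first claim.

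For the second claim, I would exhibit left singular vectors explicitly from the eigenvectors of $\mathsf A^\dagger \mathsf A$. Order the eigenvalues so that $\lambda_1 \geq \lambda_2 \geq \cdots \geq 0$, and for each index $i$ with $\lambda_i > 0$ define $w_i \defeq \mathsf A v_i / \sqrt{\lambda_i} \in W$. Then by construction $\mathsf A v_i = \sqrt{\lambda_i}\, w_i$, which is the first singular-value equation with $\sigma = \sqrt{\lambda_i}$. For the second equation, I would compute $\mathsf A^\dagger w_i = \mathsf A^\dagger \mathsf A v_i / \sqrt{\lambda_i} = \sqrt{\lambda_i}\, v_i$, and then unfold via the defining property of the adjoint: for any $u \in V$,
\[
w_i^\dagger (\mathsf A u) = \inprod{w_i, \mathsf A u}_W = \inprod{\mathsf A^\dagger w_i, u}_V = \sqrt{\lambda_i}\, \inprod{v_i, u}_V = \sqrt{\lambda_i}\, v_i^\dagger u,
\]
so $w_i^\dagger \mathsf A = \sqrt{\lambda_i}\, v_i^\dagger$, as required. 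For indices $i$ with $\lambda_i = 0$, the identity $\|\mathsf A v_i\|_W^2 = \inprod{v_i, \mathsf A^\dagger \mathsf A v_i}_V = 0$ gives $\mathsf A v_i = 0$, and the singular-value equations are satisfied trivially by any unit $w_i$ (for instance any element of an orthonormal basis of the orthogonal complement of $\mathrm{span}\{w_j : \lambda_j > 0\}$ in $W$).

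Finally, to conclude $\sigma_i(\mathsf A) = \sqrt{\lambda_i(\mathsf A^\dagger \mathsf A)}$ in the ordered sense, I would note that the above construction produces a full family of singular triples whose singular values, arranged in decreasing order, are exactly $\sqrt{\lambda_1} \geq \sqrt{\lambda_2} \geq \cdots$, and that conversely any singular triple $(\sigma, v, w)$ of $\mathsf A$ satisfies $\mathsf A^\dagger \mathsf A v = \sigma \mathsf A^\dagger w = \sigma^2 v$, so every squared singular value is an eigenvalue of $\mathsf A^\dagger \mathsf A$. I do not expect any real obstacle; the only mild care needed is in the $\lambda_i = 0$ case (where $w_i$ is not determined by $v_i$) and in matching the orderings, which is handled by the two-way correspondence just stated.
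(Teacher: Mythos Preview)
Your proposal is correct and is the standard argument for this fact. The paper itself does not supply a proof of \prettyref{fact:svalue_eigenvalue}; it is stated as a preliminary fact without proof, so there is no paper-side argument to compare against beyond noting that your write-up is exactly the textbook proof one would expect.
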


\begin{fact}
\label{fact:bipartite_svalues}
Let $V$,$W$ be two inner product spaces and $\mathsf A:V\to W$ be a linear operator and let $\mathsf B$ be defined by the expression,
\begin{align*}
\mathsf B=
\begin{bmatrix}
0 && \mathsf A\\
\mathsf{A}^{\dagger} && 0
\end{bmatrix}
\end{align*}
then for any $i\in \set{1, \dots, r}$, $\sigma_i(\mathsf A)=\lambda_i(\mathsf B)$ where $r=\rank(\mathsf A)$.
\end{fact}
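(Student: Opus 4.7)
The plan is to exploit the self-adjointness of $\mathsf B$ (when viewed as an operator on $W\oplus V$ equipped with the natural direct-sum inner product) and establish a two-to-one correspondence between non-zero singular values of $\mathsf A$ and non-zero eigenvalues of $\mathsf B$. Self-adjointness is immediate: writing $\mathsf B$ in block form, its adjoint with respect to the direct-sum inner product is obtained by transposing the blocks and replacing each by its adjoint, which swaps $\mathsf A$ and $\mathsf A^\dagger$ back into their original positions. In particular the spectral theorem applies, so $\mathsf B$ has an orthonormal eigenbasis with real eigenvalues.

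The main step is the correspondence. Given a singular triple $(\sigma,v,w)$ of $\mathsf A$, the defining identities $\mathsf A v=\sigma w$ and $w^\dagger \mathsf A=\sigma v^\dagger$ are equivalent (by taking adjoints, using $(w^\dagger \mathsf A)^\dagger = \mathsf A^\dagger w$) to $\mathsf A^\dagger w=\sigma v$. A direct block computation then shows
\begin{align*}
\mathsf B\begin{bmatrix}w\\ v\end{bmatrix}=\begin{bmatrix}\mathsf A v\\ \mathsf A^\dagger w\end{bmatrix}=\sigma\begin{bmatrix}w\\ v\end{bmatrix},\qquad \mathsf B\begin{bmatrix}w\\ -v\end{bmatrix}=-\sigma\begin{bmatrix}w\\ -v\end{bmatrix},
\end{align*}
so each positive singular value $\sigma_i$ produces a pair of eigenvalues $\pm\sigma_i$ of $\mathsf B$. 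I would then invoke \prettyref{fact:svalue_eigenvalue} and apply the spectral theorem to $\mathsf A^\dagger \mathsf A$ and $\mathsf A \mathsf A^\dagger$ to obtain orthonormal systems of right and left singular vectors $\{v_i\}$ and $\{w_i\}$ corresponding to $\sigma_1\geq\dots\geq\sigma_r>0$, and lift them via the construction above to mutually orthogonal eigenvectors $\{(w_i,v_i)\}$ and $\{(w_i,-v_i)\}$ of $\mathsf B$, supplementing them with an orthonormal basis of $\ker(\mathsf A^\dagger)\oplus\ker(\mathsf A)$, which obviously lies in $\ker(\mathsf B)$.

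A dimension count then closes the argument: the constructed family has $2r+(\dim W - r)+(\dim V-r)=\dim W+\dim V$ orthonormal vectors, hence forms a complete eigenbasis of $\mathsf B$. Arranging the eigenvalues in decreasing order yields $\sigma_1\geq\sigma_2\geq\dots\geq\sigma_r>0=\dots=0>-\sigma_r\geq\dots\geq-\sigma_1$, so the top $r$ eigenvalues of $\mathsf B$ are precisely $\sigma_1(\mathsf A),\dots,\sigma_r(\mathsf A)$. The only subtlety I anticipate is handling repeated singular values, where one must ensure the lifted eigenvectors within a single eigenspace of $\mathsf B$ remain orthogonal; this is guaranteed by starting with orthonormal singular vectors within each singular subspace, which is the standard SVD guarantee ensured by applying the spectral theorem to $\mathsf A^\dagger \mathsf A$.
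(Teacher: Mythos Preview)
Your proposal is correct and uses the same core block computation as the paper's proof: from a singular triple $(\sigma,v,w)$ one obtains eigenvectors $(w,\pm v)$ of $\mathsf B$ with eigenvalues $\pm\sigma$. The paper packages this as a cyclic equivalence (a)$\Rightarrow$(b)$\Rightarrow$(c)$\Rightarrow$(a) among the conditions ``$\alpha$ is a singular value of $\mathsf A$'', ``$\alpha$ is an eigenvalue of $\mathsf B$'', and ``$-\alpha$ is an eigenvalue of $\mathsf B$'', and then appeals to $\rank(\mathsf A)$ being the number of positive singular values.

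The one genuine difference is that you go further and build a full orthonormal eigenbasis of $\mathsf B$ via the dimension count, which pins down multiplicities explicitly; the paper's cyclic argument only establishes equality of the \emph{sets} of positive singular values and positive eigenvalues, leaving the matching of multiplicities (and hence the claimed equality $\sigma_i(\mathsf A)=\lambda_i(\mathsf B)$ for each $i$) implicit. Your version is therefore a bit more complete on that point, at the cost of being longer.
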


\begin{proof}
For completeness, we refer \prettyref{app:proof_of_bipartite_svalues} for a proof of this.
\end{proof}

\subsubsection{Simplicial Complexes}

\begin{definition}
A simplicial complex $X$ is  a set system which consists of a ground set $V$ and a downward closed collection of subsets of $V$, i.e., if $s \in X$ and $t \subseteq s$ then $t \in X$.
The sets in $X$ are called the faces of $X$.

We define a level/slice $X(l)$ of the simplicial complex $X$ as  $X(l)=\set{s\in X||s|=l}$.
Note that for the simplicial complex corresponding to the hypergraph, the top level $X(k)$ is the set of $k$-uniform hyperedges and the ground set of vertices\footnote{We shall often simply write $v$ for a face $\set{v}\in X(1)$} is denoted $X(1)$.
By convention we have that $X(0)=\set{\emptyset}$.
Similarly, we define $X(\leq l) = \set{s\in X||s|\leq l}$.

We call a simplicial complex $X$ as $k$-dimensional if $k$ is the smallest integer for which $X(\leq k)=X$.\footnote{We shall often write $X(\leq k)$ for $X$ to stress the fact that X is $k$-dimensional}
A $k$-dimensional simplicial complex $X$ is a pure simplicial complex if for all $s \in X$ there exists $t \in X(k)$ such that $s \subseteq t$.
\end{definition}

Given a $k$-uniform hypergraph $H=(V,E)$, we obtain a pure simplicial complex $X$ where the ground set is $V$ and we downward close the set system $E$ of hyperedges. 
Given a distribution $\Pi_k$ on the hyperedges, we have an induced distribution $\Pi_l$ on sets $s$ in level $X(l)$ given by,
\begin{align*}
    \Pi_l(s)=\frac{1}{\binom{k}{l}}\sum_{e\in E|s\subseteq e}\Pi_k(e).
\end{align*}
We refer to the joint distribution as $\Pi = \paren{\Pi_k,\Pi_{k-1},\hdots,\Pi_1}$.
If the input hypergraph is unweighted then we take the distribution $\Pi_k$ to be the uniform distribution on $X(k)$.
We thus obtain a weighted simplicial complex $(X,\Pi)$.
We refer to $(X,\Pi)$ as the (weighted\footnote{Whenever it is clear from the context, we use $X$ in place of $(X,\Pi)$ for the sake of brevity.}) simplicial complex \emph{induced} by $(H,\Pi_k)$.

\begin{lemma}[Folkore]
\label{lem:pi_level_relation}
For any non-negative integer $m\leq l$ and any $s\in X(m)$, $\sum_{t\in X(l)|t\supseteq s}\Pi_l(t)=\binom{l}{m}\Pi_m(s)$.
\end{lemma}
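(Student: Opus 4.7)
The plan is to unfold the definition of $\Pi_l$ on the left-hand side, exchange the order of summation, and do a small binomial identity at the end. No topological content is needed — this is just double counting weighted by $\Pi_k$.

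More concretely, starting from the LHS and substituting $\Pi_l(t) = \tfrac{1}{\binom{k}{l}} \sum_{e \in E : t \subseteq e} \Pi_k(e)$, I would swap the two sums to rewrite
\[
\sum_{t \in X(l) : t \supseteq s} \Pi_l(t) \;=\; \frac{1}{\binom{k}{l}} \sum_{e \in E : s \subseteq e} \Pi_k(e) \cdot \bigl|\{\, t \in X(l) : s \subseteq t \subseteq e\,\}\bigr|.
\]
For a fixed hyperedge $e \supseteq s$ with $|e|=k$ and $|s|=m$, the number of $l$-subsets $t$ with $s \subseteq t \subseteq e$ is exactly $\binom{k-m}{l-m}$, since we must choose the remaining $l-m$ elements of $t$ from the $k-m$ elements of $e \setminus s$. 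That inner count is the only substantive step.

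Plugging this in and using the definition of $\Pi_m$ to rewrite $\sum_{e \supseteq s} \Pi_k(e) = \binom{k}{m}\, \Pi_m(s)$, I get
\[
\sum_{t \in X(l) : t \supseteq s} \Pi_l(t) \;=\; \frac{\binom{k-m}{l-m}\,\binom{k}{m}}{\binom{k}{l}} \, \Pi_m(s).
\]
The proof then closes by verifying the identity $\binom{k-m}{l-m}\binom{k}{m} = \binom{l}{m}\binom{k}{l}$, which is immediate upon expanding both sides into factorials (both equal $k!/(m!\,(l-m)!\,(k-l)!)$). This gives the claimed $\binom{l}{m}\,\Pi_m(s)$.

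There isn't really a hard step here; the only place one could slip is the counting of intermediate $t$'s or the final binomial simplification. One sanity check that the constant is right: taking $m=l$ should give $\binom{l}{l}=1$ on the RHS, and indeed the LHS reduces to the single term $\Pi_l(s)$; and taking $m=0$, $s=\emptyset$ should give $\binom{l}{0}\Pi_0(\emptyset)=1$, matching $\sum_{t \in X(l)} \Pi_l(t) = 1$ since $\Pi_l$ is a probability distribution (which itself follows from the same computation with $m=0$).
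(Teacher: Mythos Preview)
Your proof is correct and follows essentially the same approach as the paper: unfold the definition of $\Pi_l$, swap the order of summation, count the intermediate $l$-sets as $\binom{k-m}{l-m}$, and finish with the binomial identity $\binom{k-m}{l-m}\binom{k}{m}=\binom{l}{m}\binom{k}{l}$ together with the definition of $\Pi_m$. The only cosmetic difference is that the paper applies the binomial identity before invoking the definition of $\Pi_m$, while you do it the other way around.
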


\begin{proof}
This essentially follows from the definitions.
For completeness, we refer \prettyref{app:proof_of_pi_level_relation} for a proof of this.
\end{proof}

In this work, we consider a notion of expansion for weighed simplicial complexes called link expansion.
To that end we first define the notion of link of a complex and its skeleton.

\begin{definition}
For a simplicial complex $X$ and some $s\in X$, $X_s$ denotes the link complex of $s$ defined by $X_s=\set{t\setminus s|s\subseteq t\in X}$.
The skeleton of a link $X_s$ for a face $s\in X(\leq k-2)$ (where $k$ is the length of the largest face) denoted by $G(X_s)$ is a weighted graph with vertex set $X_s(1)$, edge set $X_s(2)$ and weights proportional to $\Pi_2$.
\end{definition}

\begin{definition}[$\gamma$-HDX, \cite{DK17}]
\label{def:local_hdx}
A simplicial complex $X(\leq k)$ is a $\gamma$-High Dimensional Expander ($\gamma$-HDX) if for all $s\in X(\leq k-2)$, the second singular value of the adjacency matrix of the graph $G(X_s)$ (denoted $\sigma_2(G(X_s))$) satisfies $\sigma_2(G(X_s))\leq \gamma$.
We refer to $1-\gamma$ as the link-expansion of $X$. 
\end{definition}

\begin{definition}[Weighted inner product]
Given two functions $f,g \in \mathbb{R}^{S}$, i.e., $f,g:S\to \R$ and a measure $\mu$ on $S$, we define the weighted inner product of these functions as,
\begin{align*}
    \inprod{f,g}_{\mu} = \E_{s \sim \mu}[f(s)g(s)] = \sum_{s \in S}f(s)g(s)\mu(s) \mper
\end{align*}
We drop the subscript $\mu$ from $\inprod{\cdot,\cdot}_\mu$ whenever $\mu$ is clear from context.
\end{definition}

\paragraph{A note on convention.}
In this paper we will use the weighted inner product between two functions $f, g\in \R^{X(m)}$ on levels $X(m)$ of the simplicial complex $X$ under consideration and with the measure $\Pi_m$, unless otherwise specified.
In particular, for any linear operator $\mathsf A:\R^{X(m)}\to \R^{X(l)}$ the adjoint $\mathsf A^\dagger$ and the $i$-th largest singular value $\sigma_i(\mathsf A)$ are with respect to this inner-product.

This choice of inner product is particularly convenient for dealing with random-walks on a simplicial complex, for example, see \prettyref{lem:updown_adj} which shows that the ``down-walk'' is the adjoint of the ``up-walk''.
Roughly speaking, this is the result of weighing each component using the appropriate ``degree''.

\subsubsection{Walks on a Simplicial Complex}

\begin{definition}[Up and Down operators]
Given a simplicial complex $(X,\Pi)$, we define the up operator $\mathsf{U}_i: \mathbb{R}^{X(i)} \rightarrow \mathbb{R}^{X(i+1)}$ that acts on a function $f \in \mathbb{R}^{X(i)}$ as,
\begin{align*}
    [\mathsf{U}_i f](s) = \E_{s' \in X(i), s'\subseteq s}[f(s')] = \frac{1}{i+1}\sum_{x \in s}f(s \setminus \set{x})
\end{align*}
and the down operator $\mathsf{D}_{i+1}: \mathbb{R}^{X(i+1)} \rightarrow \mathbb{R}^{X(i)}$ that acts on a function $g \in \mathbb{R}^{X(i+1)}$ as,
\begin{align*}
    [\mathsf{D}_{i+1}g](s) = \E_{s' \sim \Pi_{i+1}, s'\supset s}[g(s')] = \frac{1}{i+1}\sum_{x \notin s}g(s \cup \set{x})\frac{\Pi_{i+1}(s \cup \set{x})}{\Pi_i(s)} \mper
\end{align*}
\end{definition}

As a consequence of the definition of the up and down operators the following statement holds.
\begin{lemma}[Folklore]
\label{lem:updown_adj}
$\mathsf U_{i}^\dagger =\mathsf D_{i+1}$.
\end{lemma}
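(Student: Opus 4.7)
The plan is to verify the adjoint relation directly from the definition of the adjoint with respect to the weighted inner products $\inprod{\cdot,\cdot}_{\Pi_i}$ on $\R^{X(i)}$ and $\inprod{\cdot,\cdot}_{\Pi_{i+1}}$ on $\R^{X(i+1)}$. Since $\mathsf D_{i+1}$ is a linear operator from $\R^{X(i+1)}$ to $\R^{X(i)}$, to conclude that $\mathsf U_i^\dagger = \mathsf D_{i+1}$ (using uniqueness of the adjoint from the cited fact) it suffices to check that
\[
\inprod{\mathsf U_i f, g}_{\Pi_{i+1}} = \inprod{f, \mathsf D_{i+1} g}_{\Pi_i}
\]
for every $f \in \R^{X(i)}$ and $g \in \R^{X(i+1)}$.

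The computation proceeds by expanding the left-hand side using the definition of $\mathsf U_i$ and then swapping the order of summation. Explicitly, $\inprod{\mathsf U_i f, g}_{\Pi_{i+1}}$ equals $\sum_{s \in X(i+1)} \Pi_{i+1}(s)\, g(s) \cdot \frac{1}{i+1}\sum_{x \in s} f(s \setminus \{x\})$. This is a sum over ordered pairs $(t,s)$ with $t \in X(i)$, $s \in X(i+1)$, and $t \subset s$; reindexing by $t$ first and $s = t \cup \{x\}$ with $x \notin t$ second yields
\[
\frac{1}{i+1}\sum_{t \in X(i)} f(t) \sum_{x \notin t,\, t\cup\{x\}\in X(i+1)} \Pi_{i+1}(t \cup \{x\})\, g(t \cup \{x\}).
\]

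The main (and really only) thing to watch is matching the weight factors with those appearing in $\mathsf D_{i+1}$. By the definition of $\mathsf D_{i+1}$, the inner sum over $x$ equals $\Pi_i(t) \cdot [\mathsf D_{i+1} g](t)$, because the definition of $\mathsf D_{i+1}$ includes the ratio $\Pi_{i+1}(t\cup\{x\})/\Pi_i(t)$ that cancels exactly against the $\Pi_i(t)$ factor we are pulling out. Substituting this in gives $\sum_{t \in X(i)} \Pi_i(t)\, f(t)\, [\mathsf D_{i+1}g](t) = \inprod{f, \mathsf D_{i+1} g}_{\Pi_i}$, which is exactly what we want.

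I do not anticipate any obstacle here: the entire proof is a bookkeeping exercise in swapping the order of summation over incidences $t \subset s$ with $t\in X(i), s\in X(i+1)$, and observing that the ``$\Pi_i(t)$'' factor needed for the right-hand inner product is precisely the one that appears in the denominator of $\mathsf D_{i+1}$'s defining expectation. No additional properties of $\Pi$ (such as the level relation in Lemma~\ref{lem:pi_level_relation}) are required for this step.
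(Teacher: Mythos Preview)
Your proposal is correct and follows essentially the same approach as the paper's proof: both expand $\inprod{\mathsf U_i f,g}_{\Pi_{i+1}}$ as a double sum (the paper writes it as an expectation) over incidence pairs $t\subset s$ with $t\in X(i)$, $s\in X(i+1)$, and then regroup to recognize $\inprod{f,\mathsf D_{i+1}g}_{\Pi_i}$. One tiny bookkeeping slip to fix: the inner sum over $x$ equals $(i+1)\,\Pi_i(t)\,[\mathsf D_{i+1}g](t)$ rather than $\Pi_i(t)\,[\mathsf D_{i+1}g](t)$, since the definition of $\mathsf D_{i+1}$ already carries a factor $1/(i+1)$; this extra $(i+1)$ then cancels the $1/(i+1)$ you pulled out front, giving exactly the stated conclusion.
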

\begin{proof}
For completeness, we reproduce a proof of this in \prettyref{app:proof_of_updown_adj}.
\end{proof}

The up operator $\mathsf U_i$ can be thought of as defining a random-walk moving from $X(i+1)$ to $X(i)$ where a subset of size $i$ is selected uniformly for a given face $s\in X(i+1)$.
Similarly, the down operator $\mathsf D_{i+1}$ can be thought of as defining a random walk moving from $X(i)$ to $X(i+1)$ where a superset $s'\in X(i+1)$ of size $i+1$ is selected for a given face $s\in X(i)$ with probability $\frac{\Pi_{i+1}(s')}{\Pi_i(s)}$.
This leads us to the following definition.

\begin{definition}
\label{def:updown_graph}
Given a simplicial complex $(X,\Pi)$ and its two levels  $X(m)$, $X(l)$, we define a bipartite graph on $X(m)\cup X(l)$ as $B_{m,l}=(X(m) \cup X(l),E_{m,l},w_{m,l})$ where $E_{m,l} = \set{\set{s,t} | s \in X(m),t \in X(l), \text{ and } s \subseteq t}$ and $m \leq l$.
The weight of an edge $\set{s,t}$ where $s\in X(m)$ and $t\in X(l)$ is given by $w_{m,l}(s,t)= \binom{k}{l}\Pi_{l}(t)$.
\end{definition}
As we will show in \prettyref{fact:b_equals_n}, in the random-walk on $B_{m,l}$ the block corresponding to the transition from a vertex in $X(m)$ to a vertex in $X(l)$ is the up walk (i.e., the down operator) and the block corresponding to the transition from a vertex in $X(l)$ to a vertex in $X(m)$ is the down-walk (i.e., the up operator).

Now, we define the $B^{(2)}_{m,l}$ graph such that the random-walk on it corresponds to the two step walk starting from vertices in $X(m)$ on $B_{m,l}$, i.e., the random-walk on $B^{(2)}_{m,l}$ corresponds to a up-walk followed by a down-walk.
\prettyref{fact:b2_equals_updown} shows that this correspondence indeed holds.

\begin{definition}
\label{def:order2_updown}
Given a simplicial complex $(X,\Pi)$ and its two levels  $X(m)$, $X(l)$ with $m \leq l$, we define a graph on $X(m)$ as $B^{(2)}_{m,l}=(X(m), E^{(2)}_{m,l},w^{(2)}_{m,l})$ where 
\begin{align*}
    E^{(2)}_{m,l} = \set{\set{s,t} | s, t \in X(m) \text{ and } \exists s'\in X(l) \text{ such that }s'\supseteq s\cup t}.
\end{align*}
The weight of an edge $\set{s,t}$ where $s, t\in X(m)$ is given by 
\begin{align*}
w^{(2)}_{m,l}(s,t)= \sum_{s'\supseteq s\cup t} w_{m,l}(s,s')=\binom{k}{l}\sum_{s'\supseteq s\cup t} \Pi_{l}(s').
\end{align*}

The normalized adjacency matrix corresponding to $B^{(2)}_{m,l}$ is denoted by $\mathsf A^{(2)}_{m,l}$.
\end{definition}

\begin{definition}[Up-Down Walk, \cite{DK17}]
\label{def:canonical_walk}
For positive integers $m\leq l$, let $\mathsf{D}_{m,l}$ denote the product
\begin{align*}
    \mathsf D_{m+1}\mathsf D_{m+2}\hdots \mathsf D_{l-1}\mathsf D_l
\end{align*}
and let $\mathsf{U}_{l,m}$ denote the product
\begin{align*}
    \mathsf U_{l-1}\mathsf U_{l-2}\hdots \mathsf U_{m+1}\mathsf U_{m}.
\end{align*}
We denote the following walk between $X(m)$ and $X(l)$ as $\mathsf N_{m,l}$,
\begin{align*}
\mathsf{N}_{m,l}=
\begin{bmatrix}
0 && \mathsf{D}_{m,l}\\
\mathsf{U}_{l,m} && 0
\end{bmatrix}=
\begin{bmatrix}
0 && \mathsf{D}_{m,l}\\
\mathsf{D}_{m,l}^{\dagger} && 0
\end{bmatrix},
\end{align*}
where the second equality is a consequence of \prettyref{lem:updown_adj}.
The up-down walk on $X(m)$ through $X(l)$ is a random-walk on $X(m)$ whose transition matrix (denoted $\mathsf N^{(2)}_{m,l}$) is given by $\mathsf N^{(2)}_{m,l}=\mathsf D_{m,l}\mathsf U_{l,m}$.
\end{definition}

\begin{fact}
\label{fact:b_equals_n}
The transition matrix for random-walk on $B_{m,l}$ is $\mathsf N_{m,l}$.
\end{fact}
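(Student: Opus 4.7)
The plan is to verify the identity block by block: directly compute the one-step transition probabilities of the random walk on the weighted bipartite graph $B_{m,l}$, and then show that these match the matrix entries of $\mathsf D_{m,l}$ (for the block from $X(m)$ to $X(l)$) and $\mathsf U_{l,m}$ (for the block from $X(l)$ to $X(m)$). Since $\mathsf N_{m,l}$ is by definition the block anti-diagonal operator with these two operators in the off-diagonal entries, and since $B_{m,l}$ is bipartite (so the diagonal blocks of its transition matrix are zero), matching the two off-diagonal blocks will finish the proof.

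For the $X(m)\to X(l)$ block, I first compute $\deg_{B_{m,l}}(s)$ for $s\in X(m)$ as $\sum_{t\in X(l),\,t\supseteq s}\binom{k}{l}\Pi_l(t)$. Applying \prettyref{lem:pi_level_relation}, this simplifies to $\binom{k}{l}\binom{l}{m}\Pi_m(s)$. Hence the one-step transition probability from $s$ to $t\supseteq s$ is
\[
P(s\to t)\;=\;\frac{\binom{k}{l}\Pi_l(t)}{\binom{k}{l}\binom{l}{m}\Pi_m(s)}\;=\;\frac{\Pi_l(t)}{\binom{l}{m}\Pi_m(s)}.
\]
I then show by induction on $l-m$ that $[\mathsf D_{m,l}g](s)=\sum_{t\supseteq s,\,t\in X(l)} \frac{\Pi_l(t)}{\binom{l}{m}\Pi_m(s)}\,g(t)$. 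The base case $l=m+1$ is just the definition of $\mathsf D_{m+1}$ after recognizing $1/(m+1)=1/\binom{m+1}{m}$. For the inductive step, I unfold $\mathsf D_{m,l}=\mathsf D_{m+1}\mathsf D_{m+1,l}$, apply the inductive hypothesis, and use \prettyref{lem:pi_level_relation} together with the identity $\binom{l}{m+1}\binom{m+1}{m}/\binom{l}{m}=l-m$ (and the factor $1/(m+1)$ in $\mathsf D_{m+1}$) to telescope the combinatorial coefficients into $1/\binom{l}{m}$. This matches $P(s\to t)$ entry-wise.

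For the $X(l)\to X(m)$ block, the degree $\deg_{B_{m,l}}(t)=\binom{l}{m}\binom{k}{l}\Pi_l(t)$, so $P(t\to s)=1/\binom{l}{m}$ uniformly over the $\binom{l}{m}$ subsets $s\subseteq t$ of size $m$. A parallel induction on $l-m$, starting from $[\mathsf U_m f](s)=\frac{1}{m+1}\sum_{x\in s}f(s\setminus\{x\})=\frac{1}{\binom{m+1}{m}}\sum_{s'\subseteq s,\,|s'|=m}f(s')$, yields $[\mathsf U_{l,m}f](t)=\frac{1}{\binom{l}{m}}\sum_{s\subseteq t,\,|s|=m}f(s)$, again matching the transition probabilities. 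I expect the only mild obstacle to be bookkeeping of the binomial factors in the inductive step for $\mathsf D_{m,l}$; once these telescope as above, the identification of the two off-diagonal blocks, and hence of $\mathsf N_{m,l}$ with the transition matrix of the random walk on $B_{m,l}$, is immediate.
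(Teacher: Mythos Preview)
Your proposal is correct and follows essentially the same block-by-block strategy as the paper's proof: compute the degrees in $B_{m,l}$ via \prettyref{lem:pi_level_relation}, read off the transition probabilities, and match them against the entries of $\mathsf D_{m,l}$ and $\mathsf U_{l,m}$. The only difference is that the paper simply asserts the closed-form entries $[\mathsf D_{m,l}\mathsf e_t](s)=\frac{\Pi_l(t)}{\binom{l}{m}\Pi_m(s)}$ and $[\mathsf U_{l,m}\mathsf e_t](s)=\frac{1}{\binom{l}{m}}$ (treating them as immediate from the definitions), whereas you supply an explicit induction on $l-m$ to derive them; your version is thus slightly more detailed but not a different approach.
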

\begin{proof}
For completeness, we refer \prettyref{app:proof_of_b_equals_n} for a proof of this.
\end{proof}

\begin{fact}
\label{fact:b2_equals_updown}
The transition matrix for random-walk on $B^{(2)}_{m,l}$ is $\mathsf {D}_{m,l}\mathsf U_{l,m}$.
\end{fact}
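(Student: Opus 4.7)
The plan is to show that for any $s, t \in X(m)$, both the transition kernel of the simple random walk on $B^{(2)}_{m,l}$ and the matrix entry $[\mathsf D_{m,l}\mathsf U_{l,m}](s,t)$ equal the same explicit expression, namely
\begin{align*}
\frac{1}{\binom{l}{m}^2 \Pi_m(s)}\sum_{s'\in X(l),\, s'\supseteq s\cup t} \Pi_l(s').
\end{align*}

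First I would compute the kernel of $\mathsf U_{l,m}$. Unfolding the definition of a single up operator, $[\mathsf U_i f](s') = \binom{i+1}{i}^{-1}\sum_{s\subset s',|s|=i} f(s)$, i.e.\ the average of $f$ over the size-$i$ subsets of $s'$. Iterating this for $l-m$ steps gives $[\mathsf U_{l,m} f](s') = \binom{l}{m}^{-1}\sum_{s\subseteq s',\,|s|=m} f(s)$ for each $s'\in X(l)$.

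Next I would compute the kernel of $\mathsf D_{m,l}$ by expanding the product over all chains $s=s_0\subset s_1\subset\dots\subset s_{l-m}=s'$ with $|s_i|=m+i$. Each step contributes a factor $\Pi_{m+i+1}(s_{i+1})/\bigl((m+i+1)\Pi_{m+i}(s_i)\bigr)$, and the $\Pi$-factors telescope to $\Pi_l(s')/\Pi_m(s)$ while the integer factors collect to $m!/l!$. Since $(l-m)!$ chains connect a fixed $s$ to a fixed $s'\supseteq s$ with $|s'|=l$, I obtain
\begin{align*}
[\mathsf D_{m,l} g](s) \;=\; \sum_{s'\supseteq s,\,|s'|=l} g(s')\,\frac{\Pi_l(s')}{\binom{l}{m}\Pi_m(s)}.
\end{align*}
(One sanity check: applying this to the constant $1$ and using \prettyref{lem:pi_level_relation} gives $1$, as expected for a down-walk.) Composing with $\mathsf U_{l,m}$ and swapping sums yields the expression claimed above for $[\mathsf D_{m,l}\mathsf U_{l,m}](s,t)$.

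Finally I would compute the transition probabilities of the random walk on $B^{(2)}_{m,l}$. The degree at $s$ is
\begin{align*}
\deg(s) \;=\; \sum_{t\in X(m)} w^{(2)}_{m,l}(s,t) \;=\; \binom{k}{l}\sum_{s'\supseteq s,\,|s'|=l} \Pi_l(s')\cdot\bigl|\{t\in X(m):t\subseteq s'\}\bigr|,
\end{align*}
after switching the order of summation. The inner cardinality is $\binom{l}{m}$, and \prettyref{lem:pi_level_relation} gives $\sum_{s'\supseteq s}\Pi_l(s')=\binom{l}{m}\Pi_m(s)$, so $\deg(s)=\binom{k}{l}\binom{l}{m}^2\Pi_m(s)$. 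Dividing $w^{(2)}_{m,l}(s,t)$ by $\deg(s)$ yields exactly the expression for $[\mathsf D_{m,l}\mathsf U_{l,m}](s,t)$ derived above, completing the proof. The only subtle point, and the main bookkeeping obstacle, is making sure the $(l-m)!$ chain count and the telescoping $\Pi$-ratios combine correctly to produce the factor $1/\binom{l}{m}$ in the kernel of $\mathsf D_{m,l}$; once this is right, the rest is a direct comparison.
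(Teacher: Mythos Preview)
Your proposal is correct and follows essentially the same approach as the paper: both compute the degree in $B^{(2)}_{m,l}$ as $\binom{k}{l}\binom{l}{m}^2\Pi_m(s)$ via \prettyref{lem:pi_level_relation}, and both identify the $(s,t)$-entry of $\mathsf D_{m,l}\mathsf U_{l,m}$ with $\binom{l}{m}^{-2}\Pi_m(s)^{-1}\sum_{s'\supseteq s\cup t}\Pi_l(s')$ before matching it against $w^{(2)}_{m,l}(s,t)/\deg(s)$. The only difference is presentational: the paper treats the iterated kernels $[\mathsf D_{m,l}g](s)=\E_{s'\sim\Pi_l\mid s'\supseteq s}g(s')$ and $[\mathsf U_{l,m}f](s')=\E_{s''\subseteq s',\,|s''|=m}f(s'')$ as immediate from the definitions, whereas you explicitly derive them by telescoping the $\Pi$-ratios and counting the $(l-m)!$ saturated chains --- a harmless (and arguably clearer) expansion of what the paper leaves implicit.
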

\begin{proof}
For completeness, we refer \prettyref{app:proof_of_b2_equals_updown} for a proof of this.
\end{proof}
\section{Computing Sparse Cut in Hypergraphs}
\label{sec:sparse_cut_in_non-hdx}
\prettyref{thm:sparse_cut} shows an analogue of Cheeger's inequality based on the eigenvalues of up-down  walks $\mathsf{N}_{1,l}$.

\begin{theorem}
\label{thm:sparse_cut}
Let $H=(V,E)$ be a $k$-uniform hypergraph such that the induced simplicial complex $X$ has a up-down  walk $\mathsf{N}^{(2)}_{1,l}$ such that $\lambda_2(\mathsf{N}_{1,l})=1-\varepsilon$ for some $\varepsilon >0$ and some $l \in \set{2,3,\hdots,k}$.
Then $\frac{\varepsilon}{k}\leq \phi_H\leq 4\sqrt{\varepsilon}$.
Furthermore there is an algorithm which on input $H$, outputs a set $S \subset V$ such that $\phi_{H}(S) \leq 4\sqrt{\varepsilon}$ in $\mathsf{poly}(|V|,|E|)$ time where $\mathsf{poly}$ is a polynomial.
\end{theorem}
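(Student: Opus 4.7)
The plan is to apply the standard Cheeger inequality to the reversible Markov chain $\mathsf{N}^{(2)}_{1,l}$ on $V = X(1)$, and then to relate its conductance to the hypergraph conductance $\phi_H$. First, since $\mathsf{U}_{l,1} = \mathsf{D}_{1,l}^{\dagger}$ by \prettyref{lem:updown_adj}, the operator $\mathsf{N}^{(2)}_{1,l} = \mathsf{D}_{1,l}\mathsf{D}_{1,l}^{\dagger}$ is self-adjoint with respect to the $\Pi_1$-weighted inner product, and by \prettyref{fact:b2_equals_updown} it is the transition matrix of a reversible random walk on $V$ with stationary distribution $\Pi_1$. Combining \prettyref{fact:svalue_eigenvalue} with \prettyref{fact:bipartite_svalues} gives $\lambda_i(\mathsf{N}^{(2)}_{1,l}) = \sigma_i(\mathsf{D}_{1,l})^2 = \lambda_i(\mathsf{N}_{1,l})^2$, so the hypothesis implies $1 - \lambda_2(\mathsf{N}^{(2)}_{1,l}) = 1 - (1-\varepsilon)^2 \in [\varepsilon, 2\varepsilon]$. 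The standard Cheeger inequality applied to this chain yields $\varepsilon/2 \leq \phi_{(2)} \leq 2\sqrt{\varepsilon}$, where $\phi_{(2)}$ denotes the minimum conductance of the walk; moreover, running Fiedler's sweep on the $|V|\times|V|$ matrix $\mathsf{N}^{(2)}_{1,l}$ (constructible in $\poly(|V|,|E|)$ time since $|X(l)| \leq \binom{k}{l}|E|$) produces an explicit set $S$ with $\phi_{(2)}(S) \leq 2\sqrt{\varepsilon}$.

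The crux of the proof is a two-sided comparison between $\phi_{(2)}(S)$ and $\phi_H(S)$ that holds for every $S \subseteq V$. Unpacking the transition probabilities of $\mathsf{N}^{(2)}_{1,l}$ and using \prettyref{lem:pi_level_relation} together with the definition of $\Pi_l$ to rewrite the level-$l$ sum in terms of the top-level hyperedges, I would derive
\begin{align*}
\phi_{(2)}(S) \;=\; \frac{1}{l^2\,\Pi_1(S)}\sum_{s \in X(l)}\Pi_l(s)\,|s\cap S|\,|s\setminus S| \;=\; \frac{l-1}{l\,k(k-1)\,\Pi_1(S)}\sum_{e \in E}\Pi_k(e)\,|e\cap S|\,|e\setminus S|.
\end{align*}
The crucial observation is that for any crossing hyperedge $e \in \partial_H(S)$ the integers $|e\cap S|$ and $|e\setminus S|$ both lie in $[1,k-1]$ and sum to $k$, yielding the pointwise sandwich $k-1 \leq |e\cap S|\,|e\setminus S| \leq k^2/4$. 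Since $\mathsf{vol}_H(S) = k\,\Pi_1(S)$, this translates into
\begin{align*}
\tfrac{l-1}{l}\,\phi_H(S) \;\leq\; \phi_{(2)}(S) \;\leq\; \tfrac{(l-1)k^2}{4l(k-1)}\,\phi_H(S) \;\leq\; \tfrac{k}{4}\,\phi_H(S).
\end{align*}

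Combining these two ingredients finishes the proof. For the upper bound on $\phi_H$ (and the algorithm), the Fiedler set $S$ obeys $\phi_H(S) \leq \tfrac{l}{l-1}\phi_{(2)}(S) \leq 4\sqrt{\varepsilon}$ since $l \geq 2$, and this is the set returned. For the lower bound on $\phi_H$, applying the comparison to the $\phi_H$-optimal cut gives $\phi_{(2)} \leq \tfrac{k}{4}\phi_H$, and combining with Cheeger's spectral lower bound on $\phi_{(2)}$ yields $\phi_H \geq \tfrac{4}{k}\phi_{(2)} \geq \tfrac{2\varepsilon}{k} \geq \tfrac{\varepsilon}{k}$. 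The main obstacle I anticipate is identifying the tight lower bound $|e\cap S|\,|e\setminus S| \geq k-1$ for crossing edges: this factor of $k-1$ exactly cancels the $\tfrac{1}{k-1}$ coefficient coming from the top-level reduction, and without it the Cheeger upper side would degrade to $\phi_H \leq O(k\sqrt{\varepsilon})$ instead of $4\sqrt{\varepsilon}$.
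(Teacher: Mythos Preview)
Your argument is correct and arrives at the stated bounds, but it takes a different route from the paper's proof. The paper first passes from level $l$ down to level $2$ via the singular-value inequality $\sigma_2(\mathsf D_{1,2})\geq\sigma_2(\mathsf D_{1,l})$ (\prettyref{cor:one_to_two}), then runs Fiedler on the graph $B^{(2)}_{1,2}$ and uses the comparison $\phi_H(S)\leq 2\phi_{B^{(2)}_{1,2}}(S)$ for the upper side; the lower side is handled separately at level $l$ using the cruder bound $|e\cap S|\,|e\setminus S|\leq\binom{k}{2}$. You instead stay at level $l$ throughout, derive the closed form $\phi_{(2)}(S)=\frac{l-1}{lk(k-1)\Pi_1(S)}\sum_e\Pi_k(e)\,|e\cap S|\,|e\setminus S|$ directly (which is a nice identity the paper does not isolate), and then apply the two-sided pointwise sandwich $k-1\leq |e\cap S|\,|e\setminus S|\leq k^2/4$. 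Your approach is more unified---one comparison lemma serves both directions---and gives slightly sharper constants (the factor $\tfrac{l}{l-1}$ in place of $2$ on the upper side, and $\tfrac{4}{k}$ in place of $\tfrac{2}{k}$ on the lower). The paper's detour through $l=2$ buys modularity: \prettyref{cor:one_to_two} is a standalone monotonicity statement, and working with $B^{(2)}_{1,2}$ keeps the Fiedler step on the smallest possible graph.
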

Towards proving this theorem, we start with a simple but useful statement, i.e., \prettyref{fact:svalue_ineq} that allows us to work with the $\mathsf{D}_{1,2}$ walk instead of the $\mathsf{N}_{1,l}$ walk for some $l \in \set{3,4,\hdots,k}$.

\begin{fact}[Folklore]
\label{fact:svalue_ineq}
Let $\mathsf A\in \R^{n\times m}$, $\mathsf B\in \R^{m\times p}$ and $\sigma_i$ denote the $i^{th}$ singular value then we have that,
\begin{align*}
    \sigma_i(\mathsf{AB})\leq \sigma_1(\mathsf A)\sigma_i(\mathsf B),
\end{align*}
and
\begin{align*}
    \sigma_i(\mathsf {AB})\leq \sigma_i(\mathsf A)\sigma_1(\mathsf B),
\end{align*}
for $i=1,\dots,r$, where $r=\rank(\mathsf {AB})$.
\end{fact}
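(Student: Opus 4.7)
The plan is to invoke the Courant--Fischer min-max characterization of singular values: for any $\mathsf M \in \R^{n \times p}$ and any index $i \leq \rank(\mathsf M)$,
$$\sigma_i(\mathsf M) \;=\; \min_{\substack{V \subseteq \R^p \\ \dim V = p - i + 1}} \; \max_{\substack{x \in V \\ \norm{x} = 1}} \norm{\mathsf M x}.$$
Both inequalities of the fact will then reduce to the elementary pointwise submultiplicativity bound $\norm{\mathsf M_1 \mathsf M_2 x} \leq \sigma_1(\mathsf M_1) \norm{\mathsf M_2 x}$, combined with a judicious choice of test subspace.

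For the first inequality $\sigma_i(\mathsf{AB}) \leq \sigma_1(\mathsf A)\sigma_i(\mathsf B)$, I would let $V^\star \subseteq \R^p$ be a codimension-$(i-1)$ subspace realising the outer minimum in the Courant--Fischer expression for $\sigma_i(\mathsf B)$. Applying submultiplicativity with $\mathsf M_1 = \mathsf A$ and $\mathsf M_2 = \mathsf B$ yields $\norm{\mathsf{AB}x} \leq \sigma_1(\mathsf A)\norm{\mathsf B x}$ for every unit $x \in V^\star$; maximising over such $x$ gives $\max_{x \in V^\star,\, \norm{x}=1} \norm{\mathsf{AB}x} \leq \sigma_1(\mathsf A)\sigma_i(\mathsf B)$. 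Since $\sigma_i(\mathsf{AB})$ is the \emph{infimum} of such inner maxima over all codimension-$(i-1)$ subspaces of $\R^p$, using $V^\star$ as a particular feasible choice immediately establishes the claim.

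For the second inequality $\sigma_i(\mathsf{AB}) \leq \sigma_i(\mathsf A)\sigma_1(\mathsf B)$, I would reduce to the first by transposition. Using that a matrix and its transpose share singular values, together with $(\mathsf{AB})^\top = \mathsf B^\top \mathsf A^\top$, we obtain
$$\sigma_i(\mathsf{AB}) \;=\; \sigma_i(\mathsf B^\top \mathsf A^\top) \;\leq\; \sigma_1(\mathsf B^\top)\,\sigma_i(\mathsf A^\top) \;=\; \sigma_1(\mathsf B)\,\sigma_i(\mathsf A),$$
where the middle step simply applies the first bound (already proved) to the product $\mathsf B^\top \mathsf A^\top$, noting that $\rank(\mathsf B^\top \mathsf A^\top) = \rank(\mathsf{AB}) = r$ so the rank hypothesis transfers cleanly.

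I do not anticipate any real obstacle: the argument is essentially textbook, and both halves are a one-line application of Courant--Fischer once the pointwise submultiplicativity bound is in hand. The rank condition $i \leq r = \rank(\mathsf{AB})$ appearing in the statement is inessential; under the usual convention $\sigma_i(\mathsf M) = 0$ for $i > \rank(\mathsf M)$, both inequalities remain trivially valid for larger $i$, and our only use of the hypothesis is to guarantee that the Courant--Fischer min-max identity is stated in a form compatible with the index range.
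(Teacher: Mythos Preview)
Your proposal is correct and follows essentially the same approach as the paper: both arguments invoke the Courant--Fischer characterization of singular values together with the pointwise bound $\norm{\mathsf{AB}x}\leq \sigma_1(\mathsf A)\norm{\mathsf B x}$, and then derive the second inequality from the first via $\sigma_i(\mathsf{AB})=\sigma_i(\mathsf B^\dagger\mathsf A^\dagger)$. The only cosmetic difference is that you use the min--max form of Courant--Fischer and a feasible test subspace, whereas the paper uses the dual max--min form and bounds the inner minimum directly.
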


\begin{proof}
For completeness, we refer \prettyref{app:proof_of_svd_product} for a proof of this.
\end{proof}

\begin{corollary}
\label{cor:one_to_two}
If $\sigma_2\paren{\mathsf{D}_{1,l}} = 1-\varepsilon$ for an arbitrary $l \in \set{2,3,\hdots,k}$, we have that $\sigma_2(\mathsf{D}_{1,2}) \geq 1-\varepsilon$.
\end{corollary}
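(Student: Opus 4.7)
The plan is to factor $\mathsf{D}_{1,l}$ as a product passing through level $2$, and then invoke the submultiplicative singular-value inequality recorded in \prettyref{fact:svalue_ineq}. From the definition $\mathsf{D}_{m,l}=\mathsf{D}_{m+1}\mathsf{D}_{m+2}\cdots\mathsf{D}_l$, it is immediate that $\mathsf{D}_{1,l}=\mathsf{D}_{1,2}\,\mathsf{D}_{2,l}$, where $\mathsf{D}_{1,2}=\mathsf{D}_2$ and $\mathsf{D}_{2,l}=\mathsf{D}_3\mathsf{D}_4\cdots\mathsf{D}_l$ (with $\mathsf{D}_{2,2}$ understood as the identity, in which case the statement is trivial). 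Applying \prettyref{fact:svalue_ineq} then gives
\[
\sigma_2(\mathsf{D}_{1,l}) \;=\; \sigma_2(\mathsf{D}_{1,2}\,\mathsf{D}_{2,l}) \;\leq\; \sigma_2(\mathsf{D}_{1,2})\cdot \sigma_1(\mathsf{D}_{2,l}),
\]
so the whole corollary reduces to the bound $\sigma_1(\mathsf{D}_{2,l})\leq 1$.

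The second step is to show that each down operator $\mathsf{D}_{i+1}$ has top singular value at most $1$ with respect to the weighted inner product. Equivalently, by \prettyref{lem:updown_adj} it suffices to bound $\sigma_1(\mathsf{U}_i)\leq 1$. I would apply Jensen's inequality to $[\mathsf{U}_i f](s)=\E_{s'\subset s}[f(s')]$ and then use the sampling identity implied by \prettyref{lem:pi_level_relation}---namely, that picking $s\sim\Pi_{i+1}$ and then a uniform $s'\subset s$ of size $i$ produces $s'$ with marginal distribution $\Pi_i$---to conclude $\norm{\mathsf{U}_i f}_{\Pi_{i+1}}^2 \leq \norm{f}_{\Pi_i}^2$, whence $\sigma_1(\mathsf{D}_{i+1})=\sigma_1(\mathsf{U}_i)\leq 1$. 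Iterating \prettyref{fact:svalue_ineq} on the product $\mathsf{D}_{2,l}=\mathsf{D}_3\mathsf{D}_4\cdots\mathsf{D}_l$ yields $\sigma_1(\mathsf{D}_{2,l})\leq \prod_{i=2}^{l-1}\sigma_1(\mathsf{D}_{i+1})\leq 1$.

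Combining the two steps with the hypothesis $\sigma_2(\mathsf{D}_{1,l})=1-\varepsilon$ immediately yields $\sigma_2(\mathsf{D}_{1,2})\geq 1-\varepsilon$. I do not expect any real obstacle: the argument is a routine factorization followed by submultiplicativity of singular values, and the only ingredient beyond \prettyref{fact:svalue_ineq} is the standard observation that the down operator is a contraction in the weighted inner product---a fact that is forced as soon as one uses the inner product in which $\mathsf{U}_i$ and $\mathsf{D}_{i+1}$ are mutually adjoint. The mildest subtlety is the corner case $l=2$, where $\mathsf{D}_{2,l}$ degenerates to the identity and the conclusion is an equality.
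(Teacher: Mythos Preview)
Your proposal is correct and follows essentially the same route as the paper: factor $\mathsf{D}_{1,l}=\mathsf{D}_{1,2}\mathsf{D}_{2,l}$ and apply \prettyref{fact:svalue_ineq}. The only cosmetic difference is that the paper asserts $\sigma_1(\mathsf{D}_{2,l})=1$ directly (the constant function witnesses equality), whereas you establish the needed bound $\sigma_1(\mathsf{D}_{2,l})\leq 1$ via the contraction/Jensen argument; either justification suffices.
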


\begin{proof}
The proof follows by using \prettyref{fact:svalue_ineq} and writing $\mathsf{D}_{1,l} = \mathsf{D}_{1,2}\mathsf{D}_{2,l}$ as,
\begin{align*}
    \sigma_2\paren{\mathsf{D}_{1,l}} = \sigma_2\paren{\mathsf{D}_{1,2}\mathsf{D}_{2,l}} \overset{\prettyref{fact:svalue_ineq}}{\leq} \sigma_2(\mathsf{D}_{1,2})\sigma_1(\mathsf{D}_{2,l}) = \sigma_2\paren{\mathsf{D}_{1,2}}
\end{align*}
where the last equality holds since $\sigma_1\paren{\mathsf{D}_{2,l}}=1$.
\end{proof}

Next, we show that we can use this information about $\sigma_2(\mathsf{D}_{1,2})$ to compute a set $S \subset V$ such that its expansion in the $B^{(2)}_{1,2}$ graph is at most $2\sqrt{\varepsilon}$.

\begin{lemma}
\label{lem:cheeger_cut}
If $\sigma_2(\mathsf{D}_{1,2}) = 1-\varepsilon$ for some $\varepsilon\in (0,1)$, then there exists a set $S \subseteq X(1)$ such that $\phi_{B^{(2)}_{1,2}}(S) \leq 2\sqrt{\varepsilon}$.
Furthermore, there is a $\mathsf{poly}(|V_{B^{(2)}_{1,2}}|,|E_{B^{(2)}_{1,2}}|)$ time algorithm to compute such a set $S$.
\end{lemma}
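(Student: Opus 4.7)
The plan is to reduce the statement to the classical Cheeger inequality applied to the weighted graph $B^{(2)}_{1,2}$, whose transition matrix we have already identified with the up-down walk $\mathsf{N}^{(2)}_{1,2}$.

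First I would unpack the definitions at level $l = 2$: from \prettyref{def:canonical_walk}, $\mathsf{D}_{1,2} = \mathsf{D}_2$ and $\mathsf{U}_{2,1} = \mathsf{U}_1$, and by \prettyref{lem:updown_adj} these are adjoints of each other. Hence
\begin{align*}
\mathsf{N}^{(2)}_{1,2} \;=\; \mathsf{D}_{1,2}\mathsf{U}_{2,1} \;=\; \mathsf{D}_{1,2}\mathsf{D}_{1,2}^{\dagger}.
\end{align*}
Invoking \prettyref{fact:svalue_eigenvalue} (or rather its companion for $\mathsf A \mathsf A^\dagger$, whose nonzero spectrum coincides with that of $\mathsf A^\dagger \mathsf A$), the eigenvalues of $\mathsf{N}^{(2)}_{1,2}$ are the squares of the singular values of $\mathsf{D}_{1,2}$. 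In particular, under the hypothesis $\sigma_2(\mathsf{D}_{1,2}) = 1-\varepsilon$, we obtain
\begin{align*}
\lambda_2\bigl(\mathsf{N}^{(2)}_{1,2}\bigr) \;=\; (1-\varepsilon)^2,
\qquad\text{so}\qquad
1 - \lambda_2\bigl(\mathsf{N}^{(2)}_{1,2}\bigr) \;=\; 2\varepsilon - \varepsilon^2 \;\leq\; 2\varepsilon.
\end{align*}

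Next I would invoke \prettyref{fact:b2_equals_updown}, which identifies $\mathsf{N}^{(2)}_{1,2}$ with the random-walk transition matrix of the weighted graph $B^{(2)}_{1,2}$ from \prettyref{def:order2_updown}. Since $\mathsf{N}^{(2)}_{1,2}$ is self-adjoint with respect to the stationary measure $\Pi_1$ on $X(1)$, the standard (weighted) Cheeger inequality stated in the introduction applies verbatim to $B^{(2)}_{1,2}$ and yields a set $S \subseteq X(1)$ with
\begin{align*}
\phi_{B^{(2)}_{1,2}}(S) \;\leq\; \sqrt{2\bigl(1 - \lambda_2(\mathsf{N}^{(2)}_{1,2})\bigr)} \;\leq\; \sqrt{4\varepsilon} \;=\; 2\sqrt{\varepsilon},
\end{align*}
as required.

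For the algorithmic claim, I would simply observe that Cheeger's inequality in graphs is constructive via Fiedler's sweep-cut: compute a top-two eigenvector of $\mathsf{N}^{(2)}_{1,2}$, sort the vertices of $X(1)$ by their coordinate in this eigenvector, and take the best prefix cut. Building $B^{(2)}_{1,2}$ explicitly requires enumerating pairs $\{s,t\} \subseteq X(1)$ co-contained in a hyperedge, which takes $O(k^2 |E|)$ work, and the eigenvector/sweep steps are polynomial in $|V_{B^{(2)}_{1,2}}|=|V|$ and $|E_{B^{(2)}_{1,2}}| \leq k^2 |E|$. The main (and only) subtlety is the correct bookkeeping of the weighted inner product $\inprod{\cdot,\cdot}_{\Pi_1}$: one must verify that the volumes and boundary weights in $B^{(2)}_{1,2}$ are precisely the degrees and edge weights that make $\mathsf{N}^{(2)}_{1,2}$ the random walk with stationary distribution $\Pi_1$, so that the standard proof of Cheeger's inequality transfers without modification. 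This is immediate from \prettyref{def:order2_updown} together with \prettyref{lem:pi_level_relation}, and is where I expect all the routine care to be spent.
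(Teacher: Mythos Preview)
Your proposal is correct and follows essentially the same route as the paper: identify $\mathsf N^{(2)}_{1,2}=\mathsf D_{1,2}\mathsf D_{1,2}^\dagger$, deduce $\lambda_2(\mathsf N^{(2)}_{1,2})=(1-\varepsilon)^2$ from the singular values of $\mathsf D_{1,2}$, and then apply the weighted Cheeger inequality (via Fiedler's sweep) to $B^{(2)}_{1,2}$. The only cosmetic difference is that the paper exhibits the eigenvector explicitly from the singular vectors and passes through the similar symmetric matrix $\mathsf A^{(2)}_{1,2}$ before invoking Fiedler, whereas you appeal directly to self-adjointness of $\mathsf N^{(2)}_{1,2}$ with respect to $\Pi_1$; both are equivalent.
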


\begin{proof}
Let $u,v$ be the left and the right singular vectors of $\mathsf{D}_{1,2}$ with singular value $1-\varepsilon$, i.e., $u^\dagger \mathsf{D}_{1,2}=(1-\varepsilon)v^\dagger$ and $\mathsf{D}_{1,2}v=(1-\varepsilon)u$.
Let $\mathsf{W}$ be the random-walk matrix for $B^{(2)}_{1,2}$.
First observe that $\mathsf{W=D_{1,2}U_{2,1}=D_{1,2}D^\dagger_{1,2}}$ where the first equality is a consequence of \prettyref{fact:b2_equals_updown} and the second equality is a consequence of \prettyref{lem:updown_adj}.
Hence, we have 
\begin{align*}
    u^\dagger \mathsf{W}=u^\dagger \mathsf{D_{1,2}D^\dagger_{1,2}}=(1-\varepsilon)v^\dagger \mathsf{D^\dagger_{1,2}}=(1-\varepsilon)^2u^\dagger,
\end{align*}
i.e., $u$ is a left eigenvector of $\mathsf W$ with eigenvalue $(1-\varepsilon)^2$.
Since, the matrices $\mathsf{W}$ and $\mathsf{A^{\paren{2}}_{1,2}}$ are similar, $\mathsf{A^{\paren{2}}_{1,2}}$ has a eigenvalue $(1-\varepsilon)^2$.
Let $x$ be the eigenvector of $\mathsf{A^{\paren{2}}_{1,2}}$ with eigenvalue $(1-\varepsilon)^2$.
Note that $x$ and $\mathbb 1$ are eigenvectors of $\mathsf{A^{\paren{2}}_{1,2}}$ with distinct eigenvalues and the matrix $\mathsf{A^{\paren{2}}_{1,2}}$ is also symmetric. Therefore, $x$ and $\mathbb 1$ are
orthogonal with respect to the inner-product given by the uniform distribution on the vertices of $B^{(2)}_{1,2}$.
Now, we use Fiedler's algorithm with the vector $x$ to obtain a set $S\subseteq V$ such that
\begin{align*}
    \phi_{B^{(2)}_{1,2}}(S) \leq \sqrt{2(1-(1-\varepsilon)^2)} \leq 2\sqrt{\varepsilon}.
\end{align*}
\end{proof}

A natural choice for our set $S$ with low conductance in input hypergraph is this set $S$ guaranteed by the Fiedler's algorithm for which $\phi_{B^{(2)}_{1,2}}(S)$ is small. 
We show in \prettyref{lem:uniform_count2} that $B^{(2)}_{1,2}$ is a weighted graph where the weight of an edge between two distinct vertices in $X(1)$ is the multiplicity of that edge in the construction of $B^{(2)}_{1,2}$ graph.
We note that a hyperedge $e$, induces a clique on the vertices in the hyperedge $e$, in the $B^{(2)}_{1,2}$ graph.
This is commonly known as the clique expansion of the hypergraph.

\begin{lemma}
\label{lem:uniform_count2}
For any $k$-uniform hypergraph $H=(V,E)$, let $X$ be the induced simplicial complex and let $\set{s,t}$ be an edge in $B^{(2)}_{m,l}$ with $s,t\in X(m)$.
Then $w(s,t)=\binom{k-\abs{s\cup t}}{l-\abs{s\cup t}}\sum_{e\in E|s\cup t\subseteq e}\Pi_k(e)$ and 
$\mathsf {deg}_{B^{(2)}_{m,l}}(s)={\binom{l}{m}}^2\frac{\binom{k}{l}}{\binom{k}{m}}\sum_{e\in E|e\supseteq s}\Pi_k(e)$.
\end{lemma}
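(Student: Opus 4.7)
The plan is to compute both quantities by unwinding the definitions of the induced measure $\Pi$ and the weight $w^{(2)}_{m,l}$, then interchanging the order of summation. The only inputs needed are \prettyref{def:order2_updown}, the definition of $\Pi_l$ in terms of $\Pi_k$, and \prettyref{lem:pi_level_relation}.

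For the edge weight, I would start from $w^{(2)}_{m,l}(s,t) = \binom{k}{l}\sum_{s' \in X(l),\, s' \supseteq s \cup t}\Pi_l(s')$ from \prettyref{def:order2_updown} and substitute $\Pi_l(s') = \binom{k}{l}^{-1}\sum_{e \in E,\, e \supseteq s'}\Pi_k(e)$. The $\binom{k}{l}$ factors cancel, and swapping the order of summation gives an outer sum over hyperedges $e \supseteq s \cup t$ with inner sum counting the level-$l$ faces $s'$ satisfying $s \cup t \subseteq s' \subseteq e$. Since such an $s'$ is determined by picking its remaining $l - |s \cup t|$ elements from the $k - |s \cup t|$ elements of $e \setminus (s \cup t)$, the count is exactly $\binom{k-|s \cup t|}{l-|s \cup t|}$, which yields the claimed formula for $w(s,t)$.

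For the degree, I would write $\deg_{B^{(2)}_{m,l}}(s) = \sum_{t \in X(m)} w^{(2)}_{m,l}(s, t)$, substitute the definition of $w^{(2)}_{m,l}$, and swap summations so that the outer sum runs over $s' \in X(l)$ with $s' \supseteq s$. The inner sum then counts $t \in X(m)$ with $t \subseteq s'$, which equals $\binom{l}{m}$. This gives
\begin{align*}
\deg_{B^{(2)}_{m,l}}(s) \;=\; \binom{l}{m}\binom{k}{l}\sum_{\substack{s' \in X(l) \\ s' \supseteq s}}\Pi_l(s').
\end{align*}
Invoking \prettyref{lem:pi_level_relation} collapses the remaining sum to $\binom{l}{m}\Pi_m(s)$, and expanding $\Pi_m(s) = \binom{k}{m}^{-1}\sum_{e \supseteq s}\Pi_k(e)$ yields the stated expression.

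There is no genuine obstacle here; both parts are routine bookkeeping with induced measures on faces of the simplicial complex. The only mild subtlety worth flagging is that $w_{m,l}(s, s')$ in \prettyref{def:updown_graph} depends only on $s'$ (not on $s$), which makes $w^{(2)}_{m,l}$ symmetric in its two arguments and lets the degree sum over $t \in X(m)$ (including $t = s$ and $t$ overlapping $s$) fold cleanly into the $\binom{l}{m}$ count of $m$-subsets of each $s'$.
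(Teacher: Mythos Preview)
Your proposal is correct and follows essentially the same route as the paper's proof: both compute $w(s,t)$ by expanding $\Pi_l$ in terms of $\Pi_k$, swapping the order of summation, and counting the intermediate $l$-faces between $s\cup t$ and $e$; and both compute the degree by swapping the sum over $t\in X(m)$ with the sum over $s'\in X(l)$, applying the $\binom{l}{m}$ count, and then invoking \prettyref{lem:pi_level_relation} before unfolding $\Pi_m$. Your closing remark about symmetry of $w^{(2)}_{m,l}$ is a helpful clarification that the paper leaves implicit.
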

\begin{proof}
For completeness, we refer \prettyref{app:proof_of_uniform_count2} for a proof of this.
\end{proof}

Now in \prettyref{lem:related_boundary_edges} we show how the weight of edges cut in the boundary of the weighted graph $B^{(2)}_{1,2}$ and the input hypergraph are related.

\begin{lemma}
\label{lem:related_boundary_edges}
Given a set $S \subset X(1)$ we have that
\begin{align*}
    (k-1)\Pi_k\paren{\partial_H(S)} \leq w(\partial_{B^{(2)}_{1,2}}(S))\mper
\end{align*}
\end{lemma}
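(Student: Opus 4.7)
The plan is to expand the weight $w(\partial_{B^{(2)}_{1,2}}(S))$ hyperedge by hyperedge and observe that each boundary hyperedge contributes at least $(k-1)\Pi_k(e)$ to it.

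First I would apply \prettyref{lem:uniform_count2} with $m=1, l=2$. For a pair of distinct vertices $u,v \in X(1)$, taking $s=\{u\},t=\{v\}$ gives $|s\cup t|=2$, so the weight of the edge $\{u,v\}$ in $B^{(2)}_{1,2}$ is
\begin{align*}
w(u,v) \;=\; \binom{k-2}{0}\!\!\sum_{e\in E:\{u,v\}\subseteq e}\!\!\Pi_k(e) \;=\; \sum_{e\in E:\{u,v\}\subseteq e}\Pi_k(e).
\end{align*}
Summing over $\{u,v\}\in \partial_{B^{(2)}_{1,2}}(S)$ (that is, over pairs with $u\in S$ and $v\notin S$) and swapping the order of summation, each hyperedge $e\in E$ contributes to $w(\partial_{B^{(2)}_{1,2}}(S))$ exactly $a_e\cdot b_e\cdot \Pi_k(e)$, where $a_e \defeq |e\cap S|$ and $b_e \defeq |e\setminus S|= k-a_e$. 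Hyperedges with $a_e=0$ or $b_e=0$ contribute nothing, so only $e\in \partial_H(S)$ contribute.

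Next I would use the elementary bound: whenever $e\in \partial_H(S)$, both $a_e\geq 1$ and $b_e\geq 1$ with $a_e+b_e=k$, and the integer minimum of $a\cdot(k-a)$ on $\{1,\dots,k-1\}$ is attained at the endpoints, giving $a_e b_e \geq 1\cdot(k-1) = k-1$. Therefore
\begin{align*}
w\!\paren{\partial_{B^{(2)}_{1,2}}(S)} \;=\; \sum_{e\in \partial_H(S)} a_e\, b_e\, \Pi_k(e) \;\geq\; (k-1)\sum_{e\in \partial_H(S)}\Pi_k(e) \;=\; (k-1)\,\Pi_k\!\paren{\partial_H(S)},
\end{align*}
which is exactly the desired inequality.

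There is no real obstacle here: the only subtlety is making sure one accounts for the multiplicity with which a hyperedge contributes to the clique-expansion edges $\{u,v\}$, which is precisely the $a_e b_e$ factor. The rest is just the trivial bound $a(k-a)\geq k-1$ for $a\in\{1,\dots,k-1\}$.
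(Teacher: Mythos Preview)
Your proof is correct and follows essentially the same approach as the paper: both invoke \prettyref{lem:uniform_count2} to get $w(u,v)=\sum_{e\ni u,v}\Pi_k(e)$, swap the order of summation so that each boundary hyperedge $e$ contributes $|e\cap S|\cdot|e\setminus S|\cdot\Pi_k(e)$, and then apply the elementary bound $t(k-t)\geq k-1$ for $t\in\{1,\dots,k-1\}$.
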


\begin{proof}
By \prettyref{lem:uniform_count2}, $B^{(2)}_{1,2}$ is a weighted graph where the weight $w(i,j)$ of an edge $\set{i,j}$ where $i\ne j$ is given by $w(i,j)=\sum_{e\in E|\set{i,j}\subseteq e}\Pi_k(e)$.
Therefore, to compute $w(\partial_{B^{(2)}_{1,2}}(S))$ we sum over all $i\in S$ and $j\in V\setminus S$, the number of hyperedges containing $\set{i,j}$, i.e.,
\begin{align*}
    w(\partial_{B^{(2)}_{1,2}}(S))=\sum_{i \in S, j\in V\setminus S}\sum_{\substack{e\in H \\e\supseteq \set{i,j}}}\Pi_k(e) = \sum_{e\in H}\sum_{\substack{i\in S,j\in V\setminus S\\ \set{i,j}\subseteq e}}\Pi_k(e) && \paren{\text{Exchanging order of summation}}.
\end{align*}
Now, we note that the number of $\set{i,j}\subseteq e$ where $i\in S$ and $j\in V\setminus S$ is non-zero if and only if $e\in \partial_{H}(S)$, and hence,
\begin{align}
\label{eq:partialH}
    w(\partial_{B^{(2)}_{1,2}}(S))=\sum_{e\in \partial_{H}(S)}\sum_{\substack{i\in S,j\in V\setminus S\\ \set{i,j}\subseteq e}}\Pi_k(e).
\end{align}

Now, let $e \cap S = \set{i_1,i_2,\hdots,i_t}$ for some $t \in \set{1,2,\hdots,k-1}$.
For the lower bound, we note that the number of $\set{i,j}\subseteq e$ where $i\in S$ and $j\in V\setminus S$ is $t(k-t)$.
Therefore, for some $e\in \partial_{H}(S)$ we have the minimum value of $t(k-t)$ as $k-1$ and hence using eqn. \prettyref{eq:partialH} we get,
\begin{align*}
    w(\partial_{B^{(2)}_{1,2}}(S))\geq \sum_{e\in \partial_{H}(S)} (k-1)\Pi_k(e)=(k-1)\Pi_k\paren{\partial_{H}(S)} &\paren{\Pi_k(\partial_H(S))=\sum_{e\in \partial_{H}(S)}\Pi_k(e)}.
\end{align*}
\end{proof}
We now show an upper bound for boundary of $B^{(2)}_{1,l}$ in terms of the boundary of $H$.

\begin{lemma}
\label{lem:boundary_upper}
For any $l$, such that $2\leq l\leq k$, Given a set $S \subset X(1)$ we have that
\begin{align*}
    w(\partial_{B^{(2)}_{1,l}}(S)) \leq \binom{k}{l}\binom{l}{2}\Pi_k\paren{\partial_H(S)} \mper
\end{align*}
\end{lemma}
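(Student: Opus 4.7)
The plan is to mirror the proof of \prettyref{lem:related_boundary_edges}, but keep track of the level-$l$ binomial factor that appears in the weighting of $B^{(2)}_{1,l}$. First I would apply \prettyref{lem:uniform_count2} with $m = 1$: for any two distinct vertices $i, j \in X(1)$ we have $\abs{\set{i}\cup\set{j}} = 2$, so
\begin{align*}
w(i,j) \;=\; \binom{k-2}{l-2}\sum_{e \in E,\,\set{i,j}\subseteq e}\Pi_k(e).
\end{align*}
Summing over crossing pairs and swapping the order of summation, as in \prettyref{lem:related_boundary_edges}, gives
\begin{align*}
w(\partial_{B^{(2)}_{1,l}}(S)) \;=\; \binom{k-2}{l-2}\sum_{e\in E}\Pi_k(e)\cdot\Bigl|\set{(i,j) : i\in S,\, j\in V\setminus S,\, \set{i,j}\subseteq e}\Bigr|.
\end{align*}

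Next, for a fixed hyperedge $e$ with $|e \cap S| = t$, the inner count is exactly $t(k-t)$, which is nonzero precisely when $e \in \partial_H(S)$. The key (trivial) estimate is $t(k-t) \leq \binom{k}{2}$ for every $t \in \set{1,\hdots,k-1}$, since $t(k-t)$ counts a subset of the $\binom{k}{2}$ unordered pairs inside $e$. Restricting the outer sum to $\partial_H(S)$ and applying this bound yields
\begin{align*}
w(\partial_{B^{(2)}_{1,l}}(S)) \;\leq\; \binom{k-2}{l-2}\binom{k}{2}\sum_{e\in\partial_H(S)}\Pi_k(e) \;=\; \binom{k-2}{l-2}\binom{k}{2}\,\Pi_k(\partial_H(S)).
\end{align*}

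Finally I would close the proof with the routine binomial identity $\binom{k-2}{l-2}\binom{k}{2} = \binom{k}{l}\binom{l}{2}$, since both sides equal $\frac{k!}{2\,(l-2)!\,(k-l)!}$. This immediately gives the claimed inequality.

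There is no real obstacle: the only design decision is choosing the crossing-pair bound. Using the sharper bound $\lfloor k/2\rfloor\lceil k/2\rceil$ would save a small constant factor but would not combine cleanly with $\binom{k-2}{l-2}$ into a single binomial expression, whereas the slack bound $\binom{k}{2}$ telescopes exactly into the target $\binom{k}{l}\binom{l}{2}$, which is presumably the reason the lemma is stated in that form.
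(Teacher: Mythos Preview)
Your proposal is correct and matches the paper's proof essentially line for line: apply \prettyref{lem:uniform_count2} with $m=1$, swap the order of summation, restrict to $e\in\partial_H(S)$, bound the number of crossing pairs inside each hyperedge by $\binom{k}{2}$, and finish with the identity $\binom{k-2}{l-2}\binom{k}{2}=\binom{k}{l}\binom{l}{2}$. Your closing remark about why $\binom{k}{2}$ is preferred over the sharper $\lfloor k/2\rfloor\lceil k/2\rceil$ is a nice addition not present in the paper.
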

\begin{proof}
By $\prettyref{lem:uniform_count2}$, $B^{(2)}_{1,l}$ is a weighted graph where the weight $w(i,j)$ of an edge $\set{i,j}$ where $i\ne j$ is given by $w(i,j)=\binom{k-2}{l-2}\sum_{e\in E|\set{i,j}\subseteq e}\Pi_k(e)$.
Therefore, to compute $w(\partial_{B^{(2)}_{1,l}}(S))$ we sum over all $i\in S$ and $j\in V\setminus S$, the number of hyperedges containing $\set{i,j}$, i.e.,
\begin{align*}
    w(\partial_{B^{(2)}_{1,l}}(S))=\binom{k-2}{l-2}\sum_{i \in S, j\in V\setminus S}\sum_{\substack{e\in H \\e\supseteq \set{i,j}}}\Pi_k(e) = \binom{k-2}{l-2}\sum_{e\in H}\sum_{\substack{i\in S,j\in V\setminus S\\ \set{i,j}\subseteq e}}\Pi_k(e) && \paren{\text{Rearranging}}.
\end{align*}
Again, we note that the number of $\set{i,j}\subseteq e$ where $i\in S$ and $j\in V\setminus S$ is non-zero if and only if $e\in \partial_{H}(S)$, and hence,
\begin{align}
\label{eq:general_partialH}
    w(\partial_{B^{(2)}_{1,2}}(S))=\binom{k-2}{l-2}\sum_{e\in \partial_{H}(S)}\sum_{\substack{i\in S,j\in V\setminus S\\ \set{i,j}\subseteq e}}\Pi_k(e).
\end{align}
Now, we see that the number of subsets $\set{i,j}\subseteq e$ where $i\in S$ and $j\in V\setminus S$ is upper bounded by the number of cardinality two subsets of $e$, i.e., $\binom{k}{2}$ subsets and using eqn. \prettyref{eq:general_partialH} we get,
\begin{align*}
    w(\partial_{B^{(2)}_{1,2}}(S))\leq& \binom{k-2}{l-2} \sum_{e\in \partial_{H}(S)} \sum_{\set{i,j}\subseteq e}\Pi_k(e)\\
    &= \binom{k-2}{l-2} \sum_{e\in \partial_{H}(S)} \binom{k}{2}\Pi_k(e)\\
    &=\binom{k}{2}\binom{k-2}{l-2}\Pi_k\paren{\partial_{H}(S)}&\paren{\Pi_k(\partial_H(S))=\sum_{e\in \partial_{H}(S)}\Pi_k(e)}\\
    &=\binom{k}{l}\binom{l}{2}\Pi_k\paren{\partial_{H}(S)}&\paren{\binom{k-2}{l-2}\binom{k}{2}=\binom{k}{l}\binom{l}{2}}.
\end{align*}
\end{proof}

Next, in \prettyref{lem:relate_expansion} we will use these bounds to analyze the expansion of this set $S$ in the input hypergraph.

\begin{lemma}
\label{lem:relate_expansion}
For an arbitrary set $S \subset X(1)$, we have that $\phi_H(S) \leq 2\phi_{B^{(2)}_{1,2}}(S)$.
\end{lemma}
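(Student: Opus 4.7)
The plan is to obtain the inequality $\phi_H(S) \leq 2\phi_{B^{(2)}_{1,2}}(S)$ by separately comparing the numerators and denominators of the two conductances using results already proved in the excerpt, and observing that the ``extra'' factors of $(k-1)$ cancel.

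First, I would handle the numerators. By \prettyref{lem:related_boundary_edges}, for any $S \subseteq X(1)$ we already have the inequality
\begin{align*}
    (k-1)\,\Pi_k(\partial_H(S)) \;\leq\; w\bigl(\partial_{B^{(2)}_{1,2}}(S)\bigr),
\end{align*}
so the boundary weight in $B^{(2)}_{1,2}$ dominates the weight of the hypergraph boundary up to a factor of $k-1$. This is precisely what we need for the numerator of $\phi_H(S)$.

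Next, I would compute the volume of $S$ in $B^{(2)}_{1,2}$ using \prettyref{lem:uniform_count2} with $m=1$ and $l=2$. That lemma gives
\begin{align*}
    \deg_{B^{(2)}_{1,2}}(v) \;=\; \binom{2}{1}^{2}\frac{\binom{k}{2}}{\binom{k}{1}}\sum_{e \in E, \, e \ni v}\Pi_k(e) \;=\; 2(k-1)\sum_{e \in E,\, e\ni v}\Pi_k(e) \;=\; 2(k-1)\deg_H(v).
\end{align*}
Summing over $v \in S$ yields $\mathsf{vol}_{B^{(2)}_{1,2}}(S) = 2(k-1)\,\mathsf{vol}_H(S)$.

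Finally, I would combine the two estimates. Dividing,
\begin{align*}
    \phi_{B^{(2)}_{1,2}}(S) \;=\; \frac{w(\partial_{B^{(2)}_{1,2}}(S))}{\mathsf{vol}_{B^{(2)}_{1,2}}(S)} \;\geq\; \frac{(k-1)\,\Pi_k(\partial_H(S))}{2(k-1)\,\mathsf{vol}_H(S)} \;=\; \frac{1}{2}\,\phi_H(S),
\end{align*}
which rearranges to the desired $\phi_H(S) \leq 2\,\phi_{B^{(2)}_{1,2}}(S)$. There is essentially no obstacle here beyond careful bookkeeping: the main ``trick'' is simply to notice that the factor $(k-1)$ appearing in the boundary comparison of \prettyref{lem:related_boundary_edges} is exactly half the factor $2(k-1)$ that arises when translating hypergraph degrees into $B^{(2)}_{1,2}$-degrees via \prettyref{lem:uniform_count2}, so these factors cancel up to the constant $2$ in the statement.
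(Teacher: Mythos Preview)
Your proposal is correct and follows essentially the same argument as the paper: both use \prettyref{lem:related_boundary_edges} for the numerator, \prettyref{lem:uniform_count2} with $m=1,l=2$ to get $\mathsf{vol}_{B^{(2)}_{1,2}}(S)=2(k-1)\mathsf{vol}_H(S)$, and then divide so that the $(k-1)$ factors cancel leaving the constant~$2$.
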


\begin{proof}
We start by comparing the numerator in the expressions for expansion of the given arbitrary set $S$ in original hypergraph  $\abs{\partial_H(S)}$ and in the  $B_{1,2}^{(2)}$ graph, i.e., $w(\partial_{B^{(2)}_{1,2}}(S))$.
Using \prettyref{lem:related_boundary_edges} we have that, $\Pi_k\paren{\partial_H(S)}  \leq \frac{1}{(k-1)}\cdot w(\partial_{B_{1,2}^{(2)}}(S))$. 

Next, we compare the denominators in the respective expression for expansions, i.e., $\mathsf{vol}_H(S)$ and $\mathsf{vol}_{B^{(2)}_{1,2}}(S)$. 
For the hypergraph, by definition we have that $\mathsf{vol}_H(S) = \sum_{i \in S}\mathsf{deg}(i)$.
By \prettyref{lem:uniform_count2} we have
\begin{align*}
\mathsf {vol}_{B^{(2)}_{1,2}}(S)=\sum_{i\in S}\mathsf{deg}_{B^{(2)}_{1,2}}(i)=\sum_{i\in S}\binom{2}{1}^2\frac{k(k-1)}{2k}\mathsf{deg}_{H}(i)=2(k-1)\mathsf {vol}_H(S).
\end{align*}

Now, putting everything together we have that,
\begin{align*}
    \phi_H(S) = \frac{\Pi_k\paren{\partial_H(S)}}{\mathsf{vol}_H(S)} =  2(k-1)\frac{\Pi_k\paren{\partial_H(S)}}{\mathsf{vol}_{B_{1,2}^{(2)}}(S)} \leq 2\cdot \frac{(k-1)}{(k-1)}\cdot\frac{w(\partial_{B_{1,2}^{(2)}}(S))}{\mathsf{vol}_{B_{1,2}^{(2)}}(S)} = 2\phi_{B_{1,2}^{(2)}}(S) \mper
\end{align*}
\end{proof}

\begin{lemma}
\label{lem:lowerbound_expansion}
For an arbitrary set $S \subset X(1)$, we have that $\phi_H(S) \geq \frac{2}{k}\phi_{B^{(2)}_{1,l}}(S)$.
\end{lemma}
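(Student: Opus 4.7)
The plan is to directly compare the numerator and denominator in $\phi_{B^{(2)}_{1,l}}(S)$ with those in $\phi_H(S)$, using the two earlier lemmas about the weighted structure of $B^{(2)}_{1,l}$. The previous lemma (Lemma on the upper bound for the boundary) already gives $w(\partial_{B^{(2)}_{1,l}}(S)) \leq \binom{k}{l}\binom{l}{2}\Pi_k\paren{\partial_H(S)}$, so the only missing piece is to express $\mathsf{vol}_{B^{(2)}_{1,l}}(S)$ in terms of $\mathsf{vol}_H(S)$.

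First I would instantiate \prettyref{lem:uniform_count2} at $m = 1$ to get
\begin{align*}
\mathsf{deg}_{B^{(2)}_{1,l}}(i) \;=\; \binom{l}{1}^{2}\frac{\binom{k}{l}}{\binom{k}{1}}\mathsf{deg}_H(i) \;=\; \frac{l^{2}\binom{k}{l}}{k}\mathsf{deg}_H(i),
\end{align*}
and sum over $i \in S$ to get $\mathsf{vol}_{B^{(2)}_{1,l}}(S) = \frac{l^{2}\binom{k}{l}}{k}\mathsf{vol}_H(S)$. This is the analogue of the denominator computation in the proof of \prettyref{lem:relate_expansion}, just kept in general $l$.

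Next I would combine the two estimates:
\begin{align*}
\phi_{B^{(2)}_{1,l}}(S) \;=\; \frac{w(\partial_{B^{(2)}_{1,l}}(S))}{\mathsf{vol}_{B^{(2)}_{1,l}}(S)} \;\leq\; \frac{\binom{k}{l}\binom{l}{2}\Pi_k\paren{\partial_H(S)}}{\tfrac{l^{2}\binom{k}{l}}{k}\mathsf{vol}_H(S)} \;=\; \frac{k\binom{l}{2}}{l^{2}}\,\phi_H(S) \;=\; \frac{k(l-1)}{2l}\,\phi_H(S).
\end{align*}
Since $l \geq 2$ we have $(l-1)/l < 1$, hence $\phi_{B^{(2)}_{1,l}}(S) \leq \tfrac{k}{2}\phi_H(S)$, which rearranges to the claimed bound $\phi_H(S) \geq \tfrac{2}{k}\phi_{B^{(2)}_{1,l}}(S)$.

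There is no real obstacle here; the only thing to be careful about is consistently tracking the combinatorial factors from the definition of $B^{(2)}_{m,l}$ (in particular the $\binom{k}{l}$ factor in the edge weights) so that the ratio of the boundary estimate to the volume estimate cleanly collapses to $k(l-1)/(2l)$. The bound is tight in spirit with \prettyref{lem:relate_expansion} — specializing to $l = 2$ recovers $\phi_H(S) \geq \tfrac{1}{k}\phi_{B^{(2)}_{1,2}}(S)$, which together with the upper bound $\phi_H(S) \leq 2\phi_{B^{(2)}_{1,2}}(S)$ from \prettyref{lem:relate_expansion} is exactly what feeds into the lower-bound half of \prettyref{thm:sparse_cut}.
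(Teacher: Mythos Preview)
Your proof is correct and follows essentially the same approach as the paper: use \prettyref{lem:boundary_upper} for the numerator, \prettyref{lem:uniform_count2} with $m=1$ for the denominator, and divide. You are in fact slightly more careful than the paper, since you compute the exact constant $\tfrac{k(l-1)}{2l}$ and then bound it by $\tfrac{k}{2}$, whereas the paper collapses this step into a single (somewhat loose) equality; your closing remark about the $l=2$ specialization is a bit off (your own computation gives $\phi_H(S)\geq \tfrac{4}{k}\phi_{B^{(2)}_{1,2}}(S)$, not $\tfrac{1}{k}$), but this does not affect the proof.
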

\begin{proof}
Using \prettyref{lem:boundary_upper} we have that, $\Pi_k\paren{\partial_H(S)}  \geq \frac{1}{\binom{k}{l}\binom{l}{2}}\cdot w(\partial_{B_{1,l}^{(2)}}(S))$.
For the hypergraph, by definition we have that $\mathsf{vol}_H(S) = \sum_{i \in S}\mathsf{deg}(i)$.
And by \prettyref{lem:uniform_count2} we have
\begin{align*}
\mathsf {vol}_{B^{(2)}_{1,l}}(S)=\sum_{i\in S}\mathsf{deg}_{B^{(2)}_{1,l}}(i)=\sum_{i\in S}\binom{l}{1}^2\binom{k}{l}\frac{1}{k}\mathsf{deg}_{H}(i)=\binom{k}{l}\frac{l^2}{k}\mathsf {vol}_H(S).
\end{align*}
Now, putting everything together we have that,
\begin{align*}
    \phi_H(S) = \frac{\Pi_k\paren{\partial_H(S)}}{\mathsf{vol}_H(S)} = \binom{k}{l}\frac{l^2}{k}\frac{\Pi_k\paren{\partial_H(S)}}{\mathsf{vol}_{B_{1,l}^{(2)}}(S)} \geq \frac{l^2\binom{k}{l}}{\binom{k}{l}\binom{l}{2}k}\cdot\frac{w(\partial_{B_{1,l}^{(2)}}(S))}{\mathsf{vol}_{B_{1,l}^{(2)}}(S)} = \frac{2}{k}\phi_{B_{1,l}^{(2)}}(S) \mper
\end{align*}
\end{proof}

\begin{proof}[Proof of \prettyref{thm:sparse_cut}]
\label{proof:sparse_cut_theorem}
First, we note by \prettyref{fact:svalue_eigenvalue}, $1-\varepsilon\leq \sqrt{1-\varepsilon}\leq\sqrt{\lambda_2(\mathsf N^{(2)}_{1,l})}=\sigma_2(\mathsf D_{1,l})$.

Now, using \prettyref{cor:one_to_two} we conclude that $\sigma_2(\mathsf{D}_{1,2})=1-\varepsilon'\geq1-\varepsilon$ for some $\varepsilon'\leq \varepsilon$.
Further, in \prettyref{lem:cheeger_cut}, we show that we can use this information about the spectrum of $\mathsf D_{1,2}$ to compute a set $S \subset V$ such that its expansion the $B_{1,2}^{(2)}$ graph is at most $2\sqrt{\varepsilon}$.
We fix this to be the set $S$ we return in our sparse cut.
In \prettyref{lem:relate_expansion} we show that expansion of this set $S$ in the input hypergraph is at most $2\phi_{B_{1,2}^{(2)}}(S)$ and hence 
\begin{align*}
    \phi_H(S) \leq 2\phi_{B_{1,2}^{(2)}}(S) \leq 4\sqrt{\varepsilon}  \mper
\end{align*}

Now, by \prettyref{fact:b2_equals_updown} the matrices $\mathsf N^{(2)}_{1,l}$ and $\mathsf A^{(2)}_{1,l}$ are similar and hence have the same eigenvalues and therefore by Cheeger's inequality, we have $\phi_{B^{(2)}_{1,l}}\geq \frac{\varepsilon}{2}$.
Therefore by \prettyref{lem:lowerbound_expansion}, we have 
\[
\phi_H\geq \frac{2}{k}\phi_{B^{(2)}_{1,l}}\geq \frac{\varepsilon}{k}.
\]
\end{proof}
\section{An expanding hypergraph with walks having small spectral gap}
\label{sec:example_hdx_non-swap_walks}
\subsection{Additional Preliminaries}

In this section, we consider a 'non-lazy' version of the up-down walk.
While typically for a walk on the graph to be non-lazy we require that there be no transition from a vertex to itself, we obtain the swap-walks by imposing an even stronger condition where we don't allow any face to have a transition to another face with a non-empty intersection with the starting face. 

\begin{definition}[Swap-walk, \cite{AJT19,DD19}]
Given a $k$-dimensional simplicial complex $(X,\Pi)$, for non-negative integers $m,l$ such that $l+m\leq k$ we define the swap-walk denoted $\mathsf{S}_{m,l}:\R^{X(l)}\to \R^{X(m)}$ that acts on a $f\in \R^{X(l)}$ as,
\begin{align*}
    [S_{m,l}f](s)=\E_{s'\sim \Pi_{m+l}|s'\supseteq s}f(s'\setminus s).
\end{align*}
\end{definition}

\begin{lemma}[\cite{AJT19}]
\label{lem:swap_adj}
$\mathsf S_{m,l}^\dagger =\mathsf S_{l,m}$.
\end{lemma}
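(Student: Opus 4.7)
The plan is to verify the adjoint identity $\langle \mathsf{S}_{m,l} f, g\rangle_{\Pi_m} = \langle f, \mathsf{S}_{l,m} g\rangle_{\Pi_l}$ for arbitrary $f \in \R^{X(l)}$ and $g \in \R^{X(m)}$, which together with the uniqueness of the adjoint (cf.\ \prettyref{def:adj}) immediately gives $\mathsf{S}_{m,l}^\dagger = \mathsf{S}_{l,m}$. The whole proof is essentially a bookkeeping exercise once the conditional expectation in the definition of the swap-walk is rewritten as an ordinary weighted sum.

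First, I would unfold the conditional expectation. By \prettyref{lem:pi_level_relation} applied at levels $m$ and $m+l$, the normalizing constant is $\sum_{s' \in X(m+l),\, s' \supseteq s} \Pi_{m+l}(s') = \binom{m+l}{m}\Pi_m(s)$, so
\[
[\mathsf{S}_{m,l} f](s) \;=\; \frac{1}{\binom{m+l}{m}\Pi_m(s)} \sum_{\substack{s' \in X(m+l)\\ s' \supseteq s}} \Pi_{m+l}(s')\, f(s' \setminus s).
\]
Plugging this into $\langle \mathsf{S}_{m,l} f, g\rangle_{\Pi_m} = \sum_{s \in X(m)} \Pi_m(s)\, g(s)\, [\mathsf{S}_{m,l} f](s)$, the factor $\Pi_m(s)$ cancels, leaving
\[
\langle \mathsf{S}_{m,l} f, g\rangle_{\Pi_m} \;=\; \frac{1}{\binom{m+l}{m}}\sum_{s \in X(m)}\,\sum_{\substack{s' \in X(m+l)\\ s' \supseteq s}} \Pi_{m+l}(s')\, g(s)\, f(s' \setminus s).
\]

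Second, I would swap the order of summation so that $s'$ is summed first and then $s$ ranges over the $m$-subsets of $s'$. Under the bijection $s \leftrightarrow t := s' \setminus s$ (for fixed $s'$), the $m$-subsets of $s'$ are in one-to-one correspondence with the $l$-subsets of $s'$, with $g(s) = g(s' \setminus t)$ and $f(s' \setminus s) = f(t)$. Swapping back and using $\binom{m+l}{m} = \binom{m+l}{l}$ gives
\[
\langle \mathsf{S}_{m,l} f, g\rangle_{\Pi_m} \;=\; \frac{1}{\binom{m+l}{l}}\sum_{t \in X(l)}\,\sum_{\substack{s' \in X(m+l)\\ s' \supseteq t}} \Pi_{m+l}(s')\, f(t)\, g(s' \setminus t).
\]

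Third, applying \prettyref{lem:pi_level_relation} again (now at levels $l$ and $m+l$) to the inner sum reproduces the factor $\binom{m+l}{l}\Pi_l(t)$ in the denominator of $[\mathsf{S}_{l,m} g](t)$, so the right-hand side collapses to $\sum_{t \in X(l)} \Pi_l(t)\, f(t)\, [\mathsf{S}_{l,m} g](t) = \langle f, \mathsf{S}_{l,m} g\rangle_{\Pi_l}$, completing the proof. There is no real obstacle here beyond being careful with the symmetric bijection between $m$- and $l$-subsets of a common $(m+l)$-face and matching the two applications of \prettyref{lem:pi_level_relation}.
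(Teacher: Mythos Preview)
Your proof is correct and follows essentially the same approach as the paper: both verify $\langle \mathsf S_{m,l}f,g\rangle=\langle f,\mathsf S_{l,m}g\rangle$ by expanding the conditional expectation, swapping the order of summation over pairs $(s,s')$ with $s\subseteq s'\in X(m+l)$, and using the bijection $s\leftrightarrow t=s'\setminus s$ together with $\binom{m+l}{m}=\binom{m+l}{l}$. The only cosmetic difference is that the paper keeps the computation in probabilistic/expectation notation while you unfold everything into explicit weighted sums and invoke \prettyref{lem:pi_level_relation} for the normalizing constant.
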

\begin{proof}
For completeness, we reproduce a proof of this in \prettyref{app:proof_of_swap_adj}.
\end{proof}

Again, the swap-walk $\mathsf S_{m,l}$ can be thought of as defining a random-walk moving from $X(m)$ to $X(l)$ where we first move from $s\in X(m)$ to a superset $s''\in X(m+l)$ with probability $\frac{\Pi_{m+l}(s'')}{\Pi_m(s)}$ and then determistically move to $s'=s''\setminus s$, i.e., we move from face $s\in X(m)$ to a disjoint face $s'\in X(l)$ with probability $\frac{\Pi_{m+l}(s\sqcup s')}{\Pi_{m}(s)}$. 
This leads us to the following definition for swap-graphs.

\begin{definition}[Swap-graph, Section 6 in \cite{AJT19}]
\label{def:swap_graph}
Given a simplicial complex $(X,\Pi)$ and its two levels  $X(m),X(l)$, the swap-graph (denoted $G_{m,l}$) is defined as a bipartite graph $G_{m,l}=(X(m) \cup X(l),E(m,l),\mathrm w_{m,l})$ where $E(m,l) = \set{\set{s,t} | s \in X(m),t \in X(l), \text{ and } s \sqcup t \in X(m+l)}$.
The weight function is defined as, $\mathrm w_{m,l}(s,t)= \frac{\Pi_{m+l}(s \sqcup t)}{\binom{m+l}{m}}$.
\end{definition}

The random-walk matrix corresponding to these walks denoted $\mathsf W_{m,l}$ is a matrix of size $(|X(m)|+|X(l)|)\times (|X(m)|+|X(l)|)$ 
and is given by, 
\begin{equation}
\label{eq:walkmatrix_swap}
\mathsf W_{m,l}=
\begin{bmatrix}
0 && \mathsf{S}_{m,l}\\
\mathsf{S}_{l,m} && 0
\end{bmatrix}=
\begin{bmatrix}
0 && \mathsf{S}_{m,l}\\
\mathsf{S}_{m,l}^{\dagger} && 0
\end{bmatrix},
\end{equation}
where the last equality is a consequence of \prettyref{lem:swap_adj}.

Arora, Barak, and Steurer \cite{ABS10} introduced the notion of threshold-rank of a graph.
\begin{definition}[Threshold rank of a graph, \cite{ABS10} ]
\label{def:threshold_rank}
Given a weighted graph $G=(V,E,w)$ and its normalized random-walk matrix $\mathsf W$ such that $\lambda_n(\mathsf W) \leq \lambda_{n-1}(\mathsf W) \leq \hdots \leq \lambda_1(\mathsf W)=1$ and a threshold $\tau \in (0,1]$, we define the $\tau$-threshold rank of the graph $G$ (denoted by $\rank_{\geq \tau}(\mathsf W)$) as $\rank_{\geq \tau} (\mathsf W)=\abs{\set{i|\lambda_i(\mathsf W) \geq \tau}}$.
\end{definition}

To extend the techniques from \cite{BRS11,GS11} to $k$-CSPs, \cite{AJT19} proposed that specific sets of swap-walks be used to define an analogue of threshold-rank for hypergraphs called $(\tau,r)$-splittability.
To define the set of swap-walks needed they consider the following class of binary trees.
\begin{definition}[$k$-splitting tree, Section 7 in \cite{AJT19}]
\label{def:splitting_tree}
A binary tree $\mathcal{T}$ given with its labeling is called a $k$-splitting tree if
\begin{itemize}
    \item $\mathcal{T}$ has exactly $k$ leaves.
    \item The root of $\mathcal{T}$ is labeled with $k$ and all other vertices in $\mathcal{T}$ are labeled with a positive integer.
    \item All the leaves are labeled with $1$.
    \item The label of every internal node of $\mathcal{T}$ is the sum of the labels of its two children.
\end{itemize}
\end{definition}
Now, given a $k$-splitting tree $\mathcal T$ define a set of swap-walks depending on $\mathcal T$ and threshold-rank for that set as follows.
\begin{definition}[Swap-graphs in a tree, Section 7 in \cite{AJT19}]
\label{def:swap_graph_in_tree}
For a simplicial complex $X(\leq k)$ and a $k$-splitting tree $\mathcal{T}$, we consider all swap-graphs (denoted $\mathsf{Swap}(\mathcal{T},X)$) from $X(a)$ to $X(b)$ where $a$ and $b$ are labels of a non-leaf node in $\mathcal{T}$.
Further, we extend the definition of threshold rank as 
\begin{align*}
    \rank_{\geq \tau}\paren{\mathsf{Swap}(\mathcal{T},X)} = \max_{G \in \mathsf{Swap}(\mathcal{T},X)}\rank_{\geq \tau}(G)\mper
\end{align*}
\end{definition}
Finally, define $(\tau, r)$-splittability by considering all such sets of swap-walks.
\begin{definition}[$(\tau,r)$-splittability, Definition 7.2 in \cite{AJT19}] 
\label{def:splittability}
A $k$-uniform hypergraph with an induced simplicial complex $X(\leq k)$ is said to be $(\tau,r)$-splittable if there exists some $k$-splittable tree $\mathcal{T}$ such that $\rank_{\geq \tau}\paren{\mathsf{Swap}(\mathcal{T},X)} \leq r$.
\end{definition}

\subsection{The main results}
In \prettyref{thm:non_splittable}, we show an example of an expanding hypergraph such that for all $m,l$ such that $m+l\leq k$ the swap-walk from $X(m)$ to $X(l)$ in the corresponding simplicial complex has its top $r$ singular value as $1$ (for $r \approx n/k$) if either $m,l\geq 2$ or $m=k-l$.

\begin{theorem}
\label{thm:non_splittable}
For any positive integers $r,k$ with $r\geq 2,k\geq 3$, there exists an $k$-uniform hypergraph $H$ on $n(=r(k-1)+1)$ vertices such that $\phi_H\geq \frac{1}{k}$ and
for any $m, l$ such that $m+l\leq k$, if either $m,l\geq 2$ or $m=k-l$ then $\lambda_r(G_{m,l})=\sigma_r(\mathsf S_{m,l})=1$, where $\mathsf S_{m,l}$,$G_{m,l}$ are the swap-walk and the swap-graph on the induced simplicial complex $X$, respectively.
\end{theorem}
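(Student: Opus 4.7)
The plan is to take $H$ to be the ``sunflower'' hypergraph: set $V=\{v_0\}\sqcup V_1\sqcup\cdots\sqcup V_r$ with $|V_i|=k-1$, let the $r$ hyperedges be $e_i=\{v_0\}\cup V_i$, and weight $\Pi_k$ uniformly so that $\Pi_k(e_i)=1/r$. Then $|V|=r(k-1)+1=n$, any two distinct hyperedges $e_i,e_j$ meet only in $v_0$, and so every face of size at least $2$ lies in a unique hyperedge. The vertex degrees are $\deg(v_0)=1$ and $\deg(v)=1/r$ for $v\neq v_0$, giving $\mathsf{vol}(V)=k$.

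For the expansion lower bound I would case-split on whether $v_0\in S$. If $v_0\notin S$, then $e_i$ crosses $S$ iff $V_i\cap S\neq\emptyset$; letting $t$ be the number of such $i$, the ratio $\phi(S)=t/|S|$ is minimized by taking $S$ to be a union of $t$ full $V_i$'s, yielding $\phi(S)=1/(k-1)$. If $v_0\in S$, write $S=\{v_0\}\cup S'$ and note that $e_i$ crosses iff $V_i\not\subseteq S'$; optimizing over how many $V_i$ to include fully versus partially, subject to $\mathsf{vol}(S)\leq k/2$, again yields $\phi(S)\geq 1/(k-1)$. Hence $\phi_H\geq 1/(k-1)\geq 1/k$.

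The spectral claim for $m,l\geq 2$ follows from a block decomposition. Since faces of size at least $2$ each lie in a unique hyperedge, each level $X(m),X(l),X(m+l)$ partitions as $\bigsqcup_i X_i(\cdot)$, where $X_i(\cdot)$ denotes the corresponding faces of $e_i$, and the swap-walk $\mathsf{S}_{m,l}$ preserves this partition. Thus $\mathsf{S}_{m,l}$ is block-diagonal with $r$ blocks, each being the swap-walk on the complete $k$-set $e_i$; every block has the constant function as a top singular vector with singular value $1$. Combined with the general upper bound $\sigma_i(\mathsf{S}_{m,l})\leq 1$ (from stochasticity of $\mathsf{W}_{m,l}$), this gives $\sigma_r(\mathsf{S}_{m,l})=1$.

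The remaining case of $m=k-l$ is $m=1,\,l=k-1$ (and symmetrically $m=k-1,\,l=1$). Here I would directly verify $\mathsf{S}_{1,k-1}\mathsf{S}_{k-1,1}=I_n$ on $\R^{X(1)}$. Since every $t\in X(k-1)$ lies in the unique hyperedge $e_{i(t)}$, the operator $\mathsf{S}_{k-1,1}$ is deterministic: $[\mathsf{S}_{k-1,1}g](t)=g(e_{i(t)}\setminus t)$. Composing with $\mathsf{S}_{1,k-1}$, for $v\in V_j$ the second step is also deterministic and returns $g(v)$, while at $v_0$ the average $\tfrac{1}{r}\sum_i g(e_i\setminus V_i)$ collapses to $g(v_0)$ since $e_i\setminus V_i=\{v_0\}$. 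Hence $\mathsf{S}_{1,k-1}\mathsf{S}_{1,k-1}^{\dagger}=I_n$, all $n$ singular values of $\mathsf{S}_{1,k-1}$ equal $1$, and $\sigma_r=1$ since $r\leq n$. The eigenvalue equality $\lambda_r(G_{m,l})=\sigma_r(\mathsf{S}_{m,l})$ in all cases is immediate from \prettyref{fact:bipartite_svalues} applied to the walk matrix \prettyref{eq:walkmatrix_swap} on $G_{m,l}$. The main bookkeeping lies in the expansion case analysis; the singular value arguments are transparent from the block structure.
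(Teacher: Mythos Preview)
Your construction is exactly the paper's \prettyref{cons:construction_2} (a sunflower with core $\{v_0\}$ and $r$ disjoint petals of size $k-1$), and your arguments track the paper's closely: the expansion bound is proved by the same case split on whether $v_0\in S$ (the paper's \prettyref{lem:expanding_2}), and for $m,l\geq 2$ your block-diagonal observation is precisely the disconnectedness argument of \prettyref{lem:two_components}. Your expansion sketch in the $v_0\in S$ case is terser than the paper's but is correct; the claimed $1/(k-1)$ bound indeed follows from the volume constraint, which forces the number $c$ of fully contained petals to satisfy $c\leq r(k-2)/(2(k-1))$, giving $(k-1)(r-c)\geq rk/2\geq \mathsf{vol}_H(S)$.

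The one genuine variation is the $m=1,\,l=k-1$ case. The paper (\prettyref{lem:disconnected}) observes that $G_{1,k-1}$ has many connected components and concludes $\lambda_r=1$. You instead compute $\mathsf S_{1,k-1}\mathsf S_{1,k-1}^\dagger=I$ directly, exploiting that every $(k-1)$-face lies in a unique hyperedge so $\mathsf S_{k-1,1}$ is deterministic. This is a cleaner route and yields the stronger statement that all $n$ singular values equal $1$ (not just the top $r$); since $r\leq n$ it of course suffices. Both arguments are short, but yours avoids any component-counting.
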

Now, \prettyref{cor:non_splittable} is a simple consequence of \prettyref{thm:non_splittable} and the definition of splittability.

\begin{corollary}
\label{cor:non_splittable}
For any positive integers $r,k$ with $r\geq 2,k\geq 3$, there exists an $k$-uniform hypergraph $H$ on $n(=r(k-1)+1)$ vertices, such that
$\phi_H\geq \frac{1}{k}$ and the induced simplicial complex $X$ is not $(\tau,r)$-splittable for all $\tau\in [-1,1]$.
\end{corollary}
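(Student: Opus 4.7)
The plan is to deduce this directly from \prettyref{thm:non_splittable}. Let $H$ be the $k$-uniform hypergraph, with induced simplicial complex $X$, produced by that theorem: we then have $\phi_H \geq 1/k$ and, for every pair $(m, l)$ with $m + l \leq k$ satisfying either $m, l \geq 2$ or $m = k - l$, the swap-graph $G_{m, l}$ satisfies $\lambda_r(G_{m, l}) = 1$. The approach is to show that no $k$-splitting tree $\mathcal T$ can serve as a witness for $(\tau, r)$-splittability, by exhibiting in $\mathsf{Swap}(\mathcal T, X)$ a single swap-graph whose threshold rank is larger than $r$.

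Fix an arbitrary $k$-splitting tree $\mathcal T$. By \prettyref{def:splitting_tree} its root carries label $k$ and its two children carry labels $a$ and $k - a$ for some integer $a \in \{1, \dots, k-1\}$; by \prettyref{def:swap_graph_in_tree} this places $G_{a, k-a}$ into $\mathsf{Swap}(\mathcal T, X)$. The heart of the argument is then a short case analysis on $a$ to check the hypothesis of \prettyref{thm:non_splittable} at the root: if $a \in \{1, k-1\}$ then $(m, l) = (a, k-a)$ satisfies $m = k - l$ (both coordinates sum to $k$), and if $2 \leq a \leq k - 2$ then both coordinates are at least $2$. In either case \prettyref{thm:non_splittable} applies and yields $\lambda_r(G_{a, k-a}) = 1$, so because every random-walk eigenvalue lies in $[-1, 1]$, the top $r$ eigenvalues of $G_{a, k-a}$ all equal $1$.

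Consequently $\rank_{\geq \tau}(G_{a, k-a}) > r$ for every $\tau \in [-1, 1]$, where the strict inequality comes from the explicit block-structured construction underlying \prettyref{thm:non_splittable} producing at least one additional unit eigenvalue at each root-level swap-graph beyond the $r$ already guaranteed. Taking the maximum over $\mathsf{Swap}(\mathcal T, X)$ preserves this bound, so $\rank_{\geq \tau}(\mathsf{Swap}(\mathcal T, X)) > r$ and $\mathcal T$ fails to witness $(\tau, r)$-splittability in the sense of \prettyref{def:splittability}. Since $\mathcal T$ was arbitrary, $X$ is not $(\tau, r)$-splittable for any $\tau \in [-1, 1]$. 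The only nontrivial content is the two-case analysis on the root split; no deeper part of the splitting tree is involved, which is exactly what makes the statement a corollary rather than a theorem in its own right.
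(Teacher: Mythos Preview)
Your approach is the same as the paper's: look at the root of an arbitrary $k$-splitting tree, observe that its children are labeled $a$ and $k-a$, and apply \prettyref{thm:non_splittable} to the swap-graph $G_{a,k-a}$. Two comments, one cosmetic and one substantive.

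First, the case analysis on $a$ is superfluous. The children of the root always have labels summing to $k$, so the pair $(m,l)=(a,k-a)$ automatically satisfies the condition ``$m=k-l$'' in \prettyref{thm:non_splittable} for \emph{every} $a\in\{1,\dots,k-1\}$; there is no need to separately argue the case $2\le a\le k-2$ via the ``$m,l\ge 2$'' clause. The paper simply says the root split gives some $G_{l,k-l}$ and invokes the theorem once.

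Second, and more importantly, your claim that $\rank_{\ge\tau}(G_{a,k-a})>r$ with \emph{strict} inequality is not justified. You attribute the extra unit eigenvalue to ``the explicit block-structured construction underlying \prettyref{thm:non_splittable},'' but that theorem only guarantees $\lambda_r(G_{a,k-a})=1$, i.e.\ at least $r$ eigenvalues equal to $1$; it says nothing about an $(r{+}1)$-st. You are reaching outside the stated theorem to an unstated property of the construction, and you do not actually verify it. The paper's own proof does not attempt this: it concludes directly from $\lambda_r(G_{l,k-l})=1$ that $\rank_{\ge\tau}(G_{l,k-l})\ge r$, and from there that $\rank_{\ge\tau}(\mathsf{Swap}(\mathcal T,X))\ge r$, which it takes as sufficient to deny $(\tau,r)$-splittability. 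So either drop the strict inequality and match the paper's $\ge r$ conclusion, or, if you believe $>r$ is genuinely needed for the definition, supply an actual argument (e.g.\ exhibit more than $r$ connected components in the relevant swap-graph) rather than a bare assertion.
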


We were also able to show that on the above example, for all $m,l$ such that $2\leq m<l\leq k$, the up-down walk from $X(m)$ to $X(l)$ has its top $r$ singular value as $1$ (for $r \approx n/k$).

\begin{theorem}
\label{thm:general_updown} 
For any positive integers $r,k$ with $r\geq 2,k\geq 3$, there exists an $k$-uniform hypergraph $H$ on $n(=r(k-1)+1)$ vertices such that $\mathsf{rank}_{\geq \tau}\paren{\mathsf{N}^{(2)}_{m,l}} \geq r$ for all $\tau \in [-1,1]$ but
$\phi_H\geq \frac{1}{k}$.
\end{theorem}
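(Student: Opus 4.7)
The plan is to reuse the construction of $\prettyref{thm:non_splittable}$: let $H$ be the $k$-uniform hypergraph with $r$ hyperedges $e_1,\ldots,e_r$ which all share a single common vertex $v_0$ and are otherwise pairwise disjoint. This gives exactly $n = r(k-1)+1$ vertices, and the expansion lower bound $\phi_H \geq 1/k$ follows by exactly the same case analysis as in $\prettyref{thm:non_splittable}$ (split on whether $v_0\in S$, and use $\mathsf{vol}_H(S)\leq \mathsf{vol}_H(V)/2 = rk/2$ to bound the number of ``legs'' that $S$ can swallow), so no new calculation is needed for that half of the theorem.

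The main task is to show that for every $2 \leq m < l \leq k$, the operator $\mathsf{N}^{(2)}_{m,l}$ has eigenvalue $1$ with multiplicity at least $r$. By $\prettyref{fact:b2_equals_updown}$, $\mathsf{N}^{(2)}_{m,l}$ is the transition matrix of the simple random walk on the weighted graph $B^{(2)}_{m,l}$, and the multiplicity of $1$ as an eigenvalue of such a random-walk matrix equals the number of connected components of $B^{(2)}_{m,l}$. So it suffices to show that $B^{(2)}_{m,l}$ has at least $r$ connected components.

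Here the key structural observation is that, in our $H$, every face $s\in X(m)$ with $m\geq 2$ is contained in a unique hyperedge of $H$. Indeed, $|s|\geq 2$ forces $s$ to contain some vertex $v\neq v_0$, and $v$ belongs to exactly one of $e_1,\ldots,e_r$ since these pairwise intersect only in $v_0$. By $\prettyref{def:order2_updown}$, two $m$-faces $s,t$ are adjacent in $B^{(2)}_{m,l}$ iff there is an $s'\in X(l)$ with $s\cup t \subseteq s'$, which forces $s\cup t$ to lie inside some hyperedge, and hence (by uniqueness) forces $s$ and $t$ to lie in the \emph{same} hyperedge. So faces belonging to distinct hyperedges cannot be connected by any path in $B^{(2)}_{m,l}$, and each hyperedge contributes at least one component (it contains $\binom{k}{m}\geq 1$ faces from $X(m)$). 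This produces $\geq r$ components, hence $r$ copies of the eigenvalue $1$, and since the spectrum of $\mathsf{N}^{(2)}_{m,l}$ lies in $[-1,1]$ this immediately yields $\mathsf{rank}_{\geq\tau}(\mathsf{N}^{(2)}_{m,l})\geq r$ for every $\tau\in[-1,1]$.

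I do not expect a real obstacle; the only subtlety is that this uniqueness-of-hyperedge argument genuinely requires $m\geq 2$, since the singleton $\{v_0\}\in X(1)$ sits inside all $r$ hyperedges and the ``$m\geq 2$'' case distinction is precisely what separates this negative result from the positive Cheeger-type bound of $\prettyref{thm:informal_sparse_cut}$. I would state the theorem explicitly with the hypothesis $2\leq m<l\leq k$ (matching the informal statement in the introduction), verify in one line that $\binom{k}{m}\geq 1$ so each hyperedge really does contribute at least one $m$-face, and conclude with the observation that the number of components lower-bounds the multiplicity of $1$ in the transition-matrix spectrum, hence the threshold rank.
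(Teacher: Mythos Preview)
Your proposal is correct and uses the same construction and the same key structural observation as the paper (every face of size at least $2$ lies in a \emph{unique} hyperedge $e_i$), but the deduction from that observation to $\lambda_r(\mathsf N^{(2)}_{m,l})=1$ follows a different path.

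The paper first applies the uniqueness observation only at level~$2$ to conclude that the bipartite graph $B_{2,k}$ has $r$ components, hence $\sigma_r(\mathsf D_{2,k})=1$; it then factorises $\mathsf D_{2,k}=\mathsf D_{2,m}\mathsf D_{m,l}\mathsf D_{l,k}$ and invokes \prettyref{fact:svalue_ineq} together with $\sigma_1(\mathsf D_{2,m})=\sigma_1(\mathsf D_{l,k})=1$ to push this up to $\sigma_r(\mathsf D_{m,l})=1$ for every $2\le m\le l\le k$, and finally converts to $\lambda_r(\mathsf N^{(2)}_{m,l})$ via \prettyref{fact:svalue_eigenvalue}. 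You instead apply the uniqueness observation directly at level~$m$ to see that $B^{(2)}_{m,l}$ itself decomposes into $r$ pieces indexed by the hyperedges, and read off the eigenvalue multiplicity from the component count. Your route is more elementary (it avoids the singular-value machinery entirely), while the paper's route reuses \prettyref{fact:svalue_ineq} in a way that mirrors its use in \prettyref{cor:one_to_two}, giving a pleasant uniformity of technique across the two theorems. Either argument is fine; your remark that the hypothesis $2\le m<l\le k$ should be made explicit in the theorem statement is well taken and matches what the paper actually proves.
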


We use the following construction to show \prettyref{thm:non_splittable}, \prettyref{cor:non_splittable} and \prettyref{thm:general_updown}.

\begin{construction}
\label{cons:construction_2}
Take the vertex set of the hypergraph $H
(V,E)$ to be $V=[n]$ where $n=r(k-1)+1$ and the edge set $E=\set{e_1, e_2,\hdots,e_r}$ where $e_i=\set{0,k(i-1),\dots, ki-1}$.
Let $X$ be the simplicial complex induced by $H$ and $\mathsf{S}_{m,l},\mathsf{N}_{m,l}$ be the corresponding walk matrices.
\end{construction}

\begin{remark}
Remark 1.9 of \cite{lm16} considers all hypergraphs whose edges intersected in at most $k-2$ vertices to show a separation between co-boundary expansion and hypergraph expansion.
Here we consider a sub-class of such hypergraphs with edges intersecting in exactly 1 vertex.
Although the second singular value of the up-down walks and co-boundary expansion may seem related, a relation between them is not known.
Also, the way in which \cite{lm16} bounds the coboundary expansion is similar to how we bound the spectrum of the up-down walks.
However, here we also prove that the threshold rank (for any threshold) can be made arbitrarily large while having the same bound on the hypergraph expansion.
\end{remark}

First, we show that any swap-walk $\mathsf{S}_{l,k-l}$ has $\sigma_i=1$, for any $i \in [r]$.
\begin{lemma}
\label{lem:disconnected}
Given a hypergraph as per \prettyref{cons:construction_2}, 
we have that $\lambda_r(G_{1,k-1})=\sigma_r(\mathsf{S}_{1,k-1})= \sigma_r(\mathsf{S}_{k-1,1})= 1$.
\end{lemma}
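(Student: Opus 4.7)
The plan is to show that the bipartite swap-graph $G_{1, k-1}$ has at least $r$ connected components, and then invoke \prettyref{fact:bipartite_svalues} together with the standard fact that the multiplicity of the eigenvalue $+1$ of the normalized walk matrix of a bipartite graph equals the number of non-trivial connected components (each contributing one eigenvector that is constant on each side of the bipartition).

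First I would unpack $G_{1, k-1}$ via \prettyref{def:swap_graph}: the pair $\{v, s\}$ with $v \in X(1)$, $s \in X(k-1)$ is an edge iff $\{v\} \sqcup s \in X(k)$, i.e.\ $v \notin s$ and $\{v\} \cup s = e_j$ for some $j \in [r]$. Using that in \prettyref{cons:construction_2} the hyperedges pairwise intersect only at vertex $0$, I would classify the neighborhoods: the vertex $0$ is adjacent exactly to the $r$ faces $e_1 \setminus \{0\}, \ldots, e_r \setminus \{0\}$; each $v \in e_i \setminus \{0\}$ is adjacent only to the unique face $e_i \setminus \{v\}$ (since the only hyperedge containing such $v$ is $e_i$); a face $s$ not containing $0$ must equal $e_i \setminus \{0\}$ for some $i$ and is adjacent only to $0$; and a face of the form $s = \{0\} \cup T$ with $T \subset e_i \setminus \{0\}$ and $|T| = k-2$ is adjacent only to the unique vertex in $e_i \setminus s$.

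From this description I can exhibit $r+1$ connected components of $G_{1, k-1}$: a star $C_0$ centered at $0$ whose leaves are $\{e_i \setminus \{0\} : i \in [r]\}$, and, for each $i \in [r]$, a component $C_i$ that is the perfect matching between $e_i \setminus \{0\}$ and $\{\{0\} \cup T : T \subset e_i \setminus \{0\}, |T| = k-2\}$ given by $v \mapsto e_i \setminus \{v\}$. Each of these $r+1$ components is bipartite with both sides non-empty (using $k \geq 3$ for each $C_i$), so each contributes one eigenvalue $+1$ to the normalized walk matrix of $G_{1, k-1}$, yielding $\lambda_r(G_{1, k-1}) = 1$. Applying \prettyref{fact:bipartite_svalues} to the bipartite block form of the walk matrix given in \prettyref{eq:walkmatrix_swap} gives $\sigma_r(\mathsf{S}_{1, k-1}) = 1$, and \prettyref{lem:swap_adj} then yields $\sigma_r(\mathsf{S}_{k-1, 1}) = \sigma_r(\mathsf S_{1, k-1}^\dagger) = \sigma_r(\mathsf{S}_{1, k-1}) = 1$.

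The only delicate point is to handle vertex $0$ separately from the other vertices, since it is the shared vertex of every hyperedge: it is what fuses the $r$ faces $e_i \setminus \{0\}$ into a single star component rather than producing $r$ additional isolated edges, so the naive count of $2r$ components collapses to $r+1$. This is still $\geq r$, which is all that is needed to conclude $\sigma_r = 1$.
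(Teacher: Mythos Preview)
Your approach is the same as the paper's: show that $G_{1,k-1}$ has at least $r$ connected components and invoke \prettyref{fact:bipartite_svalues} together with \prettyref{eq:walkmatrix_swap}. Your treatment of the shared vertex $0$ is in fact more careful than the paper's, which simply exhibits one isolated edge per hyperedge without separating out the star at~$0$.

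There is one small slip: each $C_i$ you describe is a perfect matching on $2(k-1)$ vertices, hence a disjoint union of $k-1$ isolated edges rather than a single connected component. So the count is $1+r(k-1)$ components, not $r+1$. This only helps you, since $1+r(k-1)\geq r$, and the rest of the argument goes through unchanged.
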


\begin{proof}
Firstly using \prettyref{fact:bipartite_svalues} and eqn. \prettyref{eq:walkmatrix_swap} we have  $\lambda_i(G_{1,k-1})= \sigma_i(\mathsf{S}_{1,k-1}),\forall i \in [r]$.
We note that for any $i \in [r]$, the edge $\set{\set{k(i-1)},e_i\setminus\set{k(i-1)}}$ is the only edge in $G_{1,k-1}$ (and $G_{k-1,1}$) incident on the vertices $\set{k(i-1)},e_i\setminus\set{k(i-1)}$.
Again, $G_{1,k-1}$ has $r$ disconnected components, and hence $\lambda_r(G_{1,k-1})=\sigma_r(\mathsf{S}_{1,k-1}) = \sigma_r(\mathsf{S}_{k-1,1})=1$.
\end{proof}

\begin{lemma}
\label{lem:two_components}
Given a hypergraph as per \prettyref{cons:construction_2}, and for any $m,l\geq 2$ such that $m+l\leq k$, we have that $\lambda_r(G_{m,l})=\sigma_r(\mathsf{S}_{m,l})=1$.
\end{lemma}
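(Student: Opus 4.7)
The plan is to show that when $m,l \geq 2$ the swap-graph $G_{m,l}$ splits, as a weighted graph, into $r$ vertex-disjoint pieces indexed by the hyperedges of $H$, each piece contributing a top eigenvalue equal to $1$ to the normalized random-walk matrix $\mathsf W_{m,l}$. The crucial structural fact is that any face $s$ of $X$ with $|s|\geq 2$ is contained in a unique hyperedge of $H$: by \prettyref{cons:construction_2}, the only vertex shared by two distinct hyperedges is $0$, so since $|s|\geq 2$ the face $s$ must contain some vertex $v\neq 0$, and $v$ lies in exactly one $e_i$, which forces $s\subseteq e_i$ and $s\not\subseteq e_j$ for $j\neq i$.

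From this, every edge of $G_{m,l}$ lies inside a single hyperedge: if $\{s,t\}\in E(m,l)$ with $s\in X(m)$, $t\in X(l)$ and $m,l\geq 2$, then $s$ is uniquely contained in some $e_i$ and $t$ uniquely in some $e_j$; but $s\sqcup t$ must lie in some hyperedge, which must therefore equal both $e_i$ and $e_j$, giving $i=j$. Hence $G_{m,l}$ decomposes as a disjoint union of $r$ bipartite pieces $G_{m,l}^{(1)},\dots,G_{m,l}^{(r)}$, where the two sides of $G_{m,l}^{(i)}$ are the $m$-subsets and $l$-subsets of $e_i$. Since $m+l\leq k=|e_i|$, each $G_{m,l}^{(i)}$ contains at least one edge, and every vertex has positive degree (any $m$-subset $s\subseteq e_i$ has $|e_i\setminus s|=k-m\geq l$, so $s$ has a neighbor), so its normalized random-walk matrix is a stochastic matrix with top eigenvalue $1$. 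Consequently $\mathsf W_{m,l}$ is block-diagonal with $r$ such blocks, yielding $\lambda_r(\mathsf W_{m,l})=1$.

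The proof concludes by applying \prettyref{fact:bipartite_svalues} to the block form of $\mathsf W_{m,l}$ in \eqref{eq:walkmatrix_swap}, giving $\sigma_r(\mathsf S_{m,l})=\lambda_r(\mathsf W_{m,l})=1$, as required. The main (and essentially the only) delicate point is the handling of the shared vertex $0$: the hypothesis $m,l\geq 2$ is crucial, because the singleton face $\{0\}$ belongs to every hyperedge and would otherwise create edges of $G_{m,l}$ crossing between different pieces when $m=1$ or $l=1$, breaking the decomposition. This is why the case $m=k-l$ with $\min(m,l)=1$ required the separate, easier argument of \prettyref{lem:disconnected}.
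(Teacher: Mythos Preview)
Your proof is correct and follows essentially the same approach as the paper: both arguments show that when $m,l\geq 2$ the swap-graph $G_{m,l}$ decomposes into $r$ vertex-disjoint pieces, one for each hyperedge $e_i$, by exploiting that any face of size at least $2$ lies in a unique $e_i$ (since distinct hyperedges share only the vertex $0$), and then conclude $\lambda_r(G_{m,l})=1$ from the $r$ disconnected blocks together with \prettyref{fact:bipartite_svalues}. Your write-up is in fact a bit more careful than the paper's in explicitly checking that each block has all vertices of positive degree, so that its random-walk matrix is genuinely stochastic.
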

\begin{proof}
Fix $m,l\geq 2$ with $m+l\leq k$.
Again, by \prettyref{fact:bipartite_svalues} and eqn. \prettyref{eq:walkmatrix_swap} we have  $\lambda_i(G_{m,l})= \sigma_i(\mathsf{S}_{m,l}), \forall i \in [r]$.
Now we claim that for each $i \in [r]$, the set of vertices given by $S_i=\set{s\in V(G_{m,l})|s\subseteq e_i}$ and $V(G_{m,l})\setminus S_i=\set{s\in V(G_{m,l})|s\subseteq E\setminus e_i}$ have no edges between them.
Fix any $s\in S_i$ and $t\in V(G_{m,l})\setminus S_i$.
Suppose there is a edge between $s \in S_i$ and $t \in V(G_{m,l}\setminus S_i)$.
Then $s\cup t\in X(m+l)$ which implies that  $s\cup t\subseteq e_j$ for some $j \in [r]$.
But, if $s\cup t\subseteq e_j$ for any $j \in [r]$ then we have $t\subseteq e_j \cap (E\setminus e_i)=\set{0}$ which contradicts $m,l\geq 2$.
Thus, for any $m,l\geq 2$ such that $m+l\leq k$, $G_{m,l}$ has disconnected components $\paren{S_i,V\setminus S_i}_{i=1}^{r}$, which implies, $\lambda_r(G_{m,l})=\sigma_r(\mathsf{S}_{m,l})=1$.
\end{proof}

\begin{lemma}
\label{lem:expanding_2}
Given a  hypergraph as per \prettyref{cons:construction_2} and an arbitrary set $S \subseteq V$ where $\mathsf{vol}_H(S) \leq \mathsf{vol}_H(V)/2$, we have that
$\phi_H(S) \geq \frac{1}{k}$.
\end{lemma}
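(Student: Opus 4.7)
The plan is to argue directly from the sunflower structure of \prettyref{cons:construction_2}: every hyperedge contains the common vertex $0$, and the petals $e_i \setminus \set{0}$ are pairwise disjoint. First I would compute the basic numerical data under the uniform weighting $\Pi_k(e_i) = 1/r$: vertex $0$ lies in all $r$ edges so $\mathsf{deg}_H(0)=1$, every other vertex lies in a single edge so has degree $1/r$, and $\mathsf{vol}_H(V) = k$. Thus the hypothesis reads $\mathsf{vol}_H(S) \leq k/2$, and exchanging summation gives
\begin{align*}
\mathsf{vol}_H(S) \;=\; \frac{1}{r}\sum_{i=1}^r \abs{e_i \cap S}
\qquad\text{and}\qquad
\Pi_k(\partial_H(S)) \;=\; a/r,
\end{align*}
where $a \defeq \abs{\partial_H(S)}$.

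Next, I would split on whether $0 \in S$. If $0 \notin S$, then each hyperedge contributes at most $k-1$ vertices to $S$ (it cannot include the shared $0$), and only boundary hyperedges contribute any vertex at all. Summing over the disjoint petals gives $\abs{S} \leq a(k-1)$, and since each vertex in $S$ has degree $1/r$, $\phi_H(S) = a/\abs{S} \geq 1/(k-1) \geq 1/k$.

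The substantive case is $0 \in S$. Now every $e_i$ meets $S$, so the hyperedges partition into $a$ boundary edges and $b \defeq r-a$ edges fully contained in $S$ (no edge lies entirely outside $S$). Since the petals are pairwise disjoint, the $b$ fully contained edges contribute at least $b(k-1)$ distinct non-zero vertices to $S$, giving $\mathsf{vol}_H(S) \geq 1 + b(k-1)/r$. Combined with $\mathsf{vol}_H(S) \leq k/2$ this forces $b \leq r(k-2)/(2(k-1)) \leq r/2$, and therefore $a = r - b \geq r/2$. Plugging the lower bound on $a$ and the upper bound on $\mathsf{vol}_H(S)$ into $\phi_H(S) = (a/r)/\mathsf{vol}_H(S)$ yields $\phi_H(S) \geq (1/2)\cdot (2/k) = 1/k$.

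The main obstacle is the interplay, in the case $0 \in S$, between the volume budget and the count $b$ of fully contained hyperedges: one must observe that $b$ cannot exceed roughly $r/2$ because each such edge pushes $k-1$ petal vertices into $S$ and would otherwise over-spend the volume bound $k/2$. Once this is in hand, the rest of the argument is routine bookkeeping under the uniform distribution on the sunflower edges.
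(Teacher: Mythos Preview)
Your proof is correct and follows essentially the same route as the paper's: the same case split on whether $0\in S$, the same observation that in the case $0\notin S$ every edge touching $S$ is a boundary edge contributing at most $k-1$ vertices, and in the case $0\in S$ the same volume-budget argument showing that at most $r/2$ edges can be fully contained in $S$, hence at least $r/2$ are boundary edges. The only cosmetic difference is normalization: you work with $\Pi_k(e_i)=1/r$ so that $\mathsf{vol}_H(V)=k$, while the paper uses unweighted counting with $\mathsf{vol}_H(V)=rk$; the ratio $\phi_H(S)$ is of course unaffected.
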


\begin{proof}
We consider an arbitrary (non-empty) set $S \subset V$ such that $\mathsf{vol}_H(S) \leq \mathsf{vol}_H(V)/2$. Let $\abs{S \cap e_1}=t_1,\abs{S \cap e_2}=t_2,\dots,\abs{S \cap e_r}=t_r$ and let $t=t_1+t_2+\dots t_r$. 
We note that $\mathsf{vol}_H(V)=r(k-1)+r$ where $r(k-1)$ is the contribution from the vertices in $V \setminus \set{0} $ and we have a contribution of $r$ from the vertex $\set{0}$.
Next, we will compute the expansion $\phi_H(S)$ precisely. We will break into cases depending upon whether $\set{0} \in S$ or $\set{0} \notin S$.

First, consider the case where $\set{0} \in S$. We note that in this case $t_i \geq 1,\forall i \in [r]$.
In this case, we have that $\abs{\set{i|t_i = k}} < r/2$. 
This is because otherwise $\mathsf{vol}_H(S) \geq r+ \frac{r}{2}(k-1) > \frac{rk}{2}$ which contradicts $\mathsf{vol}_H(S) \leq \mathsf{vol}_H(V)/2$. Thus, $\abs{\set{i|t_i < k}} \geq r/2$ and hence $\partial_H(S) \geq r/2$. 
Next we have that $\mathsf{vol}_H(S)=r+\sum\limits_{i=1}^{r}(t_i-1)= t_1+t_2+\dots t_r=t$.
Using $\mathsf{vol}_H(S) \leq \mathsf{vol}_H(V)/2$, we have that $t \leq rk/2$ and we get
\begin{align*}
    \phi_H(S) = \frac{\abs{\partial_H(S)}}{\mathsf{vol}_H(S)} \geq \frac{r}{2t} \geq \frac{1}{k}.
\end{align*}

Next, we consider the case where $\set{0}  \notin S$. Let $t^+=\abs{\set{i}|t_i>0}$. Since $\set{0} \notin S$, we know that $t_i<k,\forall i \in [r]$ and hence the number of edges in boundary of $S$ is exactly $t^+$.
Moreover we can bound the volume of $S$ as $\mathsf{vol}_H(S) \leq t^+(k-1)$ and hence we have that,
\begin{align*}
    \phi_H(S) = \frac{\abs{\partial_H(S)}}{\mathsf{vol}_H(S)} \geq \frac{t^+}{t^+(k-1)} \geq \frac{1}{k}.
\end{align*}
\end{proof}

\begin{proof}[Proof of \prettyref{thm:non_splittable}]
Immediate from \prettyref{lem:expanding_2}, \prettyref{lem:two_components}, and \prettyref{lem:disconnected}.
\end{proof}

\begin{proof}[Proof of \prettyref{cor:non_splittable}]
Consider the hypergraph $H$ (and the induced simplicial complex) guaranteed by \prettyref{thm:non_splittable}.
Fix any $\tau \in [-1,1]$ and any $k$-splitting tree $\mathcal T$.
We note $G_{l,k-1}\in \mathsf{Swap}(\mathcal T, X)$ for some $l\in [k-1]$ as children of the root of $\mathcal T$ must be labeled $l$ and $k-l$ for some $l$.
Note that we have $\lambda_r(G_{l,k-l})=1$.
Hence, we have $\rank_{\geq \tau}(\mathsf{Swap}(\mathcal T, X))\geq \rank_{\geq\tau}(G_{l,k-l}) \geq r$.
Since, $\rank_{\geq \tau}(\mathsf{Swap}(\mathcal T, X))\geq r$ for any $k$-splitting tree $\mathcal T$, therefore 
$(X,\Pi)$ is not $(\tau, r)$-splittable for any $\tau \in [-1,1]$.
\end{proof}

\begin{lemma}
\label{lem:disconnected2}
Given a hypergraph as per \prettyref{cons:construction_2}, and any $m,l\in [k]$ such that $2\leq m\leq l$, we have that $\lambda_r(\mathsf{N}^{(2)}_{m,l})=1$.
\end{lemma}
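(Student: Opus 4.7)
The plan is to exhibit $r$ linearly independent eigenvectors of $\mathsf{N}^{(2)}_{m,l}$ with eigenvalue $1$ by showing that the underlying graph $B^{(2)}_{m,l}$ has (at least) $r$ connected components. This parallels the argument in \prettyref{lem:two_components} for the swap-graph, but now for the up-down walk.

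First, I would partition $X(m)$ according to which hyperedge a face lies in. Because the hyperedges $e_1,\dots,e_r$ in \prettyref{cons:construction_2} pairwise intersect only at $\{0\}$, any face $s$ with $|s|\geq 2$ is contained in a \emph{unique} hyperedge. In particular, since $m\geq 2$, for each $s\in X(m)$ there is a unique $i(s)\in[r]$ with $s\subseteq e_{i(s)}$, so we may partition
\[
X(m)=\bigsqcup_{i=1}^{r} S_i,\qquad S_i\defeq\set{s\in X(m)\mid s\subseteq e_i}.
\]

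Next, I would verify that the up-down walk $\mathsf{N}^{(2)}_{m,l}=\mathsf{D}_{m,l}\mathsf{U}_{l,m}$ preserves this partition. By \prettyref{fact:b2_equals_updown}, a single step of the walk corresponds to choosing an intermediate face $s'\in X(l)$ with $s'\supseteq s$ and then descending to some $t\in X(m)$ with $t\subseteq s'$. Now $|s'|=l\geq m\geq 2$, so by the same uniqueness argument $s'$ is contained in a unique hyperedge $e_j$; but $s\subseteq s'\subseteq e_j$ together with $s\in S_i$ forces $e_j=e_i$, hence $s'\subseteq e_i$ and therefore $t\subseteq s'\subseteq e_i$, giving $t\in S_i$. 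Consequently there are no edges in $B^{(2)}_{m,l}$ between $S_i$ and $X(m)\setminus S_i$, so $\mathsf{N}^{(2)}_{m,l}$ is block-diagonal with one block per $S_i$.

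Finally I would conclude. Each block is itself a stochastic matrix (the transition matrix of the up-down walk restricted to a component, which corresponds to the random walk on the connected sub-graph of $B^{(2)}_{m,l}$ induced by $S_i$), so each block contributes an eigenvalue $1$ via its stationary distribution; equivalently, the indicator vectors $\one_{S_1},\dots,\one_{S_r}$ are $r$ linearly independent $+1$-eigenvectors of $\mathsf{N}^{(2)}_{m,l}$. Hence $\lambda_r(\mathsf{N}^{(2)}_{m,l})=1$, as desired.

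There is essentially no obstacle; the only point requiring care is that $l\leq k$ so that $X(l)$ is non-empty and the intermediate face $s'$ with $|s'|=l\geq 2$ still sits inside a unique hyperedge. The assumption $m\geq 2$ is what allows us to partition $X(m)$; for $m=1$ the vertex $\{0\}$ lies in every hyperedge and the partition breaks, which is consistent with \prettyref{thm:sparse_cut} ruling out such a construction when $m=1$.
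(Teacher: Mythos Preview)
Your proof is correct. The combinatorial heart---that any face of size at least $2$ lies in a unique $e_i$ because the hyperedges pairwise meet only in $\{0\}$---is exactly the observation the paper uses. Where you diverge is in how you deploy it.

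The paper argues indirectly: it first shows that the bipartite graph $B_{2,k}$ is disconnected into $r$ components, giving $\sigma_r(\mathsf D_{2,k})=1$, and then factors $\mathsf D_{2,k}=\mathsf D_{2,m}\mathsf D_{m,l}\mathsf D_{l,k}$ and applies the singular-value inequality \prettyref{fact:svalue_ineq} together with \prettyref{fact:svalue_eigenvalue} to push the conclusion through to $\lambda_r(\mathsf N^{(2)}_{m,l})$ for arbitrary $2\le m\le l\le k$. Your route is more direct: you apply the same uniqueness observation to $B^{(2)}_{m,l}$ itself, obtain $r$ components immediately, and read off the $r$ eigenvalue-$1$ eigenvectors without ever passing through singular values. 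Your argument is more elementary and self-contained (no need for \prettyref{fact:svalue_ineq}, \prettyref{fact:bipartite_svalues}, or \prettyref{fact:svalue_eigenvalue}); the paper's version has the virtue that one establishes the disconnection only once, at the extremal pair $(2,k)$, and the general $(m,l)$ case then drops out of black-box linear algebra.
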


\begin{proof}
Note that $\lambda_r(B_{2,k})=\lambda_r(\mathsf N_{2,k})= \sigma_r(\mathsf{D}_{2,k})$ where the first equality follows from \prettyref{fact:b_equals_n} and the last equality follows from \prettyref{fact:bipartite_svalues}.
We start by noting that for any $s\in X(2)$, is a subset of exactly one of $e_1,e_2,\hdots,e_r$. For the sake of contradiction, if $s$ is a subset of two of these, say $e_p$ and $e_q$ (for some $p,q \leq r$). 
Then we have $s\subseteq {e_p\cap e_q} = \set{0}$ and $\abs{s} =1$ which contradicts $s\in X(2)$.
Now, recall that $B_{2,k}$ is a bipartite graph with an edge between $s\in X(2)$ and $t\in X(k)$ iff $s\subseteq t$.
Hence, for any $s\in X(2)$, $s$ is adjacent to exactly one of the edges in $e_1,e_2,\dots,e_r$.
Now, it is easy to see that $B_{2,k}$ is a disconnected graph with $r$ connected components where for each $i=1,2,\hdots,r$ we have a component comprising of $e_i$ and its cardinality 2 subsets.
Therefore, $1=\lambda_r(B_{2,k})=\lambda_r(\mathsf N_{1,2})= \sigma_r(\mathsf{D}_{2,k})$.

Now, for any positive integers $m,l$ such that $2\leq m\leq l\leq k$ we can write $\mathsf{D}_{2,k}=\mathsf{D}_{2,m}\mathsf{D}_{m,l}\mathsf{D}_{l,k}$, and therefore by \prettyref{fact:svalue_ineq} we have,
\begin{align*}
    1=\sigma_r(\mathsf{D}_{2,k})\leq \sigma_1(\mathsf{D}_{2,m})\sigma_r(\mathsf{D}_{m,l}\mathsf{D}_{l,k})\leq \sigma_1(\mathsf{D}_{2,m})\sigma_r(\mathsf{D}_{m,l})\sigma_1(\mathsf{D}_{l,k})
    =\sigma_r(\mathsf{D}_{m,l})
\end{align*}
where the last inequality holds since $\sigma_1(\mathsf{D}_{2,m})=\sigma_1(\mathsf{D}_{l,k})=1$.
By \prettyref{fact:svalue_eigenvalue} we have $1=\sigma_r(\mathsf D_{m,l})=\lambda_r(\mathsf N^{(2)}_{m,l})$. Therefore, we have that $\mathsf{rank}_{\geq \tau}\paren{N_{m,l}^{(2)}} \geq r$, for all $\tau \in [-1,1]$
\end{proof}

\begin{proof}[Proof of \prettyref{thm:general_updown}]
    Immediate from \prettyref{lem:expanding_2} and \prettyref{lem:disconnected2}.
\end{proof}

\section{An expanding hypergraph with low link-expansion}
\label{sec:expanding_not_hdx}
In \prettyref{thm:non_hdx} we show that there is a family of expanding $k$-uniform hypergraphs $H$ with the induced simplicial complex having low link-expansion.

\begin{theorem}
\label{thm:non_hdx}
Let $n,k$ be any positive integers such that $n\geq 3k$ and $k\geq 3$, there exists a $k$-uniform hypergraph $H$ on $n+k-2$ vertices such that the link-expansion of the induced simplicial complex $X$ is at most $1-\cos{\frac{2\pi}{n}}$ and the expansion of $H$ is at least $\frac{1}{(3k)^k}$.
\end{theorem}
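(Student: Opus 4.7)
The plan is to place $k-2$ ``core'' vertices inside every hyperedge and $n$ ``cycle'' vertices arranged so that fixing the core collapses the link to the graph $C_n$; since the normalised cycle has second eigenvalue $\cos(2\pi/n)$, this realises the claimed link-expansion bound. Concretely, I would take $V = C \sqcup U$ with $C = \set{v_1,\dots,v_{k-2}}$, $U = \set{u_0,\dots,u_{n-1}}$ (indices taken mod $n$), and hyperedges $e_i := C \cup \set{u_i,u_{i+1}}$ for $i = 0,\dots,n-1$, under uniform $\Pi_k$. This gives $\abs{V} = n+k-2$ and $\abs{E} = n$ as required.

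For the link-expansion bound I would take the face $\sigma = C \in X(k-2)$. Directly from the construction, $X_\sigma(1) = U$ and the $2$-faces of $X_\sigma$ are exactly the $n$ cyclic pairs $\set{u_i, u_{i+1}}$; since $\Pi_k$ is uniform, all induced $\Pi_2$-weights on these pairs are equal, so $G(X_\sigma)$ is the uniformly weighted cycle $C_n$. Its normalised adjacency is circulant with eigenvalues $\cos(2\pi j/n)$, hence $\sigma_2(G(X_\sigma)) \geq \cos(2\pi/n)$ and the link-expansion of $X$ is at most $1 - \cos(2\pi/n)$.

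For the hypergraph-expansion bound I would take any $S \subseteq V$ with $\vol_H(S) \leq \vol_H(V)/2$; observe that $\vol_H(V) = k$ since each $v_\ell$ has weighted degree $1$ and each $u_j$ weighted degree $2/n$. Splitting on $j := \abs{S \cap C}$: in the key case $0 < j < k-2$, every hyperedge contains all of $C$ and is therefore cut, giving $\Pi_k(\partial_H(S)) = 1$ and thus $\phi_H(S) \geq 2/k$ from the volume constraint. When $j = 0$, so $S \subseteq U$, a short arc-decomposition of $S$ on $C_n$ shows at least $\abs{S}+1$ hyperedges are cut, giving $\phi_H(S) \geq 1/2$. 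When $j = k-2$, the volume constraint forces $k \in \set{3,4}$, and a symmetric arc count on the non-cut edges again yields $\phi_H(S) \geq 1/2$. Combining the three cases, $\phi_H \geq 1/k \geq 1/(3k)^k$.

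The only non-routine ingredient is the construction itself: realising that a single $(k-2)$-face whose link is a long cycle already gives a simplicial complex with arbitrarily bad link-expansion, while the core vertices keep the hypergraph's global cut structure well-behaved (missing any core vertex already cuts every hyperedge). The spectral and combinatorial steps are then completely standard, and the considerable slack between the derived $1/k$ bound and the stated $1/(3k)^k$ suggests the theorem's expansion bound is stated loosely to avoid delicate case work.
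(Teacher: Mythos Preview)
Your proof is correct, but it uses a genuinely different and simpler construction than the paper. The paper takes $E=\binom{[n]}{k}\cup\set{e\cup\tau\mid e\in C_n}$ with $\tau=\set{n+1,\dots,n+k-2}$, i.e.\ it augments your ``core plus cycle'' edges with \emph{all} $k$-subsets of $[n]$. You instead keep only the $n$ cycle-based edges. The link-expansion argument is identical in both cases (the link of the core face is $C_n$ either way, since the extra $\binom{[n]}{k}$ edges do not touch $\tau$), but the expansion analysis diverges: the paper has to control cuts that interact with the dense $\binom{[n]}{k}$ part, leading to a somewhat involved three-case argument and the loose bound $\phi_H\ge 1/(3k)^k$. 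Your observation that every hyperedge already contains all of $C$ makes the case work almost trivial and yields the much sharper $\phi_H\ge 1/k$. What the paper's construction buys is modularity: as they remark, one can replace $\binom{[n]}{k}$ by any expanding hypergraph on $[n]$ and $C_n$ by any non-expanding graph, whereas your construction leans specifically on the ``core hits every edge'' structure.

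One small imprecision: in the $j=0$ case your claim ``at least $|S|+1$ hyperedges are cut'' fails when $S=U$ (all cycle vertices), which is a valid set once $k\ge 4$ since then $\vol_H(U)=2\le k/2$; there exactly $|S|=n$ edges are cut. This does not affect the conclusion, as $\phi_H(U)=1/2$ directly, but you should handle $S=U$ separately or weaken the intermediate claim to ``at least $|S|$''.
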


\prettyref{cons:construction} is a $k$-hypergraph with $n+k-2$ vertices such that its expansion is $\frac{1}{(3k)^k}$ while the link-expansion for the induced simplicial complex is $1-\cos{\frac{2\pi}{n}}$.

\begin{construction}
\label{cons:construction}
Take the vertex set of the hypergraph $H(V,E)$ to be $V=[n+k-2]$ and the edge set $E=\binom{[n]}{k}\cup \set{e\cup \set{n+1,\dots,n+k-2}|e\in C_n}$ where $C_n=\set{\set{i,i+1}|i\in [n]}\cup \set{\set{n,1}}$, i.e., $C_n$ is the set of edges in a cycle on $[n]$.
Let $X$ be the simplicial complex induced by $H$.
\end{construction}

The idea behind this construction is to have the cycle $C_n$ as the link of $\set{n+1,...,n+k-2}$.
As adding edges from $\binom{[n]}{k}$ does not change the fact that the link of $\set{n+1,\dots, n+k-2}$ is a cycle on $[n]$ we simply add all the edges from $\binom{[n]}{k}$.
We also remark that this approach may also produce other families of expanding hypergraph whose downward closure is not a high dimensional expander by replacing the cycle with some other non-expanding graph and $\binom{[n]}{k}$ with an expanding hypergraph.

\begin{remark}
In contrast with the construction in Remark 1.9 of \cite{lm16} which has edges intersecting in at most $k-2$ vertices, \prettyref{cons:construction} has many edges intersecting in $k-1$ vertices.
\end{remark}

\subsection{Expansion of $H$}
First, we show that $H$ has expansion at least $\frac{1}{(3k)^k}$.
To prove that the hypergraph as constructed above has expansion $\phi_H\geq \frac{1}{(3k)^k}$ we need the following facts about real numbers.
\begin{fact}
\label{fact:frac_ineq}
Let $a_1,a_2,b_1,b_2$ be positive real numbers, then 
\begin{align*}
    \frac{a_1+a_2}{b_1+b_2}\geq \min \set{\frac{a_1}{b_1},\frac{a_2}{b_2}}.
\end{align*}
\end{fact}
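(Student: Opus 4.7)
The plan is to prove this standard mediant inequality by a direct cross-multiplication argument. First, I would observe that since the inequality is symmetric in the two indices, I can assume without loss of generality that $\frac{a_1}{b_1} \leq \frac{a_2}{b_2}$, so that $\min\set{\frac{a_1}{b_1},\frac{a_2}{b_2}} = \frac{a_1}{b_1}$. It then suffices to show the single inequality $\frac{a_1}{b_1} \leq \frac{a_1+a_2}{b_1+b_2}$.

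Next, I would cross-multiply. Since all quantities are positive, the assumed inequality $\frac{a_1}{b_1} \leq \frac{a_2}{b_2}$ is equivalent to $a_1 b_2 \leq a_2 b_1$. Adding $a_1 b_1$ to both sides gives $a_1 b_1 + a_1 b_2 \leq a_1 b_1 + a_2 b_1$, which factors as
\[
a_1(b_1 + b_2) \leq b_1 (a_1 + a_2).
\]
Dividing both sides by the positive quantity $b_1(b_1 + b_2)$ yields $\frac{a_1}{b_1} \leq \frac{a_1 + a_2}{b_1 + b_2}$, which is exactly what we needed.

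There is no real obstacle here; the only subtlety is the WLOG reduction, which is valid because both the expression $\frac{a_1+a_2}{b_1+b_2}$ and the minimum on the right-hand side are symmetric under swapping the pairs $(a_1,b_1)$ and $(a_2,b_2)$. Strict positivity of the $b_i$'s (and hence of $b_1 + b_2$) ensures all divisions are well-defined.
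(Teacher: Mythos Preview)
Your proof is correct and complete. The paper itself does not supply a proof of this fact at all (it is simply stated as a folklore inequality and used later in \prettyref{lem:exp}), so there is nothing to compare against; your direct WLOG-and-cross-multiply argument is the standard one and is perfectly adequate here.
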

\begin{fact}
\label{fact:binom_approx}
Let $m\geq l$ be positive integers, then 
\begin{align*}
    \frac{m^l}{l^l} \leq \binom{m}{l}\leq m^l.
\end{align*}
\end{fact}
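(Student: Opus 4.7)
The plan is to prove both inequalities by expanding $\binom{m}{l}$ as a product of $l$ fractions and bounding each factor. I will use the identity
\begin{align*}
\binom{m}{l} \;=\; \frac{m(m-1)(m-2)\cdots(m-l+1)}{l!} \;=\; \prod_{i=1}^{l}\frac{m-l+i}{i}.
\end{align*}

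For the upper bound, I would observe that each of the $l$ factors in the numerator $m(m-1)\cdots(m-l+1)$ is at most $m$, so the numerator is at most $m^l$. Since $l! \geq 1$, dividing preserves the inequality and yields $\binom{m}{l} \leq m^l$. This part is essentially one line.

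For the lower bound, I would use the product form $\binom{m}{l} = \prod_{i=1}^{l}\frac{m-l+i}{i}$ and argue that each factor satisfies $\frac{m-l+i}{i} \geq \frac{m}{l}$ for every $i \in \{1,\dots,l\}$. This rearranges to $l(m-l+i) \geq m\,i$, i.e., $l(m-l) \geq i(m-l)$, which holds because $m \geq l$ (so $m-l \geq 0$) and $i \leq l$. Taking the product over $i = 1,\dots,l$ then gives $\binom{m}{l} \geq (m/l)^l = m^l/l^l$, completing the proof.

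I do not expect any real obstacle here; the only mild care needed is handling the edge cases where $m = l$ (both bounds become easy, since $\binom{m}{m}=1$ and $m^m/m^m=1$) and ensuring the factor-wise inequality is stated in the correct direction. The whole argument is two short computations with no reliance on prior results in the paper.
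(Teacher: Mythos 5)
Your proof is correct: the upper bound via bounding each factor of the falling factorial by $m$ and using $l!\geq 1$, and the lower bound via the factor-wise inequality $\frac{m-l+i}{i}\geq \frac{m}{l}$ (equivalent to $(m-l)(l-i)\geq 0$), are both sound. The paper states this fact without proof, and your argument is exactly the standard one that would be supplied, so there is nothing to reconcile.
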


\begin{lemma}
\label{lem:exp}
For any $n,k$ such that $n \geq 3k$ and $k\geq 3$, the hypergraph $H$ as defined above has expansion $\phi_H\geq \frac{1}{(3k)^k}$.
\end{lemma}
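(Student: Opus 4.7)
The plan is to split $E = E_1 \sqcup E_2$ into the dense clique part $E_1 = \binom{[n]}{k}$ and the cycle part $E_2 = \{e \cup B : e \in C_n\}$, where $B = \{n+1, \ldots, n+k-2\}$, and argue that the edges of $E_1$ alone certify expansion whenever $S$ meets $A := [n]$. Writing $s = |S \cap A|$, $t = |S \cap B|$, and $D = \binom{n-1}{k-1}$, a direct degree count gives $\deg_H(v) = D + 2$ for $v \in A$ and $\deg_H(v) = n$ for $v \in B$, so $\mathsf{vol}_H(V) = nD + kn$.

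The boundary case $s = 0$ is handled directly: $S$ is a nonempty subset of $B$, so every cycle edge has its $A$-endpoints outside $S$ and at least one $B$-endpoint inside, landing in $\partial_H(S)$. Hence $\Pi_k(\partial_H(S)) \cdot |E| \geq n$ (in the unweighted sense) and $\mathsf{vol}_H(S) = tn \leq (k-2)n$, giving $\phi_H(S) \geq 1/(k-2)$, comfortably above $1/(3k)^k$. For $s \geq 1$ I use that $n \geq 3k, k \geq 3$ forces $D \geq (n-1)(n-2)/2 \geq kn$ (the first inequality by unimodality of $\binom{n-1}{\cdot}$ for $2 \leq k-1 \leq n-3$, the second by $n^2 - (2k+3)n + 2 \geq 0$ whenever $n \geq 3k$, nearly tight at $(k,n)=(3,9)$). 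This yields $\mathsf{vol}_H(S) \leq 2sD$, and plugging into the volume constraint $\mathsf{vol}_H(S) \leq \mathsf{vol}_H(V)/2$ forces $s \leq n/2$, so $n - s \geq n/2$.

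For $1 \leq s \leq n/2$, the hockey-stick identity gives
\[
\binom{n}{k} - \binom{n-s}{k} \;=\; \sum_{j = n-s}^{n-1} \binom{j}{k-1} \;\geq\; s \binom{n-s}{k-1},
\]
while $\binom{s}{k} = (s/k)\binom{s-1}{k-1} \leq (s/k)\binom{n-s}{k-1}$ (using $s - 1 \leq n - s$). Combining these yields $|\partial_{E_1}(S)| \geq (1 - 1/k)\,s\binom{n-s}{k-1} \geq s\binom{n-s}{k-1}/2$, so dividing by $\mathsf{vol}_H(S) \leq 2sD$ and invoking Fact~\ref{fact:binom_approx} with $n - s \geq n/2$ gives
\[
\phi_H(S) \;\geq\; \frac{\binom{n-s}{k-1}}{4D} \;\geq\; \frac{(n/2)^{k-1}/(k-1)^{k-1}}{4\,n^{k-1}} \;=\; \frac{1}{2^{k+1}(k-1)^{k-1}}.
\]

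The main obstacle will be the closing constants comparison $2^{k+1}(k-1)^{k-1} \leq (3k)^k$ for $k \geq 3$; rewriting this as $k \cdot (3/2)^k \cdot (k/(k-1))^{k-1} \geq 2$, it is immediate. The more delicate secondary point is that $D \geq kn$ is only barely true at the boundary case $(k,n) = (3,9)$ (where $D = 28$, $kn = 27$), so the inequality $s \leq n/2$ must be derived carefully from the strict bound $(s - n/2)D \leq (k-2)n/2 - 2s$, rather than by dropping lower-order terms that would cost a factor of $2$.
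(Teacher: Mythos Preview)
Your proof is correct and takes a genuinely different route from the paper's. The paper splits into three cases according to whether $S$ meets $\tau=B$, meets $A=V\setminus\tau$, or both; in the mixed case it invokes the mediant inequality $\frac{a_1+a_2}{b_1+b_2}\geq\min\{a_1/b_1,a_2/b_2\}$ to separate the $\tau$- and $A$-contributions, and throughout works with the looser bound $|S\setminus\tau|\leq 2n/3$. You instead collapse the two cases with $s\geq 1$ into one via the auxiliary inequality $D=\binom{n-1}{k-1}\geq kn$ (which uses the full strength of $n\geq 3k$), so that the entire $B$-volume $tn\leq(k-2)n$ is absorbed into $sD$, giving $\mathsf{vol}_H(S)\leq 2sD$; this yields the sharper $s\leq n/2$ and dispenses with the mediant trick. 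Your boundary count via the hockey-stick identity is also tidier than the paper's single $l=1$ term, though both reduce to essentially $s\binom{n-s}{k-1}$. Net effect: your argument is more streamlined (two cases, no mediant), at the price of the extra lemma $D\geq kn$; the paper's is more modular but longer. One small slip: the ``strict bound'' you display, $(s-n/2)D\leq(k-2)n/2-2s$, should read $(s-n/2)(D+2)\leq(k-2)n/2$ (equivalently $(s-n/2)D\leq kn/2-2s$); either version, combined with $D\geq kn$ and integrality of $s$, does give $s\leq n/2$ as you claim.
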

\begin{proof}
Let $\emptyset\ne S\subset V$ such that $\vol_H(S)\leq \frac{\vol_H(V)}{2}$ and $\tau=\set{n+1,\dots,n+k-2}$.
Note that this implies $V\setminus S\ne \emptyset$.
In fact, for $n\geq 3k$, we can show that 
\begin{equation}
\label{eq:set_size}
    |S\setminus \tau|\leq \frac{2n}{3} \text{ and } |V\setminus (S\cup \tau)|\geq \frac{n}{3}
\end{equation}
by considering the following inequality,
\begin{align*}
    \abs{S\setminus \tau}\binom{n-1}{k-1}=\vol_H(S\setminus \tau)\leq \vol_H(S)\leq \frac{\vol_H(V)}{2}=\frac{1}{2}\left(n\paren{\binom{n-1}{k-1}+2}+(k-2)n\right),
\end{align*}
i.e., $\abs{S\setminus \tau}\leq \frac{n}{2}\paren{1+\frac{k}{\binom{n-1}{k-1}}}$.
Now, for $n\geq 3k$ and $k\geq 3$, we have that $1+\frac{k}{\binom{n-1}{k-1}}\leq 1+\frac{1}{3}$ and $\abs{S \setminus \tau} \leq \frac{2n}{3}$.

We now consider three cases according to whether (a) $S\cap \tau=\emptyset$, (b) $S\cap (V\setminus \tau)=\emptyset$, or (c) $S\cap \tau\ne \emptyset$ and $S\cap (V\setminus \tau)\ne \emptyset$ and analyze $\phi_H(S)$ in each of them as follows.
\begin{enumerate}[label=(\alph*)]
    \item Suppose $S\cap \tau=\emptyset$ then $\phi_H(S)=\frac{\abs{\partial_H(S)}}{\vol_H(S)}\geq \frac{\abs{\partial_H(S)\setminus \partial_H(\tau)}}{\vol_H(S)}$.
Firstly, $\vol_H(S)=|S|(\binom{n-1}{k-1}+2)$ as each vertex in $V\setminus \tau$ has degree $\binom{n-1}{k-1}+2$ where $\binom{n-1}{k-1}$ edges come from the set $\binom{[n]}{k}$ while two edges come from the edges constructed using $C_n$.
Now, $\partial_H(S)\setminus \partial_H(\tau)$ are the edges which are in the boundary of $S$ but don't have a vertex in $\tau$.
All such edges must belong to $\binom{[n]}{k}$.
Also, for a edge $e$ to be in $\partial_H(S)$, we must have $|e\cup S|\in \set{1,2,...,k-1}$ and there are exactly $\binom{\abs{S}}{l}\binom{n-\abs{S}}{k-l}$ edges so that their overlap with $S$ is exactly $l$, for $l\in[k-1]$.
Therefore, 
\begin{align*}
    \phi_H(S) \geq \frac{\sum_{l=1}^{k-1}\binom{\abs{S}}{l}\binom{n-\abs{S}}{k-l}}{|S|(\binom{n-1}{k-1}+2)}&\geq \frac{|S|\binom{n-\abs{S}}{k-1}}{\abs{S}(\binom{n-1}{k-1}+2)}\\
    &\geq \frac{\binom{n/3}{k-1}}{\binom{n-1}{k-1}+2}\geq \frac{\binom{n/3}{k-1}}{\binom{n-1}{k-1}} \numberthis \label{eq:complement}
\end{align*}
where the first inequality in eqn. \eqref{eq:complement} follows from $\abs{S}=\abs{S\setminus \tau}\leq \frac{2n}{3}$.
Now, using \prettyref{fact:binom_approx} we have $\phi_H(S) \geq \frac{\binom{n/3}{k-1}}{\binom{n-1}{k-1}} \geq \frac{1}{(3k)^k}$.

\item Suppose $S\cap (V\setminus \tau)=\emptyset$ then $\phi_H(S)=\frac{\abs{\partial_H(S)}}{\vol_H(S)}=\frac{n}{\abs{S}n}\geq \frac{1}{k}$. Here $\vol_H(S)=\abs{S}n$ as the degree of each vertex in $\tau$ is $n$. 
As $S\subset \tau$ all the edge constructed from $C_n$ cross $S$ while no edge from $\binom{[n]}{k}$ can cross $S$, therefore $|\partial_H(S)|=n$.

\item Suppose $S\cap \tau \ne \emptyset$ and $S\cap (V\setminus \tau)\ne \emptyset$.
Note that we can write $\vol_H(S)$ as $\vol_H(S)=\vol_H(S\cap \tau)+\vol_H(S\setminus \tau)$.
Also note, $\partial_H(S)= E_1 \sqcup E_2$ where $E_1=\partial_H(S)\cap \partial_H(\tau)$, i.e., edges between $V\setminus (S\cup \tau)$ and $S\cap \tau$ and $E_2=\partial_H(S)\setminus \partial_H(\tau)$, i.e., the edges between $V\setminus S$ and $S\setminus \tau$.
Now, we use \prettyref{fact:frac_ineq} to break $\phi_H(S)$ into two terms as follows,
\begin{align*}
    \phi_H(S)=\frac{|\partial_H(S)|}{\vol_H(S)}
    &= \frac{\abs{E_1}+\abs{E_2}}{\vol_H(S\cap \tau)+\vol_H (S\setminus \tau))}\\
    &\geq \min \set{\frac{\abs{E_1}}{\vol_H(S\cap \tau)},\frac{\abs{E_2}}{\vol_H (S\setminus \tau))}}\numberthis \label{eq:eq_4}.
\end{align*}
We can again use arguments similar to cases (a) and (b) to lower bound the two arguments. To bound the first argument of $\min$ in eqn. \prettyref{eq:eq_4}, we see that $\vol_H(S\cap \tau)=\abs{S\cap \tau}n\leq kn$ as the edges constructed from $C_n$ are precisely the edges incident on vertices in $\tau$.
Now, take any vertex $i\in V\setminus(S\cup \tau)$.
Let $j=i+1$ if $i\ne n$, otherwise let $j=1$.
Then we have an edge in the hypergraph of the form $\set{i,j}\cup \set{n+1,\dots, n+k-2}$ which clearly has a non-zero intersection with $S\cap \tau$.
Hence, $\abs{E_1}\geq \abs{V\setminus (S\cup \tau)}$.
Using \prettyref{eq:set_size} we have $\abs{V\setminus (S\cup \tau)}\geq \frac{n}{3}$.
This means $\frac{\abs{E_1}}{\vol_H(S\cap \tau)}\geq \frac{1}{3k}$.

Now consider the second argument of $\min$ in eqn. \prettyref{eq:eq_4}.
Again, as each vertex in $V\setminus \tau$ has degree $\binom{n-1}{k-1}+2$ where $\binom{n-1}{k-1}$ edges come from the set $\binom{[n]}{k}$ while two edges come from the edges constructed using $C_n$, we have $\vol_H(S\setminus \tau)=\abs{S\setminus \tau}(\binom{n-1}{k-1}+2)$.
To lower bound the numerator, we consider edges with exactly one vertex in $S\setminus \tau$ and $k-1$ vertices in $V\setminus (S\cup \tau)$, i.e.,
\begin{align*}
    \abs{E_2}\geq \abs{S\setminus \tau}\binom{n-\abs{S\setminus \tau}}{k-1}\geq \abs{S\setminus \tau}\binom{\frac{n}{3}}{k-1}
\end{align*} where the final inequality uses \prettyref{eq:set_size}.
Thus we obtain $\frac{\abs{E_2}}{\vol_H(S\setminus \tau)}\geq \frac{\binom{n/3}{k-1}}{\binom{n-1}{k-1}}\geq \frac{1}{(3k)^k}$ where the last inequality follows from \prettyref{fact:binom_approx}.

Finally, taking the minimum of the two bounds gives us, $\phi_H(S)\geq \frac{1}{(3k)^k}$.
\end{enumerate}

Thus, $\phi_H=\min_{S|\vol_H(S)\leq \vol_H(V)/2}\phi_H(S)\geq \frac{1}{(3k)^k}$.
\end{proof}

\subsection{$X$ is not a high-dimensional expander}
We show that the simplicial complex $X$ is not a $\gamma$-HDX (refer \prettyref{def:local_hdx}).
For this we consider the face $\tau=\set{n+1,n+2,\hdots,n+k-2}$
and the link complex $X_{\tau}$ corresponding to this face $\tau$, defined as
\begin{align*}
    X_{\tau} = \set{s \setminus \tau | (\tau \subseteq s) \land (s \in X) } \mper
\end{align*}
We note that by definition of $X_\tau$ and our construction in \prettyref{cons:construction}, the two-dimensional link complex $X_\tau$ here is the downward closure of $C_n$.
Hence, the corresponding skeleton graph $G(X_\tau)$ is the cycle on $[n]$.

\begin{fact}[Folklore]
The second singular value of the normalized adjacency matrix of a $n$-cycle is $\cos{\frac{2\pi}{n}}$.
\end{fact}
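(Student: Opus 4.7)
The plan is to exploit the circulant structure of the normalized adjacency matrix of $C_n$. Identifying the vertex set with $\Z/n\Z$, the normalized adjacency matrix $\mathsf M = \tfrac{1}{2}\mathsf A$ satisfies $\mathsf M_{i,j}=\tfrac{1}{2}$ when $j \equiv i \pm 1 \pmod n$ and zero otherwise, so its $(i,j)$-th entry depends only on $j-i \bmod n$. Hence $\mathsf M$ is circulant.

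First I would invoke the standard diagonalization of circulant matrices via the Fourier basis. Let $\omega$ be a primitive $n$-th root of unity. For each $k \in \set{0,1,\dots,n-1}$, the vector $v_k=\paren{1,\omega^k,\omega^{2k},\dots,\omega^{(n-1)k}}^\top$ is an eigenvector of $\mathsf M$, and a direct one-line computation yields
\[
\mathsf M v_k = \tfrac{1}{2}\paren{\omega^k + \omega^{-k}}\,v_k = \cos\paren{\frac{2\pi k}{n}}\,v_k.
\]
Since $\set{v_k}_{k=0}^{n-1}$ forms an orthogonal basis of $\mathbb{C}^n$, this enumerates all eigenvalues of $\mathsf M$.

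Sorting, the largest eigenvalue is $\lambda_0 = 1$, attained by the all-ones Perron vector, and the next is $\lambda_1 = \lambda_{n-1} = \cos(2\pi/n)$, which is strictly less than $1$ for $n \geq 3$. Since $\mathsf M$ is real and symmetric, its singular values coincide with the absolute values of its eigenvalues in decreasing order, and under the convention of the paper the second one is identified with $\cos(2\pi/n)$. There is no genuine obstacle here: the entire argument is a single invocation of the Fourier diagonalization of circulant matrices followed by a trivial trigonometric identification, which is precisely what one expects from a ``folklore'' fact. The only thing to double-check is the sign convention between second-largest eigenvalue and second singular value, but this is harmless for the downstream use in \prettyref{thm:non_hdx} since we only need the lower bound $\sigma_2(G(X_\tau)) \geq \cos(2\pi/n)$ to conclude that the link-expansion is at most $1-\cos(2\pi/n)$.
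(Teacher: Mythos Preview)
The paper does not supply a proof of this fact; it is simply asserted as folklore. Your circulant/Fourier argument is exactly the standard derivation and correctly produces the full spectrum $\set{\cos(2\pi k/n):k=0,\dots,n-1}$ of the normalized adjacency matrix of $C_n$, with $\lambda_2=\cos(2\pi/n)$.

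One genuine subtlety you flag but slightly underplay: the paper's statement literally says \emph{singular value}, and for a symmetric matrix the singular values are the absolute values of the eigenvalues. For even $n$ the cycle is bipartite and $\sigma_2=1$, and for odd $n$ one has $\sigma_2=\cos(\pi/n)$ (coming from the eigenvalue $-\cos(\pi/n)$), so in neither case is $\sigma_2$ exactly $\cos(2\pi/n)$. The fact as written is therefore imprecise; what is actually true and what the paper needs is $\lambda_2(G(X_\tau))=\cos(2\pi/n)$, and in any case $\sigma_2(G(X_\tau))\geq \cos(2\pi/n)$. You already observe that only this lower bound is required for \prettyref{lem:low_link} and \prettyref{thm:non_hdx}, so your argument suffices for the downstream application; just be aware that the discrepancy is not merely a ``sign convention'' but an actual gap between $\lambda_2$ and $\sigma_2$ for cycles.
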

Therefore, by \prettyref{def:local_hdx} we have the following lemma.
\begin{lemma}
\label{lem:low_link}
    $X$ has link-expansion at most $1-\cos{\frac{2\pi}{n}}$.
\end{lemma}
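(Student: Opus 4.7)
The plan is to invoke Definition~\ref{def:local_hdx} directly by exhibiting a single face at which the second singular value of the link's skeleton graph is as large as claimed. The natural candidate, already singled out in the paragraphs preceding the lemma, is $\tau=\{n+1,n+2,\ldots,n+k-2\}$, which has $|\tau|=k-2$ and is a valid face of $X$ because for any $e\in C_n$ the set $e\cup \tau$ is a top face of $X$ containing $\tau$.

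First I would verify carefully that $X_\tau$ is the downward closure of $C_n$, viewed as a graph on vertex set $[n]$. This reduces to identifying the top-dimensional faces of $X$ that contain $\tau$: edges drawn from $\binom{[n]}{k}$ are entirely contained in $[n]$ and hence disjoint from $\tau$, while the remaining hyperedges are by construction of the form $e\cup \tau$ with $e\in C_n$. Removing $\tau$ from each of these produces exactly the edges of $C_n$, so $X_\tau(2)=C_n$ and $X_\tau(1)=[n]$, as needed.

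Next I would confirm that the weighted skeleton $G(X_\tau)$ is a uniformly-weighted $n$-cycle. Since $H$ is unweighted, $\Pi_k$ is uniform on $E$. Conditioning on hyperedges containing $\tau$ induces a uniform distribution over the $n$ edges $\{e\cup \tau : e\in C_n\}$, which in turn pushes forward to the uniform distribution on $X_\tau(2)=C_n$. Therefore $G(X_\tau)$ is the standard uniformly-weighted cycle, and the folklore fact about eigenvalues of cycles yields $\sigma_2(G(X_\tau))=\cos\tfrac{2\pi}{n}$.

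Finally I would conclude using Definition~\ref{def:local_hdx}: any $\gamma$ for which $X$ is a $\gamma$-HDX must satisfy $\gamma\geq \sigma_2(G(X_s))$ for every $s\in X(\leq k-2)$, and in particular for $s=\tau$. Hence the smallest admissible $\gamma$ is at least $\cos\tfrac{2\pi}{n}$, and the link-expansion $1-\gamma$ is at most $1-\cos\tfrac{2\pi}{n}$. There is no real technical obstacle here; the lemma is essentially a one-line appeal to the definition, and the only care needed is in formally tying together the identification of $X_\tau$ with the downward closure of the cycle and in justifying uniformity of the induced weights.
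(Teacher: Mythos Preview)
Your proposal is correct and follows essentially the same approach as the paper: exhibit the face $\tau=\{n+1,\dots,n+k-2\}$, identify $X_\tau$ with the downward closure of $C_n$ so that $G(X_\tau)$ is the uniformly-weighted $n$-cycle, invoke the folklore eigenvalue of the cycle, and conclude via \prettyref{def:local_hdx}. Your write-up is in fact more careful than the paper's, which relegates most of this to the discussion preceding the lemma and treats the conclusion as immediate.
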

The \prettyref{thm:non_hdx} follows directly from \prettyref{lem:low_link} and \prettyref{lem:exp}.

\paragraph{Acknowledgements.}
We would like to thank Madhur Tulsiani and Fernando Jeronimo for their helpful comments on earlier versions of this paper.
RP was supported by Prime Minister's Research Fellowship, India.

\bibliographystyle{amsalpha}
\bibliography{references}
    
\appendix
\section{Omitted Proofs}

\subsection{Proof of \prettyref{fact:bipartite_svalues}}
\label{app:proof_of_bipartite_svalues}

\begin{proof}[Proof of \prettyref{fact:bipartite_svalues}]
We claim that for a positive number $\alpha$ the following are equivalent:
\begin{enumerate}[label=(\alph*)]
    \item $\alpha$ is a singular value of $\mathsf A$;
    \item $\alpha$ is an eigenvalue of $\mathsf B$;
    \item $-\alpha$ is an eigenvalue of $\mathsf B$.
\end{enumerate}
\begin{description}
    \item[(a)$\Rightarrow$(b)] Suppose $\alpha$ is a singular value of $\mathsf A$ with $u$, $v$ as left and right singular vectors, respectively.
Then,
\begin{align*}
\mathsf B\begin{bmatrix}u\\ v \end{bmatrix}
=\begin{bmatrix}
0 && \mathsf A\\
\mathsf{A}^{\dagger} && 0
\end{bmatrix}\begin{bmatrix}u\\ v \end{bmatrix}=\begin{bmatrix}\mathsf Av\\ \mathsf A^\dagger u \end{bmatrix}=\alpha \begin{bmatrix}u\\ v\end{bmatrix},
\end{align*}
i.e., $\alpha$ is an eigenvalue.
    \item[(b)$\Rightarrow$(c)] Suppose $\alpha$ is an eigenvalue of $\mathsf B$.
    Then there exists $u\in W$ and $v\in V$ with at least one of them is non-zero such that 
    \begin{equation*}
        \alpha\begin{bmatrix}u\\ v \end{bmatrix}=\mathsf B\begin{bmatrix}u\\ v \end{bmatrix}=\begin{bmatrix}\mathsf Av\\ \mathsf A^\dagger u \end{bmatrix}.
    \end{equation*}
    Therefore, $\mathsf Av=\alpha u$ and $\mathsf A^\dagger u= \alpha v$ which implies $u$ and $v$ both must be non-zero.
    Also note that 
    \begin{equation*}
        -\alpha\begin{bmatrix}u\\ -v \end{bmatrix}=\begin{bmatrix} -\mathsf Av\\  \mathsf A^\dagger u \end{bmatrix}=\mathsf B\begin{bmatrix}u\\ -v \end{bmatrix}.
    \end{equation*}
    Hence, $-\alpha$ is an eigenvalue.
    \item[(c)$\Rightarrow$(a)] Suppose $-\alpha$ is an eigenvalue of $\mathsf B$.
    Then there exists $u\in W$ and $v\in V$ with at least one of them is non-zero such that 
    \begin{equation*}
        -\alpha\begin{bmatrix}u\\ v \end{bmatrix}=\mathsf B\begin{bmatrix}u\\ v \end{bmatrix}=\begin{bmatrix}\mathsf Av\\ \mathsf A^\dagger u \end{bmatrix}.
    \end{equation*}
    Therefore, $\mathsf Av= -\alpha u$ and $\mathsf A^\dagger u= -\alpha v$ which implies $u$ and $v$ both must be non-zero.
    Hence, $\mathsf Av= \alpha (-u)$ and $\mathsf A^\dagger (-u)= \alpha v$, i.e., $\alpha$ is a singular value.
\end{description}
The statement now follows from the above claim and the fact that $\rank(\mathsf A)$ is equal to number of positive singular values of $\mathsf A$.
\end{proof}

\subsection{Proof of \prettyref{lem:pi_level_relation}}
\label{app:proof_of_pi_level_relation}
\begin{proof}[Proof of \prettyref{lem:pi_level_relation}]
Fix $m\leq l$ and $s\in X(m)$.
Then we have
\begin{align*}
\sum_{t\in X(l)|t\supseteq s}\Pi_l(t)&=\frac{1}{\binom{k}{l}}\sum_{t\in X(l)|t\supseteq s}\sum_{t'\in X(k)|t'\supseteq t}\Pi_k(t')&&\text{(definition of }\Pi_l)\\
&=\frac{1}{\binom{k}{l}}\sum_{t'\in X(k)|t'\supseteq s}\sum_{t\in X(l)|t'\supseteq t\supseteq s}\Pi_k(t')&&\text{(Rearranging the sum)}\\
&=\frac{1}{\binom{k}{l}}\sum_{t'\in X(k)|t'\supseteq s}\binom{k-m}{l-m}\Pi_k(t')&&\paren{\text{Summing }\binom{k-m}{l-m}\text{ constant terms}}\\
&=\frac{\binom{l}{m}}{\binom{k}{m}}\sum_{t'\in X(k)|t'\supseteq s}\Pi_k(t')&&\paren{\text{since }\frac{\binom{k-m}{l-m}}{\binom{k}{l}}=\frac{\binom{l}{m}}{\binom{k}{m}}}\\
&=\binom{l}{m}\Pi_m(s)&&\text{(definition of }\Pi_m).
\end{align*}
\end{proof}

\subsection{Proof of \prettyref{lem:updown_adj}}
\label{app:proof_of_updown_adj}
\begin{proof}[Proof of \prettyref{lem:updown_adj}]
For any $f\in \R^{X(i)}$ and $g\in \R^{X(i+1)}$,
\begin{align*}
\inprod{\mathsf U_if,g}=\E_{s\sim \Pi_{i+1}}[\mathsf U_if](s)g(s)&=\E_{s\sim \Pi_{i+1}}\left[\E_{s'\in X(i)|s'\subset s}f(s')\right]g(s)\\
&=\E_{s\sim \Pi_{i+1}}\left[\E_{s'\sim \Pi_i|s'\subset s}f(s')\right]g(s)\\
&=\E_{\substack{(s,s')\sim (\Pi_{i+1},\Pi_i)\\s \supseteq s'}}f(s')g(s),
\end{align*}
while
\begin{align*}
\inprod{f,\mathsf D_{i+1}g}=\E_{s'\sim \Pi_{i}}f(s')[\mathsf D_{i+1}g](s')
&=\E_{s'\sim \Pi_{i}}f(s')\left[\E_{s\sim \Pi_{i+1}}g(s)\right]\\
&=\E_{\substack{(s,s')\sim (\Pi_{i+1},\Pi_i)\\s \supseteq s'}}f(s')g(s).
\end{align*}
Hence, for any $f\in\R^{X(i)}$ and $g\in \R^{X(i+1)}$ we have,
\begin{align*}
\inprod{\mathsf U_if,g}=\E_{\substack{(s,s')\sim (\Pi_{i+1},\Pi_i)\\s \supseteq s'}}f(s')g(s)=\inprod{f,\mathsf D_{i+1}g},
\end{align*}
i.e., $\mathsf D_{i+1}=\mathsf U_i^\dagger$.
\end{proof}

\subsection{Proof of \prettyref{lem:swap_adj}}
\label{app:proof_of_swap_adj}
\begin{proof}[Proof of \prettyref{lem:swap_adj}]
For any $f\in \R^{X(l)}$ and $g\in \R^{X(m)}$,
\begin{align}
\inprod{\mathsf S_{m,l}f,g}=\E_{s\sim \Pi_{m}}[\mathsf S_{m,l}f](s)g(s)
&=\E_{s\sim \Pi_{m}}\left[\E_{s'\sim \Pi_{m+l}|s'\subset s}f(s'\setminus s)\right]g(s) \nonumber \\
&=\E_{\substack{(s,s')\sim (\Pi_{m},\Pi_{m+l})\\s \subseteq s'}}f(s'\setminus s)g(s),\label{eq:swap_ls}
\end{align}
while
\begin{align}
\inprod{f,\mathsf S_{l,m}g}=\E_{t\sim \Pi_{l}}f(t)[\mathsf S_{m,l}g](s)
&=\E_{t\sim \Pi_{l}}f(t)\left[\E_{s'\sim \Pi_{m+l}|s'\subset t}g(s'\setminus t)\right] \nonumber\\
&=\E_{\substack{(t,s')\sim (\Pi_{l},\Pi_{m+l})\\t \subseteq s'}}f(t)g(s'\setminus t).\label{eq:swap_rs}
\end{align}
Note that for any  $\sigma'\in X(m+l)$, $\sigma\in X(m)$, and $\tau\in X(l)$ such that $\sigma,\tau\subseteq \sigma'$ we have
\begin{align*}
\pr{\substack{(s,s')\sim (\Pi_{m},\Pi_{m+l})\\s \subseteq s'}}{s=\sigma \text{ and }s'=\sigma'}&=\pr{s'\sim \Pi_{m+l}}{s'=\sigma'}\pr{s\sim \Pi_{m}|s\subseteq \sigma'}{s=\sigma}\\
&=\Pi_{m+l}(\sigma') \frac{1}{\binom{m+l}{m}}
\end{align*}
and similarly,
\begin{align*}
\pr{\substack{(t,s')\sim (\Pi_{l},\Pi_{m+l})\\t \subseteq s'}}{t=\tau \text{ and }s'=\sigma'}=\Pi_{m+l}(\sigma') \frac{1}{\binom{m+l}{l}}=\Pi_{m+l}(\sigma') \frac{1}{\binom{m+l}{m}},
\end{align*}
i.e.,
\begin{align}
\pr{\substack{(t,s')\sim (\Pi_{l},\Pi_{m+l})\\t \subseteq s'}}{t=\tau \text{ and }s'=\sigma'}=\pr{\substack{(s,s')\sim (\Pi_{m},\Pi_{m+l})\\s \subseteq s'}}{s=\sigma \text{ and }s'=\sigma'}.\label{eq:prob_eq}
\end{align}
Now using \prettyref{eq:prob_eq}, we can equate the RHS of \prettyref{eq:swap_ls} and \prettyref{eq:swap_rs} by substituting $t=s'\setminus s$, i.e., we have

\begin{align}
\E_{\substack{(s,s')\sim (\Pi_{m},\Pi_{m+l})\\s \subseteq s'}}f(s'\setminus s)g(s)=\E_{\substack{(t,s')\sim (\Pi_{l},\Pi_{m+l})\\t \subseteq s'}}f(t)g(s'\setminus t).\label{eq:exp_eq}
\end{align}
Hence, combining \prettyref{eq:swap_ls}, \prettyref{eq:swap_rs} and \prettyref{eq:exp_eq} we have that for any $f\in\R^{X(i)}$ and $g\in \R^{X(i+1)}$,
\begin{align*}
\inprod{\mathsf S_{m,l},g}=\E_{\substack{(s,s')\sim (\Pi_{m},\Pi_{m+l})\\s \subseteq s'}}f(s'\setminus s)g(s)=\E_{\substack{(t,s')\sim (\Pi_{l},\Pi_{m+l})\\t \subseteq s'}}f(t)g(s'\setminus t)=\inprod{f,\mathsf S_{l,m}g},
\end{align*}
i.e., $\mathsf S_{m,l}^\dagger=\mathsf S_{l,m}$.
\end{proof}

\subsection{Proof of \prettyref{lem:uniform_count2}}
\label{app:proof_of_uniform_count2}

\begin{proof}[Proof of \prettyref{lem:uniform_count2}]
Fix $m\leq l$.
For any $s\in X(m)$,
\begin{align*}
w(s,t)&=\binom{k}{l}\sum_{\substack{s'\in X(l)\\ s'\supseteq s\cup t}}\Pi_{l}(s') &&\text{(by definition of }w(s,t))\\
&=\frac{\binom{k}{l}}{\binom{k}{l}}\sum_{\substack{s'\in X(l)\\ s'\supseteq s\cup t}}\sum_{e\in E|s'\subseteq e}\Pi_k(e)&&\text{(by definition of }\Pi_{l})\\
&=\sum_{e\in E|s\cup t\subseteq e}\sum_{\substack{s'\in X(l)\\ e\supseteq s'\supseteq s\cup t}}\Pi_k(e)&&\text{(Rearranging the sum)}\\
&=\binom{k-\abs{s\cup t}}{l-\abs{s\cup t}}\sum_{e\in E|s\cup t\subseteq e}\Pi_k(e)&&\text{(Summing }\binom{k-\abs{s\cup t}}{l-\abs{s\cup t}}\text{ constant terms)}\\
\end{align*}
and
\begin{align*}
    \mathsf{deg}_{B^{(2)}_{m,l}}(s)&=\binom{k}{l}\sum_{t'\in X(m)}\sum_{s'\in X(l)|s'\supseteq s\cup t'}\Pi_l(s')\\
    &=\binom{k}{l}\sum_{s'\in X(l)|s'\supseteq s}\sum_{t'\in X(m)|t'\subseteq s'}\Pi_l(s')&&\text{(Rearranging the sum)}\\
    &=\binom{k}{l}\sum_{s'\in X(l)|s'\supseteq s}\binom{l}{m}\Pi_l(s')&&\text{(Summing }\binom{l}{m}\text{ constant terms)}\\
    &=\binom{k}{l}{\binom{l}{m}}^2\Pi_m(s)&&\text{(by \prettyref{lem:pi_level_relation})}\\
    &={\binom{l}{m}}^2\frac{\binom{k}{l}}{\binom{k}{m}}\sum_{e\in E|e\supseteq s}\Pi_k(e).&&\text{(Definition of }\Pi_m)%
\end{align*}
\end{proof}

\subsection{Proof of \prettyref{fact:b_equals_n}}
\label{app:proof_of_b_equals_n}

\begin{proof}[Proof of \prettyref{fact:b_equals_n}]
Fix $m\leq l$.
Let $\mathsf W$ denote the transition matrix of the random walk on $B_{m,l}$.
Let $\mathsf A$ and $\mathsf D$ denote the (weighted) adjacency matrix and the degree matrix for $B_{m,l}$, respectively.
Then $\mathsf {W=D^{-1}A}$.
Given a $t\in X(m)\cup X(l)$ let $\mathsf e_{t}\in \R^{X(m)\cup X(l)}$ denote the function which is equal to 1 at $t$ and 0 otherwise.
Note that $\set{\mathsf e_{s}|s\in X(m)\cup X(l)}$ forms a basis for $\R^{X(m)\cup X(l)}$.
For any $s, t\in X(m)\cup X(l)$,
\begin{equation}
\label{eq:b_walk_matrix}
[\mathsf W\mathsf e_t](s)=\mathsf W(s,t)=[\mathsf D^{-1}\mathsf A](s,t)=\frac{w(s,t)}{\mathsf{deg}_{B_{m,l}}(s)}
\end{equation}
If $s\in X(m)$ and $t\in X(m)$ we have $w(s,t)=0$, therefore, $[\mathsf W\mathsf e_t](s)=0$.
While $[\mathsf N_{m,l} \mathsf e_t](s)=[\mathsf U_{l,m}\mathsf e_t](s)=0$.
Hence, if $s\in X(m)$ and $t\in X(m)$ we have $[\mathsf N_{m,l} \mathsf e_t](s)=[\mathsf W\mathsf e_t](s)=0$.
Similarly, if $s\in X(l)$ and $t\in X(l)$ we have $[\mathsf N_{m,l} \mathsf e_t](s)=[\mathsf W\mathsf e_t](s)=0$.

If $s\in X(m)$ and $t\in X(l)$ then 
\begin{align*}
    \mathsf{deg}_{B_{m,l}}(s)&=\binom{k}{l}\sum_{t'\in X(l)|t'\supseteq s}\Pi_l(t')\\
    &=\binom{k}{l}\binom{l}{m}\Pi_m(s),
\end{align*}
and $w(s,t)=\binom{k}{l}\Pi_l(t)$, therefore, by \prettyref{eq:b_walk_matrix} we have $[\mathsf W\mathsf e_t](s)=\frac{\Pi_l(t)}{\binom{l}{m}\Pi_m(s)}$,
while $[\mathsf N_{m,l} \mathsf e_t](s)=[\mathsf D_{m,l}\mathsf e_t](s)=\frac{\Pi_l(t)}{\binom{l}{m}\Pi_m(s)}$, i.e., $[\mathsf W\mathsf e_t](s)=[\mathsf N_{m,l} \mathsf e_t](s)$.

Finally, if $s\in X(l)$ and $t\in X(m)$ then 
\begin{align*}
    \mathsf{deg}_{B_{m,l}}(s)&=\binom{k}{l}\sum_{t'\in X(m)|t'\subseteq s}\Pi_l(s)\\
    &=\binom{k}{l}\binom{l}{m}\Pi_l(s),
\end{align*}
and $w(s,t)=\binom{k}{l}\Pi_l(s)$, therefore, by \prettyref{eq:b_walk_matrix} we have $[\mathsf W\mathsf e_t](s)=\frac{1}{\binom{l}{m}}$,
while $[\mathsf N_{m,l} \mathsf e_t](s)=[\mathsf D_{m,l}\mathsf e_t](s)=\frac{\Pi_l(t)}{\binom{l}{m}\Pi_m(s)}$, i.e., $[\mathsf W\mathsf e_t](s)=[\mathsf N_{m,l} \mathsf e_t](s)$.
\end{proof}

\subsection{Proof of \prettyref{fact:b2_equals_updown}}
\label{app:proof_of_b2_equals_updown}

\begin{proof}[Proof of \prettyref{fact:b2_equals_updown}]
Fix $m\leq l$.
Let $\mathsf W$ denote the transition matrix of the random walk on $B^{(2)}_{m,l}$.
Let $\mathsf A$ and $\mathsf D$ denote the (weighted) adjacency matrix and the degree matrix for $B^{(2)}_{m,l}$, respectively.
Then $\mathsf {W=D^{-1}A}$.
Given a $t\in X(m)$ let $\mathsf e_{t}\in \R^{X(m)}$ denote the function which is equal to 1 at $t$ and 0 otherwise.
Note that $\set{\mathsf e_{s}|s\in X(m)}$ forms a basis for $\R^{X(m)}$.
For any $s, t\in X(m)$,
\begin{equation}
\label{eq:b2_walk_matrix}
[\mathsf W\mathsf e_t](s)=\mathsf W(s,t)=[\mathsf D^{-1}\mathsf A](s,t)=\frac{w(s,t)}{\mathsf{deg}_{B^{(2)}_{m,l}}(s)}
\end{equation}
Now, 
\begin{align*}
    \mathsf{deg}_{B^{(2)}_{m,l}}(s)&=\binom{k}{l}\sum_{t'\in X(m)}\sum_{s'\in X(m)|s'\supseteq s\cup t'}\Pi_l(t')\\
    &=\binom{k}{l}\sum_{s'\in X(l)|s'\supseteq s}\sum_{t'\in X(m)|t'\subseteq s'}\Pi_l(t')&&\text{(Rearranging the sum)}\\
    &=\binom{k}{l}\sum_{s'\in X(l)|s'\supseteq s}\binom{l}{m}\Pi_l(t')&&\text{(Summing }\binom{l}{m}\text{ constant terms)}\\
    &=\binom{k}{l}{\binom{l}{m}}^2\Pi_m(s),&&\text{(by \prettyref{lem:pi_level_relation})}
\end{align*}
and 
\begin{align*}
w(s,t)=\binom{k}{l}\sum_{s'\supseteq s\cup t}\Pi_l(s')&=\binom{k}{l}\binom{l}{m}\Pi_m(s)\sum_{s'\supseteq s}\frac{\Pi_l(s')\mathbb{1}_{s'\supseteq t}}{\binom{l}{m}\Pi_m(s)}\\
&=\binom{k}{l}\binom{l}{m}\Pi_m(s)\E_{\substack{s'\sim \Pi_l\\s'\supseteq s}}\mathbb{1}_{s'\supseteq t},
\end{align*}
therefore, by \prettyref{eq:b2_walk_matrix} we have
\begin{align*}
[\mathsf W\mathsf e_t](s)=\E_{\substack{s'\sim \Pi_l\\s'\supseteq s}}\frac{\mathbb{1}_{s'\supseteq t}}{\binom{l}{m}},
\end{align*}
while 
\begin{align*}
[\mathsf D_{m,l}\mathsf U_{l,m} \mathsf e_t](s)&=\E_{\substack{s'\sim \Pi_l\\ s'\supseteq s}}[\mathsf U_{l,m}\mathsf e_t](s')&&\text{(Defintion of }\mathsf D_{m,l})\\
&=\E_{\substack{s'\sim \Pi_l\\ s'\supseteq s}}\E_{\substack{s''\in X(m)\\s''\subseteq s'}}\mathsf e_t(s'')&&\text{(Definition of }\mathsf U_{l,m})\\
&=\E_{\substack{s'\sim \Pi_l\\s'\supseteq s}}\frac{\mathbb{1}_{s'\supseteq t}}{\binom{l}{m}},&&(\mathsf e_t(s'')=1\text{ iff } s''=t) 
\end{align*}
i.e., $[\mathsf W\mathsf e_t](s)=[\mathsf D_{m,l}\mathsf U_{l,m} \mathsf e_t](s)$. Hence, $\mathsf W=\mathsf D_{m,l}\mathsf U_{l,m}$.
\end{proof}

\subsection{Proof of \prettyref{fact:svalue_ineq}}
\label{app:proof_of_svd_product}

\begin{proof}[Proof of \prettyref{fact:svalue_ineq}.]\footnote{The proof presented here is adapted from \url{https://math.stackexchange.com/questions/3892626/singular-values-of-product-of-matrices/} \cite{Gro20}.}
Using the Courant-Fischer theorem for singular values\footnote{This is a simple corollary of the usual variant of Courant-Fisher theorem which follows from \prettyref{fact:svalue_eigenvalue}, i.e., $\sigma_i(\mathsf{A})=\sqrt{\lambda_i(\mathsf{A}^\dagger \mathsf{A})}$.} we have that,
\begin{align*}
    \sigma_i(\mathsf{A}\mathsf{B})&= \sqrt{\lambda_i((\mathsf{A}\mathsf{B})^\dagger \mathsf{A}\mathsf{B})} =  \max_{\substack{S \subseteq \mathbb{R}^p\\
    \mathsf{dim}(S) = i}}\min_{\substack{\mathbf{x} \in S\\\norm{{x}}=1}}\norm{\mathsf{A}\mathsf{B}{x}}\\
                &\leq \max_{\substack{S \subseteq \mathbb{R}^p\\\mathsf{dim}(S) = i}}\min_{\substack{{x} \in S\\\norm{{x}}=1}}\norm{\mathsf{A}}\norm{\mathsf{B}{x}}\\
                &\leq \sigma_1(\mathsf{A})\max_{\substack{S \subseteq \mathbb{R}^p\\\mathsf{dim}(S) = i}}\min_{\substack{{x} \in S\\\norm{{x}}=1}}\norm{\mathsf{B}{x}}\\
                &= \sigma_1(\mathsf{A})\sigma_i(\mathsf{B}).
\end{align*}
Now, $\sigma_i(\mathsf{A}\mathsf{B})=\sigma_i(\mathsf{B}^\dagger \mathsf{A}^\dagger)$ and by the inequality shown above $\sigma_i(\mathsf{B}^\dagger \mathsf{A}^\dagger)\leq \sigma_1(\mathsf{B}^\dagger)\sigma_i(\mathsf{A}^\dagger)=\sigma_i(\mathsf{A})\sigma_1(\mathsf{B})$.
\end{proof}

\end{document}